\documentclass[a4paper,USenglish,cleveref,nameinlink,autoref,thm-restate]{lipics-v2021}

\usepackage{caption}
\usepackage{subcaption}
\usepackage[dvipsnames]{xcolor}
\usepackage[backgroundcolor=white,linecolor=black,bordercolor=black,textcolor=black,textsize=small]{todonotes}
\usepackage{mathtools}
\usepackage{xspace}
\usepackage{csquotes}
\usepackage[export]{adjustbox}
\usepackage{apptools}

\usepackage[appendix=append,bibliography=common]{apxproof}

\makeatletter
\newcommand{\whenappendix}[1]{%
  \ifthenelse{\equal{\axp@appendix}{append}}%
    {#1}{}%
}

\ifthenelse{\equal{\axp@appendix}{append}}{\newcommand{\restateref}[1]{\IfAppendix{\textsfbfrm{(\hyperref[#1]{$\star$})}}{\textsfbfrm{(\hyperref[#1*]{$\star$})}}}}{\newcommand{\restateref}[1]{\relax}}

\makeatother

\newcommand{\textsfbf}[1]{{\textsf{\textbf{\normalsize{#1}}}}}
\newcommand{\textsfbfrm}[1]{{\textsf{\textbf{{\em\normalsize{#1}}}}}}

\newcommand{\cont}{\mathrm{cont}}
\newcommand{\Econt}{\ensuremath{E_{\mathrm{cont}}}\xspace}

\newcommand{\ECcont}{\ensuremath{E_{\mathrm{cont}}^+}\xspace}

\newcommand{\rev}{\mathrm{rev}}
\newcommand{\mirr}{\mathrm{mirr}}
\newcommand{\I}{\ensuremath{\mathcal{I}}\xspace}
\newcommand{\rep}{\mathrm{rep}}

\newcommand{\Arcs}{\ensuremath{\vec E}}
\newcommand{\arcs}{\mathcal A}

\newcommand{\augH}{{H^+}}
\newcommand{\precaugH}{\prec_H^+}
\newcommand{\succaugH}{\succ_H^+}
\newcommand{\precoverlineaugH}{\prec_{\overline{H}}^+}

\nolinenumbers

\newcommand{\arxiv}[1]{#1} 
\newcommand{\lipics}[1]{} 

\let\emph\relax\DeclareTextFontCommand{\emph}{\color{NavyBlue}\em}

\title{Towards the Recognition of~Oriented~Interval~Graphs}

\author{Lukas P.\ Bachmann}{Universität Passau, Passau, Germany}{bachmanp@fim.uni-passau.de}{https://orcid.org/0009-0003-3749-6265}{}
\author{Jiří Fiala}{Charles University, Prague, Czech Republic}{fiala@kam.mff.cuni.cz}{https://orcid.org/0000-0002-8108-567X}{Supported by grant no. 25-16847S of the Czech Science Foundation (GAČR).}
\author{Miriam Münch}{Universität Passau, Passau, Germany}{miriam.muench@uni-passau.de}{https://orcid.org/0000-0002-6997-8774}{}
\author{Ignaz Rutter}{Universität Passau, Passau, Germany}{rutter@fim.uni-passau.de}{https://orcid.org/0000-0002-3794-4406}{}
\author{Peter Stumpf}{Charles University, Prague, Czech Republic}{stumpf@kam.mff.cuni.cz}{https://orcid.org/0000-0003-0531-9769}{}
\author{Alexander Wolff}{Universität Würzburg, Würzburg, Germany \and \url{https://www.informatik.uni-wuerzburg.de/en/algo/team/wolff-alexander}}{}{https://orcid.org/0000-0001-5872-718X}{}

\authorrunning{L.~P.~Bachmann, J.~Fiala, M.~Münch, I.~Rutter, P.~Stumpf,
  and A.~Wolff}

\Copyright{Lukas P.\ Bachmann, Jiří Fiala, Miriam Münch, Ignaz Rutter,
  Peter Stumpf, and Alexander Wolff}

\ccsdesc[500]{Mathematics of computing ~ Graph theory}
\ccsdesc[500]{Theory of computation ~ Computational geometry}

\keywords{Interval graphs, mixed graphs, oriented interval graphs, recognition}

\arxiv{\hideLIPIcs}

\lipics{
\relatedversion{}
\relatedversiondetails{Full Version}{https://arxiv.org/abs/xxxx.yyyyy}
}

\EventEditors{Philip Bille, Seth Pettie, and Sabine Storandt}
\EventNoEds{2}
\EventLongTitle{34th European Symposium on Algorithms (ESA 2026)}
\EventShortTitle{ESA 2026}
\EventAcronym{ESA}
\EventYear{2026}
\EventDate{August 31 to September 4, 2026}
\EventLocation{L'Aquila, Italy}
\EventLogo{}
\SeriesVolume{388}
\ArticleNo{164}

\begin{document}

\maketitle

\begin{abstract}
  \emph{Oriented interval graphs}, a recent generalization of interval graphs introduced by Gutowski et al.~[GD 2022], are intersection graphs of intervals, each of which is oriented either left or right.
  Such a representation defines a \emph{mixed} intersection graph: overlapping intervals with the same orientation define a (directed) \emph{arc}; nested intervals (irrespective of the orientations of the intervals) and overlapping intervals of opposite orientations define an (undirected) \emph{edge}.
  An oriented interval representation of a mixed graph $G$ can be described combinatorially by the combination of (i)~an \emph{orientation} $\varphi \colon V(G) \to \{-1,1\}$ of all intervals, (ii)~a \emph{clique ordering} $\sigma$, and (iii)~a set~$\Econt \subseteq E(G)$ of \emph{containment edges}, which are represented by nested intervals.  The non-trivial dependencies between these three ingredients make the recognition of oriented interval graphs a challenging problem.

  In this paper, we take steps towards a general recognition algorithm by studying how orientation, clique ordering, and containment edges influence and restrict each other.  We characterize the orientations that are consistent with a given set of containment edges as well as the clique orderings that are consistent with a given orientation.  Based on these characterizations, we give linear-time algorithms for two constrained versions of the recognition problem where, in addition to the mixed input graph~$G$, either the set of containment edges~$\Econt$ or the orientation~$\varphi$ is prescribed.  This improves a quadratic-time algorithm of Gutowski et al.\ for the case that all vertices have the same orientation; an assumption that determines both the orientation and the containment edges.
  In particular, this also solves the recognition problem for oriented proper (or unit) interval graphs.
\end{abstract}

\renewcommand{\topfraction}{.9}
\renewcommand{\bottomfraction}{.9}
\renewcommand{\textfraction}{.1}

\section{Introduction}
\label{sec:intro}

A \emph{mixed graph} is a graph that may contain both (undirected) edges and (directed) \emph{arcs}.
Mixed graphs were introduced by Sotskov and Tanaev~\cite{sotskov1976chromatic} and reintroduced by Hansen et al.~\cite{HansenKW97} and have been studied in the context of scheduling problems~\cite{sotskov2020mixed}, (quasi-)upward planar drawings~\cite{BinucciD16,BinucciDP14,FratiKPTW14} and extensions of partial orientations~\cite{Bang-JensenHZ18}.

An~\emph{intersection representation} of a graph~$G$ is a
mapping~$\rho \colon V(G) \to \mathcal C$ from the vertex set~$V(G)$ of~$G$, 
to a class~$\mathcal{C}$ of geometric objects such that $\{u,v\}$ is an
edge of~$G$ if and only if~$\rho(u) \cap \rho(v) \neq \emptyset$.
Every class~$\mathcal{C}$ of geometric objects gives rise to a class
of graphs that admit an intersection representation with objects
in~$\mathcal C$.  Prominent examples include interval graphs
(intersection graphs of intervals on the real line), circular-arc
graphs (intersection graphs of arcs on a circle), circle graphs
(intersection graphs of chords of a circle), and permutation graphs
(intersection graphs of line segments connecting two parallel lines).

Interval graphs form a well-understood class in structural and
algorithmic graph theory, with applications spanning from scheduling
problems to genome analysis~\cite{golumbic2004algorithmic}.  Using
so-called PQ-trees, one can test in linear time whether a graph~$G$ is
an interval graph and, if yes, compute an interval representation
of~$G$~\cite{BoothL1976}.  Some problems, such as coloring or maximum
clique, that are NP-hard in general can be solved in linear time for
interval graphs~\cite{GareyJMP80,Hsu85}.

Gutowski, Mittelstädt, Rutter, Spoerhase, Wolff, and Zink~\cite{gmrswz-cmdig-GD22} introduced a natural extension of interval representations where each interval is assigned an orientation, namely left or right (assuming the intervals are placed horizontally on the real line).
This \emph{oriented interval representation} yields a mixed intersection graph: overlapping intervals that have the same orientation define an arc pointing from the earlier to the later interval according to their orientation; nested intervals (regardless of their respective orientations) and overlapping intervals of opposite orientations define an edge; see~\cref{fig:example-1}.

    \begin{figure}[b]
        \centering
        \begin{subcaptiongroup}
        \textsfbf{(a)} \includegraphics[page=1,valign=t]{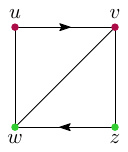}
        \phantomcaption\label{fig:example1-a} \hfill
        \textsfbf{(b)} \includegraphics[page=2,valign=t]{fig/intro-example}
        \phantomcaption\label{fig:example1-b}\hfill
        \textsfbf{(c)} \includegraphics[page=3,valign=t]{fig/intro-example}
        \phantomcaption\label{fig:example1-c}
        \end{subcaptiongroup}
        \caption{(a) A mixed graph $G$.
        (b,c) $\varphi$-oriented interval reps.\ of~$G$, representing~$\{v,w\}$ differently.}
        \label{fig:example-1}
    \end{figure}
A natural question is whether these \emph{oriented interval graphs} can be
recognized efficiently.  Even though directed and mixed graphs arise naturally in many applications, they have mostly been ignored in the study of intersection representations.
Continuing the study of Gutowski et al., we show that even for very simple geometric objects, namely oriented intervals, the recognition problem for oriented intersection graphs poses significant challenges.

Classical interval graphs can be combinatorially described by a \emph{clique ordering}, that is, a graph is an interval graph if and only if its maximal cliques can be ordered such that for each vertex $v$, all cliques that contain $v$ appear consecutively~\cite{FulkersonGross1965}.
For oriented interval graphs, however, the situation is more involved, as an oriented interval representation of a mixed graph $G$ can be described combinatorially by the combination of (i)~an \emph{orientation} $\varphi \colon V(G) \to \{-1,1\}$ of all intervals, (ii)~a clique ordering $\sigma$ of the underlying undirected graph, and (iii)~a subset~$\Econt \subseteq E(G)$ of the edges that are represented by~\emph{containment}, that is, nested intervals.  These three elements are, however, not independent but influence and restrict each other in a non-trivial fashion.
For example, if two vertices~$u$ and~$v$ have the same orientation, say to the right, but they are connected by an undirected edge, then this edge must be represented by containment;
known containments may restrict the choices of the clique ordering, and clique orderings may decide certain containments and overlaps, which influences the orientations.

For a mixed graph $G$, the recognition problem boils down to finding a triplet~$(\varphi, \sigma, \Econt)$ whose elements are \emph{consistent} with each other in the sense that together they describe an oriented interval representation of $G$. 
So far, the recognition problem has only been solved in the case where all intervals are oriented to the right, which, in addition to the orientation~$\varphi$, uniquely determines the containment edges~\Econt.
Gutowkski et al.~\cite{gmrswz-cmdig-GD22} showed that, in this setting, a clique ordering that is consistent with the orientation (and the containment edges) can be computed in time quadratic in the number of graph vertices.

\subparagraph{Our Contribution.}

In this paper, we take steps towards the general recognition problem for oriented interval graphs by studying the interconnection between $\sigma$, $\varphi$ and~\Econt.
We call a triplet~$(\varphi, \sigma,\Econt)$ \emph{consistent} if there is a corresponding oriented interval representation, and we call any pair of two such elements \emph{consistent} if there is a third element that results in a consistent triplet.
In~\Cref{sec:given-econt}, we give a combinatorial description of all orientations $\varphi$ such that, for a given set~\Econt of containment edges, the pair $(\varphi, \Econt)$ is consistent.  
This yields an algorithm for computing a clique ordering $\sigma$ such that, for a given consistent pair $(\varphi, \Econt)$, the triplet $(\varphi, \sigma, \Econt)$ is consistent.
We prove the following.

\begin{restatable}{theorem}{thmrecognizeecont}
   \label{thm:recognize-econt}
    Given a mixed graph $G$ and $\Econt \subseteq E(G)$, there is a linear-time algorithm that decides whether $G$ admits an oriented interval representation where precisely the edges in~$\Econt$ are represented by containment.
\end{restatable}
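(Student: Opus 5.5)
Since the paper has not yet given its characterizations, I would build the argument in three stages. First, derive all orientations~$\varphi$ for which $(\varphi,\Econt)$ is consistent. Second, for each such $\varphi$, find a clique ordering~$\sigma$ that respects it. Third, show that one suitable $\varphi$ can be chosen greedily, so no search over orientations is needed.

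\textbf{Stage one: constraints on~$\varphi$.} Two adjacent vertices $u,v$ whose edge is not in~$\Econt$ must be represented as overlapping intervals. If $uv$ is an arc, then $\varphi(u)=\varphi(v)$. If $uv$ is an undirected edge outside $\Econt$, then $\varphi(u)\neq\varphi(v)$. Edges in~$\Econt$ impose no condition on orientations. These are equality and inequality constraints on $\pm1$ variables, so they form a 2-colouring problem on an auxiliary signed graph $S$. The vertices of $S$ are $V(G)$, and its edges are the non-containment adjacencies. A consistent $\varphi$ exists only if $S$ has no odd-signed cycle. This is checkable in linear time by BFS, which yields one candidate $\varphi_C$ per component $C$, unique up to a global flip. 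Flipping all intervals of the whole graph by mirroring preserves consistency, but flipping a single component is not obviously free. I would therefore prove a lemma: the orientation of each $S$-component can be flipped independently without destroying realizability. The argument would be that a component's overlap pattern relative to the rest consists only of containments. Containments can be realized by locally mirroring the sub-representation spanned by that component inside the intervals containing it, so the characterization of consistent $\varphi$ is exactly ``valid 2-colourings of $S$''.

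\textbf{Stage two: the clique ordering.} With $\varphi$ and $\Econt$ fixed, I would translate everything into constraints on~$\sigma$. Each vertex's cliques must be consecutive, as in the Fulkerson--Gross criterion. For an overlapping pair $u,v$ with an arc $u\to v$ and both intervals oriented right, $u$ must start strictly before $v$ and end strictly before $v$. In clique terms, $u$ must own a clique not containing $v$ on the left, and $v$ must own such a clique on the right. For a containment edge $uv$, one of the two intervals, say $u$, must have its cliques as a subset of $v$'s cliques. The containing interval is then determined by the clique sets (if neither set contains the other, no containment is possible and we reject). We also need strict nesting where needed, because intervals sharing an endpoint must be avoided. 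Opposite-orientation overlaps require proper overlap in some direction, which is free. I would then encode all of this in a PQ-tree for the maximal cliques, computed in linear time via Booth--Lueker, together with ordering constraints of the form ``$u$'s private part precedes $v$'s''. This is a partially ordered PQ-tree problem, solvable in linear time by known results on ordered PQ-trees or ordered constraint satisfaction over PQ-trees. For same-orientation arcs, the direction of each arc fixes the relative left/right order of the cliques private to $u$ and to $v$, which are the only ones distinguishing them. This becomes a precedence between children of a node in the PQ-tree.

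\textbf{Main obstacle.} The hardest part will be the interaction between containment and global orientation flips. Stage two treats $\varphi$ as fixed, but the component flips from stage one reverse the precedences contributed by that component. We need to show that these reversals are realizable exactly when the structure inside the PQ-tree mirrors, so that testing a single $\varphi$ suffices. I would try to prove that each $S$-component spans a subtree or a contiguous block whose constraints are internal, so a flip corresponds to reversing that block. If that fails, I would instead encode each component's flip as a Boolean variable attached to the precedence constraints. I would then solve the resulting problem as 2-SAT over the PQ-tree's Q-node orientations and P-node orders, which is still linear. Linear running time overall follows because the number of maximal cliques and their total size are $O(|V|+|E|)$ for interval graphs.
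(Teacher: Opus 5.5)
Your Stage one matches the paper's route: you 2-colour the components of $G-\Econt$ and justify independent flips by mirroring the spanning interval of a component (\cref{lem:rev-rep,lem:rev-orient,lem:linear-time-phi}). One detail to add: that mirror also reverses intervals of other components nested inside it, and these must be mirrored back individually. The genuine gap is Stage two. A clique order does not determine which edges are nestings or in which direction two intervals overlap. Vertices whose clique blocks begin (or end) at the same maximal clique have their endpoints in the same gap between consecutive clique points, and their relative order there is a further choice that your translation ignores. As a result your rules fail in both directions:
\begin{itemize}
\item A single arc $(u,v)$ with $\varphi(u)=\varphi(v)=1$ and $\Econt=\emptyset$ has only one maximal clique. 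Neither endpoint owns a private clique, so you reject, although $[0,2],[1,3]$ realizes it.
\item Opposite-orientation overlaps are not ``free''. Take the claw $K_{1,3}$ with undirected edges and $\Econt=\emptyset$. Your constraints are vacuous (a P-node, any order), so you accept. But whichever leaf is placed in the middle has its clique block strictly inside that of the centre, so it is forced to be nested and the instance is infeasible.
\item For a containment edge between two vertices lying in exactly the same maximal cliques, the containing interval is not determined by the clique sets.
\end{itemize}

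Repairing these rules locally does not help, because the within-gap choices are globally coupled. Take $K_5$ on $\{1,\dots,5\}$ with $\Econt$ consisting of $\{1,2\},\{2,3\},\{3,4\},\{4,5\},\{5,1\}$ and arcs $(1,3),(1,4),(2,4),(2,5),(3,5)$. The arcs connect $G-\Econt$, so all vertices get the same orientation. Moreover $\prec_\varphi$ is acyclic and there is a single clique, so every test on $\sigma$ passes. Yet no representation exists, because the containment graph $C_5$ is not a permutation graph (\cref{lem:perm-graph-to}).

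The missing idea is to treat $\sigma$ and the endpoint orders inside gaps as one object. The paper adds a clique-point vertex $c_i$ nested in each maximal clique. A representation then corresponds to two linear orders $(<_L,<_R)$ forming a matching representation of the containment graph $H^+$ of $G^+$, subject to $u<_R v$ for $u\prec_\varphi v$ and $c_i<_R v$ for $v\in C_i$ (\cref{thm:orientable}). This is tested in linear time on the modular decomposition of $H^+$: topological sorting at complete and empty nodes, and four candidate representations at prime nodes.

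With this in place, your ``main obstacle'' is moot. By your own Stage-one lemma, realizability is invariant under component flips, so testing a single 2-colouring suffices and no 2-SAT layer is needed.
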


A \emph{proper} interval graph is an interval graph that admits a representation where no interval properly contains another interval.
In other words, a proper interval graph is an interval graph with $\Econt=\emptyset$.  
A \emph{unit} interval graph is an interval graph that admits a representation where all intervals have the same length.
The classes of proper interval graphs and the class of unit interval graphs are the same~\cite{Roberts1969}.
Using the algorithm behind \cref{thm:recognize-econt}, we can recognize oriented proper (or unit) interval graphs.

\begin{corollary}
    Given a mixed graph $G$, we can decide in linear time whether $G$ admits an oriented proper (or unit) interval representation.
\end{corollary}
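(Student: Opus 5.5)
The corollary follows from \cref{thm:recognize-econt} by specializing the prescribed containment set. A mixed graph $G$ admits an oriented proper interval representation exactly when it admits an oriented interval representation in which no interval properly contains another. In the combinatorial description $(\varphi,\sigma,\Econt)$, this means that no edge is represented by containment, so $\Econt=\emptyset$. Conversely, if in some oriented interval representation precisely the edges of $\Econt=\emptyset$ are represented by containment, then no two adjacent intervals are nested. Non-adjacent intervals are disjoint and hence cannot be nested either, so the representation is proper. Hence we run the linear-time algorithm of \cref{thm:recognize-econt} on input $(G,\emptyset)$ and return its answer.

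One technicality needs checking. Nested intervals with coinciding endpoints could be counted either as containment or as overlap. We would first perturb the endpoints so that all $2|V(G)|$ endpoints are distinct. This perturbation preserves every intersection, keeps every strict overlap strict, and keeps the orientations unchanged, so all arcs and edges are preserved. Any pair of identical intervals becomes a pair of overlapping ones under it, which fits the case $\Econt=\emptyset$. I expect this to be compatible with the paper's definitions, since "precisely the edges in $\Econt$ are represented by containment" presupposes such a general-position convention.

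For the unit case, we have to show that oriented proper and oriented unit interval graphs coincide; this is the only genuine step beyond invoking \cref{thm:recognize-econt}. Every unit representation is proper. Conversely, given an oriented proper representation in general position, we would apply the classical argument of Roberts~\cite{Roberts1969}. In a proper representation the left endpoints and the right endpoints occur in the same order. One can therefore redraw the intervals as unit intervals while preserving the entire left-to-right sequence of the $2n$ endpoints; Roberts' argument, or a simple sweep that places endpoints greedily, provides this.

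Preserving the endpoint sequence preserves, for every pair $u,v$, whether the intervals are disjoint, overlapping with $u$ first, or overlapping with $v$ first. We keep $\varphi$ unchanged. Each arc is determined by an overlap of equally oriented intervals together with which of the two starts earlier, so all arcs and edges survive, and the mixed graph is unchanged. I expect this to be the main, though mild, obstacle: one must make sure that the unit redrawing preserves the full endpoint order, and not merely the underlying undirected intersection graph. With this established, both recognition problems reduce to a single call of the algorithm from \cref{thm:recognize-econt} with $\Econt=\emptyset$, which runs in linear time.
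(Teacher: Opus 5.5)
Your proof is correct and follows the paper's route: specialize \cref{thm:recognize-econt} to $\Econt=\emptyset$, and invoke Roberts' theorem that proper and unit interval graphs coincide for the unit case. Your extra point, that the unit redrawing must preserve the whole endpoint sequence so that arc directions survive, is a detail the paper leaves implicit and your greedy sweep does establish it; the perturbation paragraph is unnecessary, since the paper assumes throughout that all interval endpoints are distinct.
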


In~\Cref{sec:mpq-short}, we give a combinatorial description of all clique orderings $\sigma$ such that, for a given orientation $\varphi$, the pair $(\varphi, \sigma)$ is consistent. 
This allows us to design an algorithm for computing a set~\Econt of containment edges such that, for a given consistent pair $(\varphi, \sigma)$, the triplet~$(\varphi, \sigma, \Econt)$ is consistent.
Generalizing and speeding up the algorithm of Gutowski et al.~\cite{gmrswz-cmdig-GD22}, we prove the following.

\begin{restatable}{theorem}{thmgivenphigetrep}
   \label{thm:given-phi-get-rep}
   Given a mixed graph~$G$ with an orientation~$\varphi$, there is a linear-time algorithm that decides whether $G$ admits an oriented interval representation with orientation~$\varphi$.
\end{restatable}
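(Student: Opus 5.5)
With $\varphi$ fixed, the plan is to reduce the problem to finding a clique ordering $\sigma$ of the underlying undirected graph that satisfies extra order constraints coming from the arcs. This ordering is then turned into a representation.

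\textbf{Step 1 (local conditions from $\varphi$).} First check conditions that $\varphi$ forces on each adjacency. An arc $(u,v)$ requires $\varphi(u)=\varphi(v)$, since overlapping intervals of opposite orientation give an edge. An edge $\{u,v\}$ with $\varphi(u)=\varphi(v)$ must be a containment edge. An edge with $\varphi(u)\neq\varphi(v)$ may be realized by containment or by overlap. These checks take linear time.

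\textbf{Step 2 (ordering constraints).} Next, describe the clique orderings $\sigma$ for which $(\varphi,\sigma)$ is consistent. Consider an arc $(u,v)$ with both intervals oriented to the right. Then $u$ overlaps $v$ from the left: $u$'s first clique is strictly before $v$'s first clique, and $u$'s last clique is strictly before $v$'s last clique. Intervals oriented to the left give the mirrored condition. Now consider a same-orientation edge $\{u,v\}$. One interval must contain the other, so $u$'s clique range contains $v$'s range or vice versa. Often the neighbourhoods decide which one, namely when $N[v]\subseteq N[u]$ holds strictly. Otherwise we get a disjunctive constraint, and we must show it can be resolved canonically.

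I would prove a characterization of the following form. $(\varphi,\sigma)$ is consistent if and only if, in $\sigma$, every arc has its endpoints' clique intervals strictly overlapping in the prescribed direction, and every same-orientation edge has nested clique intervals. Here a shared endpoint clique may be broken in either direction by choosing interval endpoints inside a clique. Opposite-orientation edges then impose no constraint, because they can be drawn either nested or overlapping. Sufficiency is shown by an explicit construction. Each clique gets a block on the line, and within the blocks we order the left and right endpoints to respect the arcs and nestings. One must check that this does not create unwanted arcs: two same-orientation intervals that overlap must be joined by an arc, so their ranges must be nested exactly when the graph has an edge there.

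\textbf{Step 3 (algorithm).} Compute a PQ-tree (or MPQ-tree) of the consecutive-clique constraints of the underlying undirected graph in linear time. Then impose the ordering constraints of Step 2. Each arc $(u,v)$ forces the relative order of the leftmost clique of $u$ and the leftmost clique of $v$. In the PQ-tree this becomes a constraint on the order of children at the lowest common ancestor node. At a Q-node it fixes or checks the flip. At a P-node it yields a precedence between two children. I expect this to become a set of partial-order constraints at P-nodes, handled by topological sorting, plus 2-SAT-like parity constraints on Q-node flips, handled by propagating through a union-find or BFS structure. The total size is linear in $|V|+|E|$ if each constraint is charged to one arc or edge and LCAs are computed in linear time offline.

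\textbf{Main obstacle.} The hardest part is making the containment and overlap requirements expressible as local constraints at tree nodes, and proving that they are independent across nodes. Containment with equal neighbourhoods in particular gives "either direction" choices. I would first handle twins, that is, vertices with the same sets of cliques, by ordering them freely within a clique. After that, I would argue that every remaining constraint concerns distinct endpoints, and therefore a single child order at one node.
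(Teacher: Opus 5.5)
Your proposal has a genuine gap in Step~2. The characterization you propose is not sufficient, and the part you treat as routine ("within the blocks we order the left and right endpoints", "order twins freely within a clique") is where the real difficulty lies.

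Your conditions are checked pair by pair, with each tie at a shared endpoint clique "broken in either direction". But all ties must be broken simultaneously by one order of left endpoints and one order of right endpoints. Take $a,b,c$ pairwise adjacent and right-oriented, with arcs $(a,b)$, $(b,c)$ and the edge $\{a,c\}$.
\begin{itemize}
\item There is a single maximal clique, so $\sigma$ is forced.
\item Both arcs pass your test after tie-breaking.
\item The same-orientation edge $\{a,c\}$ has identical, hence nested, clique ranges.
\end{itemize}
Yet the arcs force $l_a<l_b<l_c$ and $r_a<r_b<r_c$. So $I_a$ and $I_c$ overlap with equal orientation and represent an arc, not an edge.

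In general, for each orientation class you need two linear orders of the endpoints such that:
\begin{itemize}
\item the same-orientation edges are exactly the pairs ordered oppositely;
\item the arcs are pairs ordered the same way, in the prescribed direction;
\item the precedences dictated by $\sigma$, and by non-twins sharing one endpoint clique, are respected.
\end{itemize}
This is a constrained two-dimensional (permutation-graph) realization problem. The disjunctive containment choices must be made jointly, not canonically.

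The paper handles it in three steps. It splits the graph into its left- and right-oriented parts (\cref{lem:L-R-split}); your correct observation that opposite-orientation edges impose no constraint is exactly what makes this split valid. For a uniformly oriented part with given $\sigma$, it characterizes realizability by transitive orientations of the containment graph $H^+$ of the clique-point augmentation $G^+$ and of its complement (\cref{thm:orientable}, with $(c_i,c_{i+1})$ added to $\prec_\varphi$). It then tests this in linear time as a constrained matching representation over the modular decomposition (\cref{lem:match-constr}). Your plan has no counterpart to this step.

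Two further points.

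First, you still need, and do not supply, the argument that any clique order satisfying the node-local constraints is as good as any other. You flag this as the main obstacle but give no argument for it. The paper proves it (\cref{lem:consistent-order}) by first normalizing a representation, so that adjacencies between reorderable subtrees and the vertices above them are realized by containment. It then swaps or flips children using the mirroring lemmas \cref{lem:rev-rep,lem:rev-orient}.

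Second, an arc does not constrain only a lowest common ancestor. Suppose $v$ is assigned to a P-node $x$ and $u$ lies deep inside the subtree of a child $y_i$. Then an arc between them forces $u$'s first (or last) clique to be the first (or last) leaf of $T_x$. This constrains every node on the path from $x$ down to $u$'s node. That is why the paper lets the upper endpoint range over $W_x\cup A_x^T$ when defining constraining children and constrained Q-nodes. The resulting constraints are unary per node: a first or last child at a P-node, and a fixed flip at a Q-node. So no precedence graphs or parity propagation are needed, but the constraints are spread along paths rather than located at one node.
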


\whenappendix{Statements whose proofs are in the appendix are marked with a clickable~($\star$).}


\section{Preliminaries}
\label{sec:prelims}


\subparagraph{Interval Graphs and Interval Representations.}
We consider sets of intervals on the real line in non-degenerate general position, i.e., all their endpoints are distinct.
Two intervals~$I$ and~$I'$ \emph{overlap} if their endpoints alternate, whereas $I$ \emph{contains} $I'$ if both endpoints of $I'$ lie between the two endpoints of~$I$.  
Given a set~\I of intervals, its \emph{intersection graph} is the graph that has \I as its vertex set and two different vertices $I$ and $I'$ are adjacent if and only if they have a non-empty intersection.
An undirected graph $G$ is called an \emph{interval graph} if~$G$ is isomorphic to an intersection graph of a set~\I of intervals. 
In this case, we call \I an \emph{interval representation} of~$G$, and, for each vertex~$u$ of~$G$, we let the interval in~\I that corresponds to $u$ be denoted by~$I_u$.
If~$I_u \cap I_v \ne \emptyset$, then we say that they \emph{realize} the corresponding edge of~$G$.  If~$I_u$ contains~$I_v$, then we say that the edge~$\{u,v\}$ is a \emph{containment edge}.

 
For $W\subseteq V(G)$, we let $G[W]$ denote the subgraph of~$G$ induced by the vertices of~$W$. 
Its induced representation is denoted by $\I[W]$.
For a graph~$G$, let $\mathcal C(G)$ be the set of maximal cliques of $G$.
It is well-known that a graph $G$ is an interval graph if and only if~$\mathcal C(G)$ admits a linear ordering in which, for every vertex~$u$ of~$G$, the set $\{C\in \mathcal C(G) \mid u \in C\}$ is consecutive~\cite{FulkersonGross1965}.
Such an ordering is called a \emph{clique order}; see \cref{fig:example-b}.

\begin{figure}
  \centering
  \begin{subcaptiongroup}
    \textsfbf{(a)} \includegraphics[page=5,valign=t]{fig/example}
    \phantomcaption\label{fig:example-a}\hfil
    \textsfbf{(b)} \includegraphics[page=16,valign=t]{fig/example} 
    \phantomcaption\label{fig:example-b}
  \end{subcaptiongroup}
  \caption{(a) A graph $G$ with $\mathcal C(G) = \{C_1, C_2\}$. (b)~An
    interval rep.~$\mathcal{I}$ of~$G$ and its clique order.}
  \label{fig:example}
\end{figure}

\subparagraph{Oriented Interval Representations.}
A graph containing directed arcs and undirected edges is a \emph{mixed graph}. Given a mixed graph~$G$, let $\Arcs(G)$ denote the set of arcs of~$G$, and let~$E(G)$ denote the set of edges of~$G$. For a mixed graph~$G$, we let~$U(G)$ denote the \emph{underlying undirected graph} that contains an edge~$\{u,v\}$ whenever~$G$ contains an edge or an arc with endpoints~$u$ and~$v$.  An \emph{orientation} of $G$ is a mapping~$\varphi \colon V(G) \to \{-1,1\}$, which, in the context of an interval representation~$\mathcal I = \{I_v \mid v \in V(G)\}$ of $G$, we interpret as an orientation of the respective intervals, where an interval $I_v \in \mathcal I$ is \emph{right-oriented} if~$\varphi(v) = +1$ and it is \emph{left-oriented} if~$\varphi(v) = -1$.
An \emph{interval representation} of a mixed graph $G$ is an interval representation of its underlying undirected graph~$U(G)$.  By contrast, an \emph{oriented} interval representation takes the directions of the arcs of $G$ into account as follows; see~\Cref{fig:orientation}.

\begin{definition}\label{def:oriented-interval-representation}
    Let~$G$ be a mixed graph.
    An \emph{oriented interval representation} of $G$ is a pair~$(\mathcal I, \varphi)$ consisting of an interval representation~$\mathcal I
    $ of $G$ and an orientation~$\varphi$ of~$G$ such that for each pair of intervals $I_u = [l_u,r_u]$ and $I_v = [l_v,r_v]$ from~\I with $l_u < l_v$:
    \begin{itemize}
        \item $(u,v) \in \Arcs(G)$ if and only if $I_u$ and $I_v$ overlap, and
        $\varphi(I_u)=\varphi(I_v)=+1$.
        \item $(v,u) \in \Arcs(G)$ if and only if $I_u$ and $I_v$ overlap, and
        $\varphi(I_u)=\varphi(I_v)=-1$.
        \item $\{u,v\} \in E(G)$ if and only if
        \begin{itemize}
            \item $I_u$ and~$I_v$ overlap and $\varphi(I_u) \ne \varphi(I_v)$ or
            \item $I_u$ contains $I_v$, regardless of the orientations of the intervals.
        \end{itemize}
    \end{itemize}
    We also call $\mathcal(I,\varphi)$~\emph{$\varphi$-oriented}.
\end{definition}

A mixed graph $G$ is an \emph{oriented interval graph}\footnote{Gutowski et al.~\cite{gmrswz-cmdig-GD22} called such graphs \emph{bidirectional} interval graphs.} if it admits an oriented interval representation. 
We call an interval representation~$\mathcal I$ of~$G$~\emph{$\varphi$-orientable} if $(\mathcal I,\varphi)$ is an oriented interval representation of $G$. 
\cref{fig:example-b} depicts an interval representation that is not $\varphi$-orientable for the orientation of the graph from \cref{fig:example1-a}.
A \emph{clique} of $G$ is a clique in its underlying undirected graph~$U(G)$, and we write~$\mathcal C(G)$ for~$\mathcal C(U(G))$.

\begin{toappendix}    
    \Cref{fig:orientation} illustrates the different types of edges
    and their interval representation.

    \begin{figure}[h]
        \centering
        \includegraphics[page=1]{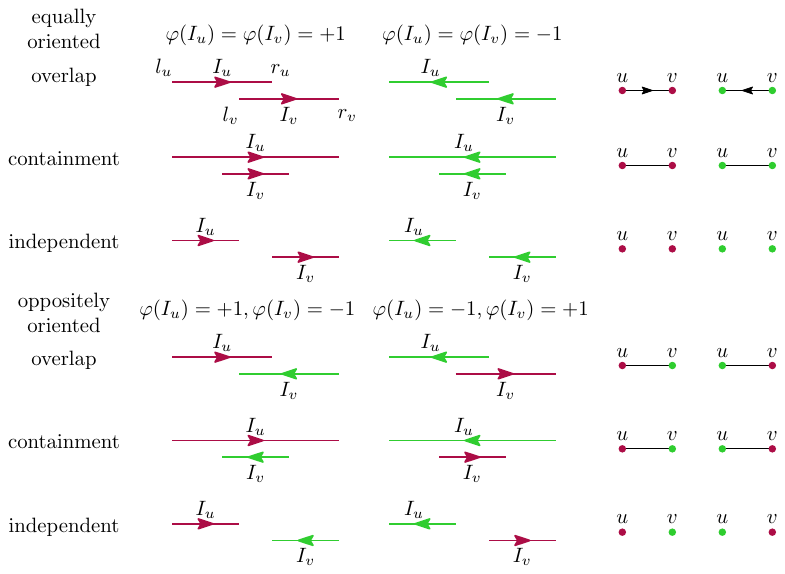}
        \caption{Possible positions of oriented intervals when $l_u < l_v$ and the edges they represent. In our illustrations, color ruby encodes right (positive) orientation, while lime means left (negative).} 
        \label{fig:orientation}
    \end{figure}
\end{toappendix}

\subparagraph{Matching Representations and Permutation Graphs.}
    A \emph{matching representation} of an undirected graph $G$ is a pair $(<_L,<_R)$ of linear orders of the vertex set $V(G)$ such
    that two vertices $u,v\in V(G)$ are ordered differently in $<_L$ and $<_R$ if and only if they are adjacent in~$G$.
    We depict such a representation as a diagram with the vertices in $<_L$ and $<_R$ placed on two parallel horizontal lines in their respective order and connecting the two representatives of a vertex $v$ on $<_L$ and $<_R$ by a segment; see \cref{fig:matching-interval-ex-c}.
    A graph that allows a matching representation is called a \emph{permutation graph}.  This is the case if and only if both~$G$ and its  complement $\overline G$ admit a \emph{transitive orientation}, i.e., an orientation of edges, where the presence of arcs $(u,v)$ and $(v,w)$ yields also the arc $(u,w)$. 
    In particular, such transitive orientations of $G$ and $\overline G$ can be obtained from the matching representation by the intersection (as relations) of $<_R$ with~$>_L$ and $<_R$ with $<_L$, respectively; in the diagram, the orientation of~$\overline G$ corresponds to the left-to-right ordering of non-intersecting segments~\cite{EvenPL1972}; see \cref{fig:matching-interval-ex-c}. 

    Moreover, by choosing $<_L$ and $<_R$ as the orders of left and right endpoints, respectively, every interval representation yields a matching representation of a permutation graph~$G$ whose edges represent nested intervals.
    A transitive orientation of $G$ therefore naturally orders the corresponding intervals by inclusion.




\section{Prescribed Containment Edges}
\label{sec:given-econt}

In this section we present an algorithm for testing whether a mixed graph $G$ with given containment edges $\Econt$ is an oriented interval graph.  This contains the case of oriented proper interval representations as a special case, where $\Econt = \emptyset$.

Our approach works in two steps.  First, we show that we can compute in linear time an orientation $\varphi$ of $G$ such that the pair~$(\varphi, \Econt)$ is consistent if and only if~$G$ admits an oriented interval representation with containment edges~$\Econt$.
We call such a $\varphi$ \emph{feasible}.
For a feasible~$\varphi$, we then check the existence of a representation.

\subsection{Computing a Feasible Orientation}
\label{subsec:computing-phi}

A key observation is that if $G-\Econt$ is connected, its feasible orientation~$\varphi$ (if any) is unique up to reversal as choosing the orientation of a vertex~$u$ determines the orientations of all its neighbors in~$G-\Econt$; namely, they have the same orientation if the two vertices are connected by an arc and they have opposite orientations if they are connected by an (undirected) edge.  Observe that such an orientation~$\varphi$ can be easily computed in linear time.  If~$G-\Econt$ is not connected, this reasoning can be applied to each individual connected component of~$G-\Econt$.  In the rest of this section, we show that these orientations, which are unique up to reversal, can be combined arbitrarily.

For an orientation $\varphi$ of $G$ and $W \subseteq V(G)$ define the \emph{reversal orientation $\rev_W(\varphi)$} by
$$ v \mapsto 
  \begin{cases}
     -\varphi(v) & v \in W \\
    \phantom{-}\varphi(v) & v \notin W
  \end{cases}
$$
as the orientation where precisely the vertices in $W$ have their orientation reversed.
Observe that for two disjoint sets $U, W \subseteq V(G)$ we have $\rev_U \circ \rev_W = \rev_{U \cup W}$, while for two sets in inclusion $U \subseteq W$ we have $\rev_U \circ \rev_W = \rev_{W \setminus U}$.

We are interested in conditions under which reversing the orientations of a subset~$W$ preserves the feasibility of orientations.  
Let $\mathcal I$ be a $\varphi$-orientable interval representation of~$G$.
We say that an interval $I$ (not necessarily in \I) is \emph{overlap-free with respect to \I} if there exists no interval $J \in \I$ that overlaps~$I$.
We define $\cont_\I (I) \coloneqq \{v \in V(G) \mid \I(v) \subseteq I\}$ as the set of vertices of $G$ whose intervals are contained in $I$.
We let~\emph{$\mirr_I(\mathcal I)$} denote the interval representation that we obtain from~$\mathcal I$
by mirroring the subrepresentation $\mathcal I[W]$ at the midpoint of $I$ where $W = \cont_\mathcal I(I)$.
We show that this mirroring of the subrepresentation~$\mathcal I[W]$ is consistent with reversing the orientations of the vertices in $W$, that is, $\mirr_I(\I)$ is a $\rev_{W}(\varphi)$-orientable interval representation of $G$.
  
\begin{restatable}{lemma}{lemrevrep}
   \restateref{lem:rev-rep}
   \label{lem:rev-rep}
    Let $G$ be a mixed graph and let \I be a $\varphi$-orientable interval representation of $G$.
    Let $I$ be an overlap-free interval with respect to \I and let $W = \cont_\I (I)$.
    Then $\mirr_I(\I)$ is a $\rev_{W}(\varphi)$-orientable interval representation of $G$.
\end{restatable}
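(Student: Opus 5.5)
Let $I=[a,b]$ with midpoint $m=(a+b)/2$, and let $\mu(x)=2m-x$ be the reflection at $m$. In $\mirr_I(\I)$, every $v\in W$ gets the interval $\mu(I_v)=[\mu(r_v),\mu(l_v)]$, and every $v\notin W$ keeps $I_v$. Because $I_v\subseteq I$ for $v\in W$, the new intervals also lie inside $I$. A tiny perturbation, or the fact that $\mu$ maps endpoints in the interior of $I$ to distinct points of $I$, keeps general position; if an endpoint of $I$ coincides with an endpoint of some interval, we first shrink or expand $I$ slightly without creating overlaps. The plan is a pair-by-pair check. For every pair $u,v$ we show two things: the intersection relation is preserved, and the relation type (overlap vs.\ containment, and which interval starts first) changes in such a way that, combined with the new orientation $\rev_W(\varphi)$, \cref{def:oriented-interval-representation} produces exactly the same edge or arc as before. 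We split into three cases according to how many of $u,v$ lie in $W$.

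\textbf{Case 1: $u,v\notin W$.} Nothing changes: neither the intervals nor the orientations. So the represented relation is the same.

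\textbf{Case 2: $u,v\in W$.} Both intervals are reflected by $\mu$, which is an order-reversing isometry. Intersection, overlap and containment are therefore preserved, but the order of left endpoints is swapped. Say $l_u<l_v$ and the intervals overlap with $\varphi(u)=\varphi(v)=+1$, so $(u,v)\in\Arcs(G)$. After mirroring, $v$'s interval starts first and both orientations are $-1$. By the second bullet of the definition (with the roles of $u$ and $v$ exchanged) this yields the arc $(u,v)$ again. The case of two left-oriented intervals is symmetric. Overlap with opposite orientations remains overlap with opposite orientations, giving an edge. Containment remains containment, giving an edge regardless of orientation.

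\textbf{Case 3: $u\in W$, $v\notin W$.} This is the step needing the hypotheses. Since $I$ is overlap-free, $I_v$ either is disjoint from $I$, or contains $I$ (it cannot lie inside $I$, as $v\notin W$). If $I_v\cap I=\emptyset$, then $I_v$ misses $I_u$ both before and after mirroring, since both versions of $I_u$ lie in $I$. If $I\subseteq I_v$, then $I_v$ contains $I_u$ both before and after. Containment edges ignore orientation, so the edge $\{u,v\}$ is represented in both situations. Hence mixed pairs never overlap, and the orientation change of $u$ is irrelevant for them.

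Since every pair yields the same edge, arc or non-adjacency, $\mirr_I(\I)$ is a $\rev_W(\varphi)$-orientable interval representation of $U(G)$ that realizes $G$. The main obstacle is Case 3, specifically arguing that no interval outside $W$ can partially intersect $I$. That is exactly what overlap-freeness provides, together with a brief treatment of shared endpoints and general position.
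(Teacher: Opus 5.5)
Your proof is correct and follows essentially the same approach as the paper: mirroring preserves intersections and containments and only reverses the left-to-right order of overlapping pairs inside $W$, which the reversed orientation $\rev_W(\varphi)$ compensates for. Your Case~3 and the general-position remarks make explicit the use of overlap-freeness to rule out mixed overlaps between $W$ and $V\setminus W$, which the paper's short proof leaves implicit.
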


\begin{toappendix}
\whenappendix{\lemrevrep*\label{lem:rev-rep*}}
\end{toappendix}

\begin{appendixproof}
    Treat for now $\mirr_I$ as a mapping assigning each interval $I_u\in \I$
    the value $\mirr_I(I_u)=(\mirr_I(\I))(u)$. 
    An interval $I_u$ is contained in $I$ if and only if $\mirr_I(I_u)$ is contained in $I$.
    For any two vertices~$u$ and~$v$ of~$G$, their intervals
    intersect in~\I if and only if they intersect in $\mirr_I(\I)$.
    Moreover, $I_u \subseteq I_v$ if and only if $\mirr_I(I_u) \subseteq \mirr_I(I_v)$.
    The only difference occurs with overlapping intervals.
    Namely, $I_u$ overlaps $I_v$ such that $I_u$ contains the left endpoint of $I_v$ if and only if $\mirr_I(I_u)$ overlaps $\mirr_I(I_v)$ such that $\mirr_I(I_u)$ contains the right endpoint of $\mirr_I(I_v)$.
    Hence, the order of overlaps is reversed in $\mirr_I(\I)[W]$ and therefore $\mirr_I(\I)$ is $\rev_{W}(\varphi)$-orientable.
\end{appendixproof}

We use this to show that reversing the orientations of any subset~$W$ of vertices such that all edges between $W$ and~$V \setminus W$ are containment edges, preserves the feasibility of an orientation.
We say that an interval~$I$ (not necessarily in \I) \emph{spans} the intervals of $\I[W]$ if every interval of $\I[W]$ is properly contained in $I$.

\begin{restatable}{lemma}{lemrevorient}
    \restateref{lem:rev-orient}
    \label{lem:rev-orient}
    Let $G$ be a mixed graph, and let $W \subseteq V$ be a vertex set such that all edges between the vertices of $W$ and $V \setminus W$ are containment edges.
    Then $G$ has a $\varphi$-orientable interval representation if and only if $G$ has a $\rev_{W}(\varphi)$-orientable interval representation.
\end{restatable}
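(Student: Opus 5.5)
Since $\rev_W$ is an involution, $\rev_W(\rev_W(\varphi))=\varphi$, so it suffices to prove one direction. Starting from a $\varphi$-orientable interval representation $\I$ of $G$, I will build a $\rev_W(\varphi)$-orientable one by applying \cref{lem:rev-rep} repeatedly. The task is to find overlap-free intervals whose $\cont$-sets let me flip exactly the vertices of $W$.

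\emph{Span intervals.} Let $\mathcal K$ be the set of connected components of $U(G)[W]$ together with the connected components of $U(G)[V\setminus W]$. For $K \in \mathcal K$, the union of the intervals $\I[K]$ is an interval, since $K$ is connected; call it $J_K$.

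\emph{Claim: $J_K$ is overlap-free with respect to $\I$.}
\begin{itemize}
\item Suppose some $I_x$ overlaps $J_K$. Then $x \notin K$, and one endpoint of $I_x$ lies in the interior of $J_K$.
\item Because $J_K$ has no gaps, that endpoint lies inside some $I_k$ with $k \in K$. Hence $x$ and $k$ are adjacent.
\item Since $x \notin K$ and $K$ is a full component of its side, $x$ lies on the other side of the partition $(W, V\setminus W)$. So $\{x,k\}$ is a containment edge.
\item An endpoint of $I_x$ lies inside $I_k$, so $I_k \not\subseteq I_x$. Therefore $I_x \subseteq I_k \subseteq J_K$, which contradicts the assumption that $I_x$ overlaps $J_K$.
\end{itemize}

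\emph{Laminarity.} The same argument shows that the spans $\{J_K\}_{K\in\mathcal K}$ form a laminar family. Hence they induce a forest $T$ by inclusion. Moreover, $\cont_\I(J_K)$ is exactly the union of $K$ and all components that are descendants of $K$ in $T$. The components along any root-to-leaf path of $T$ alternate between $W$-components and $(V\setminus W)$-components, because consecutive nested components must be joined by a containment edge across the partition.

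\emph{Mirroring step.} Process $T$ top-down, for example in BFS order. Each component $K$ should be flipped an odd number of times if $K \subseteq W$, and an even number of times otherwise. A mirror applied at $J_{K'}$ flips every vertex of $K'$ and of its descendants. Thus the net flip of $K$ is the number of mirrored ancestors of $K$, including $K$ itself. So I mirror at $K$ exactly when the parity accumulated from its strict ancestors differs from the desired parity for $K$.

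\emph{Well-definedness of the step.} I must check that after earlier mirrors, the current span of each unprocessed component is still an overlap-free interval whose $\cont$-set is still that component plus its descendants. Mirroring at an ancestor maps intervals to intervals and preserves intersection and containment, by the proof of \cref{lem:rev-rep}. Hence the new span is simply the mirror image of the old one, and the inclusion structure is unchanged. Alternatively, one can rerun the overlap-free argument above on the current representation, since that argument used only that the representation is an interval representation of $U(G)$ and that all cross edges are containment edges.

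\emph{Conclusion and main obstacle.} Each step preserves orientability for the correspondingly updated orientation by \cref{lem:rev-rep}, so the final representation is $\rev_W(\varphi)$-orientable. The main obstacle is the fact that a single mirror at the span of a $W$-component also flips the $(V\setminus W)$-vertices nested inside it. The parity bookkeeping over the nesting forest is what fixes this.
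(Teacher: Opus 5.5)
Your per-component spans and your proof that each $J_K$ is overlap-free are correct, and laminarity does follow from overlap-freeness together with connectivity. The step that fails is the claim that $\cont_{\I}(J_K)$ is exactly $K$ together with the components below $K$ in $T$. A component on the other side can have one interval containing $J_K$ and another interval nested inside $J_K$. It is then an ancestor of $K$, yet the mirror at $J_K$ flips part of it. Concretely, take $I_x=[0,10]$, $I_k=[2,8]$, $I_v=[4,5]$ (a triangle of containment edges) and $W=\{k\}$. Then $\mathcal K=\{\{k\},\{v,x\}\}$, and $T$ has root $\{v,x\}$ with child $\{k\}$. Your rule mirrors only at $J_{\{k\}}=[2,8]$. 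But $\cont_{\I}([2,8])=\{k,v\}$, so you obtain $\rev_{\{k,v\}}(\varphi)$ rather than $\rev_W(\varphi)$. In general, the number of mirrors hitting a vertex is not determined by where its component sits in $T$; it need not even be constant on a component. So the parity bookkeeping has no basis.

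The repair is to choose the pieces relative to the span just mirrored rather than globally, and to iterate. For a component $K$ of $U(G)[W]$ you have $\cont_{\I}(J_K)\cap W=K$, since a $W$-vertex nested in $J_K$ meets an interval of $K$. So the wrongly flipped set is $X=\cont_{\I}(J_K)\setminus K\subseteq V\setminus W$. By the hypothesis and the overlap-freeness of $J_K$, every adjacency between $X$ and $V\setminus X$ is again a containment edge, and mirroring preserves this. You then flip $X$ back by mirroring the spans of the components of $U(G)[X]$ (here $\{v\}$, not $\{v,x\}$). This is the paper's second step, which mirrors the spans of the components of $G[U]-W$ with $U=\cont_{\I}(I)$.

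These spans can contain further vertices of $K$, which then get flipped back. For example, take $I_k=[0,10]$, $I_b=[2,8]$, $I_z=[4,5]$ with $W=\{k,z\}$. So the step must be repeated with alternating roles. Termination follows by induction on $\bigl|\bigcup_K \cont_{\I}(J_K)\bigr|$, taken over the components $K$ of the set currently to be flipped. This quantity strictly decreases, because each new span lies inside some old $J_K$ and misses the vertex of $K$ realizing an endpoint of $J_K$.

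Note that the paper's own argument stops after these two rounds, asserting $\cont_{\I}(I_i)=B_i$, which fails in the last example. Its first step, a single span for all of $W$, also need not be overlap-free when $W$ is disconnected. For instance, take $I_a=[0,1]$, $I_c=[10,11]$ with $a,c\in W$, and $d\notin W$ with $I_d=[4,12]$. Your per-component spans avoid that problem.
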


\begin{toappendix}
\whenappendix{\lemrevorient*\label{lem:rev-orient*}}
\end{toappendix}

\begin{appendixproof}
    ``$\Rightarrow$'':
    Assume that $G$ admits a $\varphi$-orientable interval representation \I.
    Let $I$ be the smallest interval that spans $\I[W]$.
    Since the vertices of~$G-W$ are connected to $W$ only by containment edges, $I$ is overlap-free with respect to \I.
    Let $U = \cont_\I (I)$ and let $B_1, \dots, B_k$ be the vertex sets of the connected components of $G[U] - W$.
    For $i \in \{1, \dots k\}$, let $I_i$ be the smallest interval that spans $\I[B_i]$.
    Observe that $I_1, \dots, I_k$ are pairwise disjoint and overlap-free with respect to \I.
    Moreover, for every $i \in \{1, \dots, k\}$, we have that $\cont_\I (I_i) = B_i$.
    From \cref{lem:rev-rep} we know that $\mirr_I(\I)$ is a $\rev_{U}(\varphi)$-orientable interval representation of $G$.
    Let $I'_i$ be the interval obtained from mirroring $I_i$ at the midpoint of $I$.
    Then $I'_1, \dots, I'_k$ are again pairwise disjoint, and they are overlap-free with respect to $\mirr_I(\I)$.
    Moreover, we have $\cont_{\mirr_I(\I)}(I'_i) = B_i$.
    Since $I'_1, \dots, I'_k$ are pairwise disjoint, we can apply \cref{lem:rev-rep} to mirror them independently and obtain that $\mirr_{I'_1}(\mirr_{I'_2}(\dots(\mirr_{I'_k}(\mirr_I(\I)))\dots))$ is a $\rev_{B_1}(\rev_{B_2}(\dots (\rev_{B_k}(\rev_U(\varphi)))\dots))$-orientable interval representation of $G$.
    Since the $B_i$ are pairwise disjoint, we have that
    $\rev_{B_1}(\rev_{B_2}(\dots (\rev_{B_k}(\rev_U(\varphi)))\dots)) = \rev_{B}(\rev_U(\varphi))$, where $B = \bigcup_{i = 1}^k B_i$.
    Moreover, since $B \subseteq U$ and $W = U \setminus B$, we have that $\rev_{B}(\rev_U(\varphi)) = \rev_W(\varphi)$.

    ``$\Leftarrow$'': Observe that $\rev_W(\rev_W(\varphi)) = \varphi$, so it follows from the forward implication.
\end{appendixproof}

We show now that a feasible orientation (if one exists) can be computed in linear time.

\begin{lemma}
    \label{lem:linear-time-phi}
    There is a linear-time algorithm that given $G$ and $\Econt$ computes a feasible orientation $\varphi$ of $G$ if it admits one.
\end{lemma}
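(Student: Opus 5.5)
The plan is to compute, for each connected component of $G-\Econt$, the orientation forced by its edges and arcs, and then use \cref{lem:rev-orient} to argue that the relative choice between different components is irrelevant.

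\textbf{Step 1: propagate orientations.} Run a BFS/DFS on $G-\Econt$, where arcs are treated as undirected adjacencies. In each component $K_1,\dots,K_m$, pick a root $r_i$ and set $\varphi(r_i)=+1$. When traversing from $u$ to a neighbor $v$:
\begin{itemize}
\item if $u$ and $v$ are joined by an arc, set $\varphi(v)=\varphi(u)$;
\item if they are joined by an edge not in $\Econt$, set $\varphi(v)=-\varphi(u)$.
\end{itemize}
This is the forcing rule justified in the discussion at the start of \cref{subsec:computing-phi}. An arc $(u,v)$ forces equal orientations, since only overlapping same-oriented intervals produce arcs. An edge not in $\Econt$ is realized by an overlap, so it forces opposite orientations.

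\textbf{Step 2: check consistency.} While traversing, check every arc and every edge of $G-\Econt$ for consistency with these rules. If some component has a conflict (a parity violation), report that no feasible orientation exists. This is correct because in any representation with containment edges exactly $\Econt$, every orientation is constrained by those rules. Both steps take $O(|V|+|E|+|\Arcs|)$ time.

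\textbf{Step 3: feasibility of the output.} The main step is showing that the computed $\varphi$ is feasible whenever any feasible orientation $\psi$ exists. Fix a representation with orientation $\psi$ and containment edges $\Econt$. On each component $K_i$, $\psi$ satisfies the same forcing rules, so $\psi|_{K_i}=\pm\varphi|_{K_i}$. Let $W$ be the union of the components on which $\psi=-\varphi$; then $\varphi=\rev_W(\psi)$.

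All edges between $W$ and $V\setminus W$ join different components of $G-\Econt$, so they lie in $\Econt$ and are containment edges. Hence \cref{lem:rev-orient} applies. One subtlety is that the proof of \cref{lem:rev-orient} must preserve the set of containment edges, not just orientability. It does: mirroring keeps nesting relations, as the proof of \cref{lem:rev-rep} shows. So the resulting representation is $\varphi$-oriented with containment edges exactly $\Econt$.

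I expect this last point, that the containment set is preserved, to be the main obstacle. I would handle it by noting it explicitly from the mirroring argument rather than applying \cref{lem:rev-orient} as a black box.

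Finally, isolated or trivial components need no special treatment.
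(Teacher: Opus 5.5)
Your proposal is correct and is essentially the paper's proof: propagate the forced signs along arcs and non-containment edges within each component of $G-\Econt$, reject on a parity conflict, and use \cref{lem:rev-orient} to show that the sign chosen per component does not matter. The differences are cosmetic. You flip the union $W$ of all mismatched components in one step, whereas the paper fixes a representation agreeing with $\varphi$ on as many vertices as possible and flips a single component $K$ to reach a contradiction; you also state explicitly that mirroring preserves the containment set, which the paper leaves implicit. One caution: if you unpack the mirroring argument instead of citing the lemma, do it one component at a time, because for a disconnected $W$ the smallest interval spanning $\I[W]$ need not be overlap-free (e.g.\ $W=\{a,b\}$ with $I_z\supset I_a$ and $I_b$ lying to the right of $I_z$, disjoint from it).
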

\begin{proof}
    Let $G$ be a mixed graph and let $K$ be a connected component of $G - \Econt$.
    Observe that since $K$ is connected by non-containment edges, choosing the orientation for a single vertex $v \in V(K)$ uniquely determines the orientations of all other vertices in $K$.
    That is, if it exists, the consistent orientation $\varphi_K$ of $K$ is unique up to reversal. We can compute $\varphi_K$ in time linear in the size of $K$.
    If $K$ has no consistent orientation, $G$ is a no-instance and we return an arbitrary orientation of $G$.
    Otherwise, we return the orientation~$\varphi$ of $G$ defined by $\varphi(v) = \varphi_K(v)$, where $K$ is the connected component of $G - \Econt$ that contains $v$.

    It remains to show that $\varphi$ is feasible.
    Clearly, if $G$ admits a $\varphi$-orientable interval representation, then $G$ is an oriented interval graph.
    Conversely, assume $G$ is an oriented interval graph and admits an  oriented interval representation $(\I,\varphi')$ for some orientation $\varphi'$.
    We choose~$(\I,\varphi')$ such that the number of vertices for which it coincides with $\varphi$ is maximized.
    If $\varphi = \varphi'$, then we are done.
    Otherwise, there is a vertex $v$ with $\varphi(v) \neq \varphi'(v)$.
    Let $K$ be the connected component of $G- \Econt$ that contains $v$.
    Since the orientation for each connected component~$K$ of~$G - \Econt$ is unique up to reversal, we have $\varphi(v) = -\varphi'(v)$ for all $v \in V(K)$.
    By \cref{lem:rev-orient} we have that since $G$ has a $\varphi'$-orientable interval representation, it has a $\rev_{K}(\varphi')$-orientable interval representation.
    Then $\rev_{K}(\varphi')$ coincides with $\varphi$ for more vertices than $\varphi'$.
    This is a contradiction to the choice of $\varphi'$.
\end{proof}

\subsection{Constructing an Interval Representation for a Given Orientation}
\label{subsec:constructing-phi}

    Based on \cref{lem:linear-time-phi} we may assume that, in addition to the mixed graph~$G$ and the containment edges~$\Econt$, we are also given a feasible orientation~$\varphi$.
    That is, the pair~$(\varphi, \Econt)$ is consistent if~$G$ admits an oriented interval representation with containment edges~\Econt.
    
    Our algorithm is based on the fact that the graph $H$ with $V(H) = V(G)$ and $E(H) = \Econt$ is a permutation graph that admits a special type of matching representation if and only if the given mixed interval graph $G$ admits an oriented interval representation.

    Let $u$ and $v$ be two vertices of~$G$ with $(u,v) \in \Arcs(G)$.
    Then any $\varphi$-orientable interval representation orders $u$ and $v$ in the sense that for $I_u = [l_u, r_u]$ and $I_v=[l_v, r_v]$, the order of~$l_u$ and $l_v$ is the same as the order of $r_u$ and $r_v$.
    If $\varphi(u) = \varphi(v) = 1$, then the direction of~$(u,v)$ implies that $l_u < l_v$ and $r_u < r_v$ in any $\varphi$-orientable interval representation.
    Similarly, $\varphi(u) = \varphi(v) = -1$ implies that $l_v < l_u$ and $r_v < r_u$ in any $\varphi$-orientable interval representation.
    We capture these conditions in the \emph{placement order $\prec_{\varphi}$} defined as
    \mbox{$u\prec_{\varphi} v$} if and only if $((u,v) \in \Arcs(G) \land \varphi(u) = \varphi(v) = 1)\lor((v,u) \in \Arcs(G) \land \varphi(u) = \varphi(v) = -1)$.
    By definition, $\prec_{\varphi}$ is antisymmetric and acyclic on $V(G)$. Note that $\prec_{\varphi}$ is not necessarily transitive, we have chosen this symbol and name as it well captures the relative position of two vertices connected by an arc.
    The following lemma establishes the connection between matching representations and containments in interval representations.


    \begin{restatable}{lemma}{lempermgraphto}
        \restateref{lem:perm-graph-to}
        \label{lem:perm-graph-to}
        Let $G$ be a mixed graph and let $(\I,\varphi)$ be its oriented interval representation with containment edges $\Econt$.
        Then $H =(V(G),\Econt)$ is a permutation graph and its
        complement $\overline{H}$ admits a transitive orientation $\prec_{\overline{H}}$ that extends $\prec_\varphi$.
    \end{restatable}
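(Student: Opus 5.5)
We take the matching representation induced by the interval representation itself. Let $<_L$ be the order of the left endpoints $l_v$ in $\I$ and $<_R$ the order of the right endpoints $r_v$. By the preliminaries, $(<_L,<_R)$ is a matching representation of the graph whose edges are the pairs of nested intervals. So the first step is to check that this nesting graph is exactly $H=(V(G),\Econt)$.

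Two vertices $u,v$ are ordered differently in $<_L$ and $<_R$ if and only if one of $I_u,I_v$ contains the other. By definition, $\Econt$ is precisely the set of edges realized by nested intervals. This holds because, by \cref{def:oriented-interval-representation}, every pair of nested intervals yields an undirected edge of $G$, and that edge is a containment edge. Hence $(<_L,<_R)$ is a matching representation of $H$, and $H$ is a permutation graph.

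For the complement, we define $u \prec_{\overline H} v$ if and only if $u <_L v$ and $u <_R v$. This is the intersection of $<_L$ and $<_R$ as relations, so it is transitive and antisymmetric. It relates exactly the pairs that are ordered the same way in both orders, that is, the non-nested pairs, which are the non-edges of $H$. Therefore $\prec_{\overline H}$ is a transitive orientation of $\overline H$. These facts are cited from the preliminaries~\cite{EvenPL1972}, but they follow directly from the definitions.

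It remains to show that $\prec_{\overline H}$ extends $\prec_\varphi$. Suppose $u \prec_\varphi v$. Then $u$ and $v$ are joined by an arc of $G$, so by \cref{def:oriented-interval-representation} the intervals $I_u$ and $I_v$ overlap and $\varphi(u)=\varphi(v)$.

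If $\varphi(u)=\varphi(v)=1$, then $(u,v)\in\Arcs(G)$. Because the arc points from the interval with the earlier left endpoint, we get $l_u<l_v$, and overlap then gives $r_u<r_v$.

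If $\varphi(u)=\varphi(v)=-1$, then $(v,u)\in\Arcs(G)$. The definition with the roles of $u,v$ swapped says $(v,u)$ is an arc exactly when the earlier-starting interval is $u$, so again $l_u<l_v$ and $r_u<r_v$.

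In both cases $u\prec_{\overline H} v$. The only delicate point is this last case analysis, which requires reading the arc direction convention of \cref{def:oriented-interval-representation} carefully for left-oriented intervals. Everything else is immediate from the definitions.
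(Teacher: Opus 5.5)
Your proof is correct and follows the paper's argument: take the matching representation $(<_L,<_R)$ given by the left and right endpoints, orient $\overline H$ as the intersection of $<_L$ and $<_R$, and split into the cases $\varphi(u)=\varphi(v)=\pm1$ to show that $u\prec_\varphi v$ forces $l_u<l_v$ and $r_u<r_v$. Your handling of the left-oriented case, where $(v,u)\in\Arcs(G)$ gives $l_u<l_v<r_u<r_v$, is the correct one; the paper instead writes $l_v<l_u<r_v<r_u$ there, which is a slip, since it would yield $v\prec_{\overline H}u$ rather than the conclusion it states.
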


    \begin{toappendix}
    \whenappendix{\lempermgraphto*\label{lem:perm-graph-to*}}
    \end{toappendix}
    \begin{appendixproof}
        For $v \in V(G)$,
        let $I_v=[l_v, r_v]$ be the interval in \I that represents it.
        We first show that $H$ is a permutation graph.
        Let $<_L$ and $<_R$ be the two orderings of $V(G)$, in which the vertices are ordered by the left and right endpoints of the corresponding intervals in \I, respectively.
        We claim that $(<_L,<_R)$ is a matching representation of $H$.
        To see this, let $u$ and $v$ be two vertices of~$H$,
        and assume without loss of generality that $l_u < l_v$.
        
        If $\{u,v\} \in \Econt$, it follows that $I_v \subseteq I_u$, that is $l_u < l_v < r_v < r_u$.
        Therefore, we have $u <_L v$ and  $v <_R u$, and then $u$ intersects with $v$ in the matching diagram of $(<_L,<_R)$.
        On the other hand, if $\{u,v\} \notin \Econt$, then $r_u < r_v$, regardless of whether $\{u,v\} \in E \setminus \Econt$ or not.
        Therefore, we have $u <_L v$ and $u <_R v$, and then $u$ does not intersect with $v$ in the diagram;
        see \cref{fig:matching-interval-ex} for an example. Hence, $H$ is a permutation graph.
        
    \begin{figure}
        \begin{subcaptiongroup}
        \centering
        \textsfbf{(a)} \includegraphics[page=6,valign=t]{fig/example}
        \phantomcaption \label{fig:matching-interval-ex-a}\hfil
        \textsfbf{(b)} \includegraphics[page=7,valign=t]{fig/example}
        \phantomcaption\label{fig:matching-interval-ex-b}\hfil
        \textsfbf{(c)} \includegraphics[page=13,valign=t]{fig/example}
        \phantomcaption\label{fig:matching-interval-ex-c}\hfil
        \textsfbf{(d)} \includegraphics[page=14,valign=t]{fig/example}
        \phantomcaption\label{fig:matching-interval-ex-xc}
        \end{subcaptiongroup}
        
        \caption{(a) A mixed graph $G$ with set $\Econt = \big\{\{v,w\}\big\}$ of containment edges; (b)~an oriented interval representation of~$G$; (c)~diagram of a matching representation for $H = (V(G), \Econt)$ obtained from the interval representation; (d)~another matching representation for $H$ that corresponds to no interval representation.
        }
        \label{fig:matching-interval-ex}
        \end{figure}

        Let $\prec_{\overline{H}}$ be defined by $u\prec_{\overline{H}} v\Longleftrightarrow u <_L v 
        \land u <_R v$. It follows from the construction of $H$ and from the orders $<_L$ and $<_R$ imposed by \I that $\prec_{\overline{H}}$ is a transitive orientation of $\overline{H}$.

        We claim that $\prec_{\overline{H}}$ extends $\prec_\varphi$, i.e., the implication $u \prec_\varphi v \Rightarrow u \prec_{\overline{H}} v$ holds.
        For $u\prec_\varphi v$, we distinguish cases based on the orientations of $u$ and $v$.
        If~$\varphi(u) = \varphi(v) = 1$, then $(u,v) \in \Arcs(G)$.
        Therefore, $l_u < l_v < r_u < r_v$, which implies $u <_L v$ and $u <_R v$, and hence $u \prec_{\overline{H}} v$.
        Similarly, if $\varphi(u) = \varphi(v) = -1$, then $(v, u) \in \Arcs(G)$.
        Therefore, $l_v < l_u < r_v < r_u$, which implies $v <_L u$ and $v <_R u$, and hence,  in this case, we also get $u \prec_{\overline{H}} v$.
    \end{appendixproof}

    We note that Dushnik and Miller~\cite{Dushnik_1941} established the connection between containment interval graphs and posets of dimension 2.
    \cref{lem:perm-graph-to} gives necessary conditions for when $G$ is a $\varphi$-orientable interval graph, which are, however, not sufficient, as we may lose information about the non-containment adjacencies of $G$; see~\cref{ex:clique-counter}.
    \begin{toappendix}
    \begin{example}
    \label{ex:clique-counter}
    \cref{lem:perm-graph-to} gives necessary conditions for when $G$ is a $\varphi$-orientable interval graph, which are however not sufficient.
    
    Consider the mixed graphs $G_1$ and $G_2$ illustrated in \cref{fig:clique-counter-1-a}, with the only containment edge~$\{u,z\}$. By \cref{lem:linear-time-phi} all intervals can be assumed to be right-oriented in both graphs.
    
    \begin{figure}
        \centering
        \begin{subcaptiongroup}
        \textsfbf{(a)} \ \includegraphics[page=2,valign=t]{fig/clique_counterexample.pdf}
        \phantomcaption\label{fig:clique-counter-1-a}
        \textsfbf{(b)}\includegraphics[page=7,valign=t]{fig/clique_counterexample.pdf} 
        \phantomcaption\label{fig:clique-counter-1-b}\hfil
        \textsfbf{(c)} \ \includegraphics[page=8,valign=t]{fig/clique_counterexample.pdf} 
        \phantomcaption\label{fig:clique-counter-1-c}
        \end{subcaptiongroup}
 
        \caption{(a) A mixed graph $G_1$ with all intervals right-oriented. The mixed graph $G_2$ additionally contains the blue arc $(v,w)$.
        (b) A~$\varphi$-oriented interval representation of $G_1$. For $G_2$, there exists no way to realize the containment edge $\{u,z\}$ when $I_v$ and $I_w$ overlap (blue dots).
        (c) The graphs $G_1$ and $G_2$ yield the graphs $H_1$ and $H_2$, respectively, with their isomorphic complement graphs $\overline{H_1}$ and $\overline{H_2}$ both satisfying the properties of \cref{lem:perm-graph-to}, even though $G_2$ is a no-instance.}
        \label{fig:clique-counter-1}
    \end{figure}
    The representation depicted in \cref{fig:clique-counter-1-b} certifies that $G_1$ is an oriented interval graph. On the other hand, $G_2$ is not in this class:
    The vertices $\{u,v,w\}$ form a clique, while only $u$ is adjacent to $z$ via a containment edge.
    The edges $(v,u)$ and $(u,w)$ imply that $I_v$ and $I_w$ each contain one endpoint of $I_u$, respectively.
    As $I_v \cap I_w \neq \emptyset$, there is no point in $I_u$ that is not also contained in either $I_v$ or $I_w$.  Thus if~$I_z$ is contained in $I_u$, it will necessarily intersect $I_v$ or $I_w$, which is not allowed; 
    see \cref{fig:clique-counter-1-b}.

    Even though only one of $G_1$ and $G_2$ is an oriented interval graph with these containments, they both yield the same permutation graph $H$ that satisfies the properties of \cref{lem:perm-graph-to}; see \cref{fig:clique-counter-1-c}.
    In $H$, we lose information about the non-containment adjacencies of $G$.
    \end{example}
    \end{toappendix}
    However, this information is crucial for deciding if $G$ admits an oriented interval representation.
    In order to encode this information in $H$ we introduce the following construction.

    For a mixed graph $G$, the \emph{augmented graph $G^+$} is obtained by adding a clique point vertex~$c_i$ for each $C_i \in \mathcal C(G)$ to $G$,
    such that $c_i$ is adjacent to a vertex $v \in G$ if and only if~$v \in C_i$.
    We consider the edges incident to clique point vertices as elements of the containment edges~$\ECcont$ of $G^+$.
    The orientation of the clique point vertices is thus irrelevant and we extend $\varphi$ to an orientation $\varphi^+$ of $G^+$ by setting $\varphi^+(c_i) \coloneqq 1$ for every $c_i$.

    The \emph{augmented interval representation $\I^+$} is obtained from \I by adding $I_{c_i}$ strictly inside~$\bigcap \I[C_i]$ for every $c_i$.
    Sets $C_1^+ = C_1 \cup \{c_1\}, \dots, C_k^+ = C_k \cup \{c_k\}$ are the \emph{augmented maximal cliques}; see \cref{fig:example2}.
    In the augmented representation, the augmented maximal cliques are ordered the same way as the original ones and with a slight abuse of notation, we also denote this ordering by $\sigma$.

    \begin{figure}
        \begin{subcaptiongroup}
        \centering
        \textsfbf{(a)} \includegraphics[page=3,valign=t]{fig/example}
        \phantomcaption\label{fig:example2-a}
        \textsfbf{(b)} \includegraphics[page=8,valign=t]{fig/example} 
        \phantomcaption\label{fig:example2-b}
        \textsfbf{(c)} \includegraphics[page=15,valign=t]{fig/example}
        \phantomcaption\label{fig:example2-c}
        \textsfbf{(d)} \includegraphics[page=12,valign=t]{fig/example}
        \phantomcaption\label{fig:example2-d}
        \end{subcaptiongroup}
        
        \caption{(a) The augmentation~$G^+$ of the graph~$G$ from \cref{fig:matching-interval-ex-a}, (b)~an interval representation $\mathcal{I}^+$ of $G^+$, (c) diagram of a matching representation of $H^+$, and (d)~the clique order of $\mathcal{I}^+$.
        }
        \label{fig:example2}
    \end{figure}

    \begin{restatable}{lemma}{lemggc}
    \restateref{lem:g-gc}
        \label{lem:g-gc}
        A graph $G$ has a $\varphi$-oriented interval representation if and only if $G^+$ has a $\varphi^+$-oriented
        interval representation such that no clique point interval properly contains another interval.
    \end{restatable}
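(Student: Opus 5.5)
We prove the two directions separately. The forward direction takes the augmented representation $\I^+$ as the witness. The backward direction deletes the clique point intervals and checks that nothing else changes.

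\emph{Forward direction.} Let $(\I,\varphi)$ be a $\varphi$-oriented interval representation of $G$, and let $\I^+$ be its augmented interval representation. For each maximal clique $C_i$, the set $\bigcap \I[C_i]$ is a non-empty interval $[a_i,b_i]$ by the Helly property. We place $I_{c_i}$ strictly inside it. We choose the new endpoints distinct from all other endpoints; they can also be chosen distinct from each other, because cliques sharing a common point coincide, so distinct maximal cliques have disjoint kernels.

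We then verify the following.
\begin{itemize}
\item $I_{c_i}$ meets exactly the intervals of $C_i$. If some $I_w$ with $w\notin C_i$ met $I_{c_i}$, then $C_i\cup\{w\}$ would share a common point and hence be a clique, contradicting maximality.
\item Every interval $I_v$ with $v\in C_i$ properly contains $I_{c_i}$, so all edges at $c_i$ are containment edges. This matches $\ECcont$, and their representation does not depend on $\varphi^+(c_i)$.
\item $I_{c_i}$ properly contains no interval. Any interval inside $I_{c_i}$ would meet $I_{c_i}$, so it would belong to $C_i$ or be another clique point interval, and both are impossible by the choices above.
\end{itemize}
The relations among the original intervals are unchanged, so $(\I^+,\varphi^+)$ is the required representation.

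\emph{Backward direction.} Take a $\varphi^+$-oriented representation of $G^+$ in which no clique point interval properly contains another interval. Deleting the clique point intervals leaves a $\varphi$-oriented representation of $G$, since \cref{def:oriented-interval-representation} is a condition on pairs of intervals, and the pairs among the original vertices are untouched. The extra hypothesis is not needed here. It matters later, when $H^+$ is used to reconstruct representations.

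\emph{Main obstacle.} The only delicate point is the forward direction's check that clique point intervals can be placed so that they meet nothing outside $C_i$ and contain nothing. This is the maximality argument above, combined with placing each $I_{c_i}$ inside the kernel $[a_i,b_i]$ and avoiding all existing endpoints by general position.
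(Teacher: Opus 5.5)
Your proof is correct and follows essentially the same route as the paper: one direction deletes the clique point intervals, and the other inserts, for each $c_i$, a tiny interval inside a region covered by exactly the intervals of $C_i$. You additionally make explicit, via maximality, why the kernel $\bigcap \I[C_i]$ is such a region and why distinct kernels are disjoint, which the paper's proof leaves implicit.
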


    \begin{toappendix}
    \whenappendix{\lemggc*\label{lem:g-gc*}}
    \end{toappendix}

    \begin{appendixproof}
        A $\varphi$-oriented interval representation of $G$ can be obtained from a $\varphi^+$-oriented interval representation of $G^+$ by removing the intervals that correspond to clique point vertices.
        Conversely, let \I be a $\varphi$-oriented interval representation of $G$.
        Then for each maximal clique~$C_i$ of~$G$ there is a point $p_i$ that is contained in exactly the intervals of~$C_i$.
        We obtain a $\varphi^+$-oriented interval representation $(\I^+,\varphi^+)$ of $G^+$ by inserting for each $c_i$ a corresponding arbitrarily small interval that contains $p_i$ and is contained in all intervals of $C_i$.
        Observe that such an interval does not contain any other interval of the representation.
    \end{appendixproof}

    Assume $G$ is an oriented interval graph with an oriented interval representation $(\I,\varphi)$.
    By \cref{lem:g-gc} we obtain that $G^+$ admits an oriented interval representation $(\I^+,\varphi^+)$,  where every clique point interval is inclusion-wise minimal.
    By a slight abuse of notation we use the symbol~$\prec_\varphi$ also for the placement order on $G^+$, because clique point vertices do not participate in any of its comparisons,
    and hence both placement orders on $G$ and $G^+$ are identical as~sets. 

    Let $\augH$ be the \emph{containment graph} of $G^+$,
    i.e., $V(\augH) = V(G^+)$ and $E(\augH) = \ECcont$.
    By~\cref{lem:perm-graph-to}, $\augH$ is a permutation graph whose complement $\overline{\augH}$ has a transitive orientation~$\precoverlineaugH$ that extends~$\prec_{\varphi}$.
    By orienting $\augH$ by inclusion of the corresponding intervals, i.e., choosing $ u\precaugH v \Longleftrightarrow I_u \subset I_v$, we
    obtain a transitive orientation $\precaugH$ of $\augH$ such that $c_i \precaugH v$ for all $v \in C_i$.
   We show that for $\augH$, these necessary conditions are also sufficient.

    \begin{restatable}{theorem}{thmorientable}
        \restateref{thm:orientable}
        \label{thm:orientable}
        Let $\varphi$ be an orientation of a mixed interval graph~$G$.
        Let $\augH = (V^+, \ECcont)$ be the containment graph of $G^+$.
        Then $G$ admits a $\varphi$-oriented interval representation if and only if
        \begin{enumerate}[(i)]
            \item $\augH$ admits a transitive orientation $\precaugH$ such that each $c_i$ is a minimal element and
            \item $\overline{\augH}$ admits a transitive orientation $\precoverlineaugH$ that extends $\prec_\varphi$.
        \end{enumerate}
        
    \end{restatable}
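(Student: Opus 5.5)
The forward direction is essentially already established in the discussion preceding the statement. Given a $\varphi$-oriented representation $(\I,\varphi)$ of $G$, \cref{lem:g-gc} yields a $\varphi^+$-oriented representation $\I^+$ of $G^+$ in which every clique point interval is inclusion-minimal. Applying \cref{lem:perm-graph-to} to $G^+$ gives (ii), with $\precoverlineaugH$ defined by agreement of the left and right endpoint orders. Orienting $\augH$ by strict inclusion gives a transitive orientation in which each $c_i$ is minimal, which is (i).

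For the backward direction, I would build a matching representation from the two transitive orientations. Since $\precaugH$ and $\precoverlineaugH$ are transitive orientations of $\augH$ and its complement, the standard permutation-graph construction gives two linear orders on $V^+$. Define $<_L = {\precoverlineaugH} \cup {\succaugH}$, so that a container comes before the contained element. Define $<_R = {\precoverlineaugH} \cup {\precaugH}$. Both are linear orders; transitivity follows from the classical fact that the union of transitive orientations of a graph and its complement is acyclic (Even, Pnueli, Lempel). Now place left endpoints in the order $<_L$ and right endpoints in the order $<_R$. The difficulty is the interleaving: every left endpoint of $v$ must precede its right endpoint, and two vertices must intersect exactly when they are adjacent in $U(G)$.

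To obtain the interleaving, I would use the clique points. Order the clique points $c_1,\dots,c_k$ by $\precoverlineaugH$. They are pairwise non-adjacent in $\augH$, so this is a linear order, and I claim it is a clique ordering $\sigma$. For each $c_i$, I would put all left endpoints of members of $C_i^+$ before all right endpoints of members of $C_i^+$. Concretely, I would sweep the clique points in order and, at clique $c_i$, emit the left endpoints (in the order $<_L$) of vertices first appearing in $C_i$, then the interval $I_{c_i}$, then the right endpoints (in the order $<_R$) of vertices whose last clique is $C_i$.

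The key steps are as follows. (a) Show that the sets $\{i : v\in C_i\}$ are consecutive in the $\precoverlineaugH$ order. If $c_i \precoverlineaugH c_j \precoverlineaugH c_l$ and $v\in C_i\cap C_l$, then $v\notin C_j$ would force $c_j$ to be comparable to $v$ in $\precoverlineaugH$. Either choice, combined with transitivity and the relation $c_i \precaugH v$, should contradict the fact that the union of the two orientations is a linear order. (b) Check that the emitted endpoint sequence agrees with $<_L$ and $<_R$, so that containment holds exactly for $\ECcont$. (c) Check that overlaps between intervals of the same orientation point in the correct direction, because $\precoverlineaugH$ extends $\prec_\varphi$. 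Together with containment being exactly $\ECcont$, this means the arc/edge types match \cref{def:oriented-interval-representation}. Removing the clique point intervals then gives the representation of $G$, as in \cref{lem:g-gc}.

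I expect step (a), combined with the consistency in (b), to be the main obstacle. This is where minimality of the $c_i$ in condition (i) is essential, because it encodes the non-containment adjacencies that \cref{ex:clique-counter} shows would otherwise be lost. My first attempt would be to prove directly that the sequence "$<_L$ for lefts, $<_R$ for rights, merged by the rule that the right endpoint of $u$ precedes the left endpoint of $v$ iff $u$ and $v$ are non-adjacent in $U(G)$ with $u\precoverlineaugH v$" is consistent. I would argue that any violation produces a pair $u,v$ adjacent in $U(G)$ but in no common maximal clique, which is impossible.
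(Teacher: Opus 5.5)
Your proposal is correct and is essentially the paper's proof: you use the same linear orders $<_L$ and $<_R$, order the clique points by $\precoverlineaugH$, and build the same endpoint sequence. The paper defines its blocks directly by position relative to the $c_i$ in $<_L$ and $<_R$, so your consecutiveness step (a) comes for free there, from $v\in C_i$ iff $v<_L c_i$ and $c_i<_R v$. Two small fixes: in (a) the case $v\precoverlineaugH c_j$ is refuted by $c_l\precaugH v$ (a cycle in $<_R$), not by $c_i\precaugH v$; and (c) needs the standing assumption, which the paper also invokes, that arcs join equally oriented vertices and edges outside $\Econt$ join oppositely oriented ones, since (ii) alone does not ensure this.
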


    \begin{toappendix}
    \whenappendix{\thmorientable*\label{thm:orientable*}}
    \end{toappendix}

    \begin{appendixproof}
        The necessity of conditions (i) and (ii) follows from the discussion above.
        For the sufficiency, assume conditions (i) and (ii) hold. 
        From a routine case analysis on comparable pairs it follows that the union of  relations~$\precaugH$ and~$\precoverlineaugH$ is transitive, as well as the union of~$\succaugH$ and~$\precoverlineaugH$~\cite{golumbic2004algorithmic}. As any distinct $u,v\in V^+$ are comparable either by~$\precaugH$ or by~$\precoverlineaugH$ we obtain two linear orders $<_R$ and $<_L$ on $V^+$ defined as $u<_R v \Longleftrightarrow u \precaugH v \lor u \precoverlineaugH v$ and $u<_L v \Longleftrightarrow u \succaugH v \lor u \precoverlineaugH v$.
        We show that $<_R$ and $<_L$ can be used to construct a $\varphi^+$-oriented interval representation of $G^+$ such that no $I_{c_i}$ properly contains another interval of the representation.

        Condition (i) implies $v <_R c_i$ and $c_i <_L v$ for each $v \in C_i$.
        Since the clique point vertices are not adjacent in $\augH$, they are linearly ordered by $\precoverlineaugH$ and thus their order is the same in both $<_L$ and $<_R$.
        In the following we assume that the clique point vertices~$\{c_1, \dots, c_k\}$ are numbered in this order.
        This yields the following structure.
        Let $L_1$ be the set of all vertices with $v <_L c_1$ and $L_i = \{v \mid c_{i-1} <_L v <_L c_i\}$ for $i\in \{2,\dots,k\}$.
        Similarly, let $R_k$ be the set of all vertices with $c_k <_R v$ and $R_j = \{v \mid c_j <_R v <_R c_{j+1}\}$ for $j\in\{ 1,\dots, k-1\}$.
        The sets $L_1, \dots, L_k$ and $R_1, \dots, R_k$ are ordered according to $<_L$ and $<_R$, respectively; see \cref{fig:sections}.

        \begin{figure}
            \centering
            \includegraphics[]{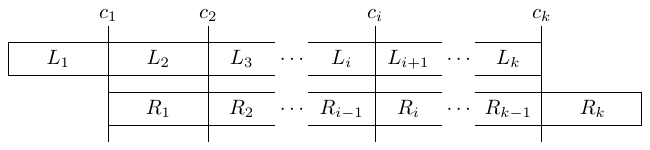}
            \caption{Partitioning $V(G)$ into $L_1,\dots,L_k$ and 
            into $R_1,\dots,R_k$ using the clique point vertices.}
            \label{fig:sections}
        \end{figure}

        Consider the sequence $L_1 c_1 c_1 R_1 L_2 c_2 c_2 R_2 \dots L_k c_k c_k R_k$, which contains each vertex of~$V^+$ exactly twice.
        We construct an interval representation $\I^+ = \{I_v\mid v \in V^+\}$ by choosing for each vertex $v$ the interval $I_v = [l_v, r_v]$ where $l_v$ and $r_v$ denote the position of the first and the second occurrence of $v$ in this sequence.
        To argue that $\I^+$ is a $\varphi^+$-orientable interval representation of $G^+$,
        we verify that $\I^+$  correctly represents all its edges, arcs, and non-edges.

        Consider $\{u,v\} \in \ECcont$ and without loss of generality, assume $u\precaugH v$.
        Then we have $v <_L u$ and $u <_R v$.
        This implies $l_v < l_u < r_u < r_v$, which is the desired containment~$I_u \subseteq I_v$.
    
        Moreover, a vertex $v\in V$ is adjacent to a clique point vertex $c_i$ if and only if $v <_L c_i <_R v$, so if $v$ and $c_i$ are non-adjacent then either $l_v > l_{c_i}$ or $r_{c_i}> r_v$. Hence the intervals~$I_v$ and~$I_{c_i}$ are disjoint, because no other interval have endpoints inside $I_{c_i}$.
          
        We showed that adjacencies related to clique-point vertices are represented properly, so we restrict our attention to representation of the graph $G$ only. Recall that the edges of \Econt were already discussed.

        Assume that $u$ and $v$ are not adjacent in $G$.
        Let $c_i, c_j$ be two clique point vertices that are adjacent to $u$ and~$v$, respectively, such that $|i - j|$ is minimized.
        Since $u$ and $v$ are not adjacent, we have~$\{u,c_j\} \notin \ECcont$, and $\{v,c_i\} \notin \ECcont$.
        In particular, $c_i \neq c_j$, so we without loss of generality assume that $i < j$, and in immediate consequence $c_i <_L c_j$ and $c_i <_R c_j$.
        By minimality we have $r_u < l_{c_{i+1}} $ and $r_{c_{j-1}} < l_v$. Even if $i$ and $j$ were consecutive, as $R_i$ is placed before $L_{i+1}$ in the sequence, we get that $r_u < l_v$. Hence $I_u \cap I_v = \emptyset$ as required.

        Assume that $u$ and $v$ form an arc or a non-containment edge.
        Since $\{u,v\} \notin E(\augH)$, similar to the clique point vertices above, the order of $u$ and $v$ is the same in both~$<_L$ and~$<_R$.
        We assume without loss of generality $u \precoverlineaugH v$ and hence $u <_L v$ and $u <_R v$.
        Since~$u$ and~$v$ are adjacent in $G$, there exists a clique point vertex $c_i$ with $u,v \in C_i$ and $u <_L v <_L c_i$ and $c_i <_R u <_R v$.
        Therefore, $l_u < l_v < l_{c_i} < r_{c_i} < r_u < r_v$, which implies that $I_u$ and $I_v$ overlap and $I_u$ starts first.

        We now argue that this overlap together with $\varphi$ gives a correct representation of
        an edge between $u$ and $v$.
        If $\{u,v\}$ is an undirected edge of $G$, then by assumption $\varphi(u) \neq \varphi(v)$ and 
        $\{u,v\}$ is correctly represented by the intervals $I_u$ and $I_v$ with orientation~$\varphi(u) \neq \varphi(v)$.
        If $(u,v)$ is an arc in $G$, then by assumption $\varphi(u) = \varphi(v)$.
        As by condition (ii) the order~$\precoverlineaugH$ extends~$\prec_\varphi$, the assumption $u \precoverlineaugH v$ and the
        definition of $\prec_\varphi$ imply $\varphi(u) = \varphi(v) = 1$, therefore the arc is represented correctly.
        Similarly, if $(v,u)\in \Arcs(G)$, then $\varphi(u) = \varphi(v) = -1$, and again the arc representation is correct.
    \end{appendixproof}

    The following theorem is the main result of this section.

\thmrecognizeecont*

\begin{proof}
    By \cref{lem:linear-time-phi} we compute a feasible orientation $\varphi$ of $G$ in linear time.
    Thus, $G$ is an oriented interval graph if and only if it has a $\varphi$-oriented interval representation.
    Recall that by \cref{thm:orientable}, $G$ is a $\varphi$-orientable interval graph, if and only if (i)
        $\augH=(V^+,\ECcont)$ admits a transitive orientation $\precaugH$ such that each $c_i$ is a minimal element and (ii) $\overline{\augH}$ admits a transitive orientation $\precoverlineaugH$ that extends $\prec_\varphi$.
    The problem of extending a partial transitive orientation of a graph with $n$ vertices and $m$ edges can be solved in $O(n+m)$ time~\cite{Muench2024}.
    In our setting, the difficulty with linear time lies in checking the existence of $\precoverlineaugH$ as the complement graph $\overline{\augH}$ may be too dense.
    To solve this issue, we reduce checking the existence of $\precaugH$ and $\precoverlineaugH$ to checking the existence 
    of a matching representation of $\augH$ that satisfies certain constraints.
    We then solve the latter problem by using lowest common ancestors and modular decompositions; see \cref{subsec:linear-time-recog} for the details.
\end{proof}

We note that the same characterization and algorithms apply to the case where, in addition, we prescribe a clique ordering~$\sigma = C_1 < C_2 < \cdots < C_k$ of the required interval representation, as it suffices to add~$(c_1,c_2), (c_2,c_3),\dots, (c_{k-1},c_k)$ to~$\prec_\varphi$ in \cref{thm:orientable}.  Since the additional constraints are of the same type, the linear-time algorithm from \cref{subsec:linear-time-recog} can handle them without modification.


\begin{corollary}
  \label{cor:linear-time-given-sigma}
  Let $G$ be a mixed graph with containment edges
  $\Econt \subseteq E$, let~$\varphi$ be an orientation of~$G$, and
  let $\sigma$ be a clique order of~$G$.  There is a linear-time
  algorithm that decides whether $G$ admits a $\varphi$-oriented
  interval representation with clique order~$\sigma$ where precisely the edges in~$\Econt$ are represented by interval inclusions.
\end{corollary}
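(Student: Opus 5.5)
We prove the corollary by reducing the prescribed clique order to extra constraints of the same kind already handled by \cref{thm:orientable}. Here $\varphi$ is given rather than computed, so we skip \cref{lem:linear-time-phi}.

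We first perform the routine checks. Every edge of $G-\Econt$ must be consistent with $\varphi$: arcs must join vertices of equal orientation, and non-containment edges must join vertices of opposite orientation. If this fails, we reject. We then compute the maximal cliques, which are given by~$\sigma$; since $\sigma$ is a clique order of an interval graph, their total size is linear. Next we build $G^+$, $\augH$, and the placement order~$\prec_\varphi$. Finally we define the augmented relation $\prec'$ as $\prec_\varphi$ together with the pairs $c_1\prec' c_2\prec'\dots\prec' c_k$.

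The core claim is an analogue of \cref{thm:orientable}. $G$ admits a $\varphi$-oriented representation with containment edges exactly $\Econt$ and clique order $\sigma$ if and only if (i) $\augH$ has a transitive orientation in which every $c_i$ is minimal, and (ii) $\overline{\augH}$ has a transitive orientation extending $\prec'$. Note that each pair $c_i,c_{i+1}$ is non-adjacent in $\augH$, so it is indeed an edge of $\overline{\augH}$.

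For necessity, take the representation and the augmented representation $\I^+$ from \cref{lem:g-gc}, and define $\precoverlineaugH$ via $<_L$ and $<_R$ as in \cref{lem:perm-graph-to}. The clique point intervals lie inside the clique regions, and these regions appear in the order $\sigma$. Hence $c_i$ precedes $c_{i+1}$ in both $<_L$ and $<_R$, so the orientation extends $\prec'$.

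For sufficiency, we rerun the construction in the proof of \cref{thm:orientable}. There the clique points were numbered by their order under $\precoverlineaugH$, and the sequence $L_1c_1c_1R_1\dots L_kc_kc_kR_k$ places the point of $C_i$ at position $i$. Since the orientation now extends $c_1\prec' \dots\prec' c_k$, this numbering coincides with $\sigma$. Therefore the constructed representation has clique order exactly $\sigma$. Its containment edges are exactly $\Econt$: every $\Econt$ edge is realized as a containment, and every other adjacency was shown there to be realized as an overlap.

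For the running time we invoke the linear-time procedure of \cref{subsec:linear-time-recog} with $\prec'$ in place of $\prec_\varphi$. The main point to verify is that this procedure treats $\prec_\varphi$ only as an arbitrary set of forced comparable pairs of $\overline{\augH}$. That is, it handles a partial orientation of the complement, given as a list of pairs between non-adjacent vertices of $\augH$, without relying on any special structure of $\prec_\varphi$ such as its arcs coming from $G$. The $k-1$ added pairs are again non-edges of $\augH$ with prescribed direction, and adding them increases the input size only linearly. So once this point is confirmed, the procedure applies unchanged and yields the claimed linear running time.
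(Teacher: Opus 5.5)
Your proposal is correct and is essentially the paper's argument: you add the pairs $(c_1,c_2),\dots,(c_{k-1},c_k)$ to $\prec_\varphi$ in \cref{thm:orientable} and run the constrained matching-representation test of \cref{lem:match-constr} unchanged, and since that lemma accepts an arbitrary constraint set $\Gamma$, the point you flag as needing confirmation does hold. Your explicit preliminary check that arcs join equally oriented vertices and edges of $E\setminus\Econt$ join oppositely oriented ones is a detail the paper leaves implicit (it is the ``by assumption'' step in the sufficiency proof of \cref{thm:orientable}), and it is genuinely needed here because $\varphi$ is prescribed rather than computed.
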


\begin{toappendix}
\subsection{Linear-Time Algorithm}
\label{subsec:linear-time-recog}

As our algorithm builds on the so-called modular decomposition, we first give a description of this structure.

\subparagraph{Modular Decomposition.}
Let $H$ be an undirected graph.
A \emph{module} of~$H$ is a non-empty set~$M \subseteq V(H)$ of vertices such that every vertex $u \in V(H) \setminus M$ is either adjacent to every vertex in~$M$ or to none of them.
Note that $V(H)$ itself as well as the singleton subsets are modules; we call them the~\emph{trivial} modules of~$H$. 
A module~$M \subsetneq V$ is called~\emph{maximal} if no nontrivial module contains $M$ as a proper subset.

Let~$T$ be a rooted tree with leaf set~$V(H)$ such that, for every node~$\mu$ in~$T$, the leaf set~$M_\mu$ of the subtree of~$T$ rooted at~$\mu$ is a module of~$H$.
We associate with each inner node~$\mu$ in~$T$ a \emph{quotient graph}~$H_\mu$ which is obtained from~the subgraph $H[M_\mu]$ of~$H$ induced by $M_\mu$
by contracting $M_\nu$ for each child~$\nu$ of~$\mu$ into a single vertex; see \cref{fig:modular-decomoposition}.
We call a graph \emph{prime} if it has at least three vertices and no non-trivial modules.
We call a node~$\mu$ in~$T$ \emph{empty}, \emph{complete},
or~\emph{prime} if its corresponding quotient graph is edgeless,
complete, or prime, respectively.
The~\emph{modular decomposition} of~$H$, introduced by
Gallai~\cite{gallai1967transitiv}, is the uniquely defined rooted tree~$T$ with leaf set~$V(H)$ such that (i)~for every node~$\mu$ in~$T$, $M_\mu$ is a module of~$H$, (ii)~every quotient graph is empty, complete or prime and (iii)~no two adjacent nodes are both empty or both complete; see \cref{fig:modular-decomoposition}.
For every non-prime node~$\mu$ with children $\nu_1, \dots, \nu_k$, the sets $M_{\nu_1}, \dots, M_{\nu_k}$ are exactly the maximal modules in $H[M_\mu]$.
For a node~$\mu$ in $T$ and a vertex $v \in M_\mu$ let $\rep_{\mu}(v)$ denote vertex of $H_\mu$ which corresponds to the child~$\nu$ of~$\mu$ such that $v \in M_\nu$.
The modular decomposition of a graph $H$ can be computed
in~$O(|V(H)|+|E(H)|)$ time~\cite{McConnell1999,corneil2024recursivelineartimemodular}.

\begin{figure}
    \centering
        \textsfbf{(a)} \ \includegraphics[page=1,valign=t]{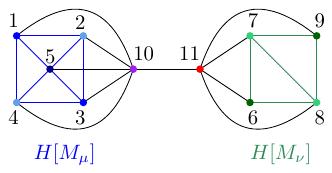} \hfil
        \textsfbf{(b)} \includegraphics[page=2,valign=t]{fig/perm-graph-decomp-2.pdf}
    \caption{(a) A graph~$H$ with two modules emphasized $M_\mu=\{1,\dots,5\}$ and $M_\nu=\{6,\dots,9\}$; (b) its modular decomposition together with the quotient graphs. Only the root is a prime node, while three nodes are complete, e.g. $\mu$ and $\nu$ and three are empty.}
    \label{fig:modular-decomoposition}
\end{figure}

\subparagraph{The Algorithm.}

We use the fact that the conditions from \cref{thm:orientable} can be rephrased as a constrained representation problem on permutation graphs to devise a linear-time algorithm.
Let~$H$ be a permutation graph.
    A \emph{constraint} is a pair $(u,v)$ of two vertices of $H$ and a representation $D = (<_L, <_R)$ of $H$
    \emph{satisfies} $(u,v)$ if $u <_R v$.
    A representation satisfies a set of constraints if it satisfies every constraint in the set.

    To check the existence of a matching representation that satisfies a set of constraints~$\Gamma$, we use the modular decomposition $T$ of $H$.
    Münch et al.~\cite[Theorem 5]{Muench2024} show that the choice of a matching representation of $H$ is equivalent to the simultaneous choice of a representation for the quotient graph $H_\mu$ for each node $\mu$ in $T$.
    For each node~$\mu$ in $T$ let~$D_\mu$ denote a matching representation of the quotient graph~$H_\mu$. 
    Graphically, to obtain a matching representation for $H[M_\mu]$, consider 
    the diagram of $D_\mu$ and for each child $\nu$ of $\mu$, replace the line segment that represents $\nu$ by the diagram of the matching representation of $H[M_\nu]$; see \cref{fig:perm-graph-decomp}.  Conversely, a matching representation $D$ of $H$ yields a matching representation~$D_\mu$ of every~$H_\mu$.

    \begin{figure}
        \centering
        \textsfbf{(a)} \includegraphics[page=3,valign=t]{fig/perm-graph-decomp-2.pdf} \hfil  
        \textsfbf{(b)} \ \includegraphics[page=4,valign=t]{fig/perm-graph-decomp-2.pdf} \hfil
        \textsfbf{(c)} \ \includegraphics[page=5,valign=t]{fig/perm-graph-decomp-2.pdf} 
        \caption{$(a)$ Assignment of matching representations to the quotient graphs in $T$. $(b)$ The corresponding matching representations for $H[M_\mu]$ and $H[M_\nu]$. $(c)$ The matching representation of $H$ composed from the matching representations of the quotient graphs.}
        \label{fig:perm-graph-decomp}
    \end{figure}

    We show that the constraints in $\Gamma$ can be expressed as constraints $\Gamma_\mu$ on the representations of the quotient graphs $H_\mu$.
    Since the representations of quotient graphs have a simple structure, we can check efficiently whether each of them admits a representation that satisfies the constraints $\Gamma_\mu$.
    If this is the case, combining them as described above yields the desired matching representation of $H$.
    We now describe this procedure in detail.

    Let $D = (<_L, <_R)$ be a matching representation of $H$ and for each node~$\mu$ in~$T$ let $D_\mu = (<_L^\mu, <_R^\mu)$ be the corresponding matching representation of $H_\mu$. 
    The key observation is that for any two vertices $u,v$ of $H$ the following equivalence holds: $u <_R v$ if and only if
    $\rep_\mu(u) <_R^\mu \rep_\mu(v)$, where~$\mu$ corresponds to the smallest module $M_\mu$
    containing both $u$ and $v$. This $\mu$ is the lowest common ancestor $\mathrm{lca}(u,v)$ of the leaves $u$ and $v$ in the tree $T$.
    Thus we can partition the constraints in~$\Gamma$ according to the quotient graph they restrict.
    Let $\Gamma_\mu \coloneqq \{(\rep_\mu(u), \rep_\mu(v)) \mid (u,v) \in \Gamma \land \mu = \mathrm{lca}(u,v)\}$. 
    From the previous discussion, we immediately get the following statement.

    \begin{lemma}
        \label{lem:sat-mu-global-local}
        $D$ satisfies $\Gamma$ if and only if each $D_\mu$ satisfies~$\Gamma_\mu$.
    \end{lemma}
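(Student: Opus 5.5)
The plan is to reduce the statement to the key observation recorded just before it: for any two distinct vertices $u,v$ of $H$, with $\mu=\mathrm{lca}(u,v)$, we have $u<_R v$ if and only if $\rep_\mu(u) <_R^\mu \rep_\mu(v)$. Then I prove the two directions separately, one constraint at a time.

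First I justify that observation, since everything rests on it. By \cite[Theorem 5]{Muench2024}, $D$ is obtained from the quotient representations by recursive substitution. Let $\nu_u\neq\nu_v$ be the children of $\mu$ whose modules contain $u$ and $v$; they are distinct because $\mu$ is the lowest common ancestor. In the diagram of $D$ restricted to $M_\mu$, the whole sub-diagram of $M_{\nu_u}$ replaces the segment $\rep_\mu(u)$, and likewise for $\nu_v$. Each module occupies a contiguous block in both $<_R$ and $<_L$ within the block of $M_\mu$. Hence every element of $M_{\nu_u}$ precedes every element of $M_{\nu_v}$ in $<_R$ exactly when $\rep_\mu(u) <_R^\mu \rep_\mu(v)$. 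Applying this to $u,v$ themselves gives the equivalence.

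For the forward direction, suppose $D$ satisfies $\Gamma$. Take any element of $\Gamma_\mu$. It has the form $(\rep_\mu(u),\rep_\mu(v))$ for some $(u,v)\in\Gamma$ with $\mathrm{lca}(u,v)=\mu$. Since $u<_R v$, the observation gives $\rep_\mu(u)<_R^\mu\rep_\mu(v)$, so $D_\mu$ satisfies $\Gamma_\mu$.

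For the converse, suppose each $D_\mu$ satisfies $\Gamma_\mu$, where the $D_\mu$ are the quotient representations induced by $D$; equivalently, $D$ is composed from them. Take any $(u,v)\in\Gamma$ and set $\mu=\mathrm{lca}(u,v)$. Then $(\rep_\mu(u),\rep_\mu(v))\in\Gamma_\mu$, so $\rep_\mu(u)<_R^\mu\rep_\mu(v)$, and the observation gives $u<_R v$. I assume constraints satisfy $u\ne v$; otherwise they are trivially unsatisfiable and must be excluded or rejected beforehand.

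The main obstacle is the contiguity and substitution fact underlying the observation. I would not reprove it but cite \cite[Theorem 5]{Muench2024} together with the graphical substitution description given above. I would also note that in the converse we interpret the $D_\mu$ as those induced by, or composing, $D$. That identification is exactly the bijection that theorem provides.
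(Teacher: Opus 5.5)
Your proposal is correct and follows the paper's route: the paper states the lemma as an immediate consequence of the key observation that $u<_R v$ if and only if $\rep_\mu(u) <_R^\mu \rep_\mu(v)$ for $\mu=\mathrm{lca}(u,v)$, and you apply that observation in both directions in the same way. What you add is a justification of the observation via contiguity of the decomposition modules, citing Münch et al.'s Theorem 5, and the explicit caveat excluding constraints with $u=v$; both are sound refinements of what the paper leaves implicit.
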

 
    We now show how to test in linear time if a permutation graph $H$ admits a matching representation that satisfies $\Gamma$.

    \begin{lemma}
        \label{lem:match-constr}
        Let $H=(V,E)$ be a permutation graph and let $\Gamma$ be a set of constraints.
        There is an algorithm that decides in time $O(|V|+|E|+|\Gamma|)$, whether $H$ admits a matching representation that satisfies~$\Gamma$.
    \end{lemma}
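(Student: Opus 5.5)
We compute the modular decomposition $T$ of $H$ in $O(|V|+|E|)$ time using~\cite{McConnell1999}. Then, for every constraint $(u,v)\in\Gamma$, we determine $\mu=\mathrm{lca}(u,v)$ together with $\rep_\mu(u)$ and $\rep_\mu(v)$. An offline LCA algorithm answers all $|\Gamma|$ queries in $O(|V|+|\Gamma|)$ time after linear preprocessing. The child of $\mu$ on the path to $u$ is found with a level-ancestor query, or as a by-product of the LCA computation. This distributes $\Gamma$ into the sets $\Gamma_\mu$ in total time $O(|V|+|E|+|\Gamma|)$. By \cref{lem:sat-mu-global-local} and the composition result of Münch et al.~\cite[Theorem 5]{Muench2024}, $H$ has a matching representation satisfying $\Gamma$ if and only if, for every node $\mu$, the quotient graph $H_\mu$ has a matching representation $D_\mu$ satisfying $\Gamma_\mu$. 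It therefore remains to solve the problem locally at each node, in time linear in $|V(H_\mu)|+|E(H_\mu)|+|\Gamma_\mu|$. These sizes sum to $O(|V|+|E|+|\Gamma|)$, since the quotient edges are charged to distinct edges of $H$ and the numbers of children sum to $O(|V|)$.

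We split into the three node types. If $\mu$ is \emph{empty}, then $<_L^\mu=<_R^\mu$ is any linear order, so we need a linear order on the children extending $\Gamma_\mu$. This exists if and only if the digraph $(V(H_\mu),\Gamma_\mu)$ is acyclic, which a topological sort checks. If $\mu$ is \emph{complete}, then $<_L^\mu$ is the reverse of $<_R^\mu$ and again only $<_R^\mu$ is constrained, so we apply the same acyclicity test. If $\mu$ is \emph{prime}, then $H_\mu$ is a prime permutation graph, so its transitive orientations are unique up to reversal, as are those of its complement. Hence $H_\mu$ has only a constant number (four) of matching representations, obtained from one another by reversing $<_L$, $<_R$, or both. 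We compute one such representation in linear time with a permutation-graph recognition algorithm, generate the other three, and check each against $\Gamma_\mu$ in $O(|\Gamma_\mu|)$ time using position arrays.

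Finally, if every node succeeds, we compose the local representations as in \cref{fig:perm-graph-decomp} to obtain $D$. Otherwise we reject.

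I expect the main obstacle to be making the preprocessing truly linear. Computing $\rep_\mu(u)$ for each constraint needs an LCA structure that also reports the children of the LCA on the two paths. The quotient graphs must also be built without exceeding $O(|V|+|E|)$. For the quotients, I would mark each edge $\{x,y\}$ of $H$ with $\mathrm{lca}(x,y)$ and keep one copy per pair of children, using bucket sorting for deduplication. A secondary point needing care is the claim about prime nodes: a prime permutation graph has exactly the representations generated by independently reversing the two uniquely determined transitive orientations (Gallai). This can be cited rather than reproved.
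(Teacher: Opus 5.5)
Your proposal is correct and follows the paper's proof essentially step by step: modular decomposition, distributing $\Gamma$ to the sets $\Gamma_\mu$ via lowest common ancestors (the paper cites \cite[Lemma~1]{Muench2024} where you use LCA and level-ancestor queries), a topological sort at empty and complete nodes, and trying the four representations at prime nodes. One slip to fix: the four representations of a prime $H_\mu$ are $(<_L,<_R)$, $(<_R,<_L)$ and their simultaneous reversals, because reversing only one of $<_L$, $<_R$ yields a representation of $\overline{H_\mu}$, so your parenthetical would miss the candidates $<_R^\mu \in \{<_L, \mathrm{rev}(<_L)\}$, whereas your final description (independently reversing the two transitive orientations) gives the correct set.
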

    \begin{proof}
        We first compute the modular decomposition $T$ of $H$ in linear time~\cite{McConnell1999}.
        By \cref{lem:sat-mu-global-local} it suffices to test for each node $\mu$ of $T$ if $H_\mu$ admits a representation $D_\mu$ that satisfies $\Gamma_\mu$.
        We argue that such a test can be performed in time linear in the size of $\Gamma_\mu$ and~$H_\mu$.

        We distinguish the three types of nodes in $T$: complete, empty and prime.
        If $\mu$ is complete or empty, then any order $<_R^\mu$ of the vertices of $H_\mu$ uniquely defines a matching
        representation~$(<_L^\mu, <_R^\mu)$ of $H_\mu$: If $\mu$ is empty, then $<_L^\mu$ must be identical with $<_R^\mu$, while if~$\mu$ is complete then $<_L^\mu$ must be the
        reverse of $<_R^\mu$.
        It thus suffices to topologically sort the vertices of $H_\mu$ according to $\Gamma_\mu$, which can be done in the claimed time.
        If $\mu$ is prime, then it is known that $H_\mu$ admits only four distinct matching representations~\cite{Muench2024} and for each of them, it is possible to test whether it satisfies $\Gamma_\mu$ in the claimed running time.

        By~\cite[Lemma 1]{Muench2024}, all sets $\Gamma_\mu$ can be computed in time $O(|\Gamma|)$. Moreover, $\sum_{\mu \in V(T)} |\Gamma_\mu| = |\Gamma|$ and $\sum_{\mu \in V(T)} |V(H_\mu)|+|E(H_\mu)| = O(|V|+|E|)$, hence  the time to process all nodes of $T$ is linear in $|V| + |E| + |\Gamma|$.
    \end{proof}

    We now apply this in the context of oriented interval graphs.


    \thmrecognizeecont*       
   
    \begin{proof}
        Using \cref{lem:linear-time-phi}, we can compute a feasible orientation~$\varphi$, if~$G$ admits one. It then suffices to check whether~$G$ is a~$\varphi$-orientable interval graph.  Recall that by \cref{thm:orientable}, this is the case, if and only if 
        (i) $\augH=(V^+,\ECcont)$ admits a transitive orientation $\precaugH$ such that each $c_i$ is a minimal element and 
        (ii) $\overline{\augH}$ admits a transitive orientation $\precoverlineaugH$ that extends $\prec_\varphi$. We claim that these conditions are equivalent to the existence of a matching representation $D=(<_L,<_R)$ of $\augH$ that satisfies the constraints $\Gamma = \{(u,v)\mid u \prec_\varphi v \} \cup \{(c_i,v) \mid v\in C_i\in\mathcal{C}(G^+) \}$.

        Recall that the two linear orders $<_R$ and $<_L$ are determined by formulae  $u<_R v \Longleftrightarrow u \precaugH v \lor u \precoverlineaugH v$ and $u<_L v \Longleftrightarrow u \succaugH v \lor u \precoverlineaugH v$. In particular, if $\precoverlineaugH$ extends $\prec_\varphi$, then also $<_R$ extends $\prec_\varphi$, and hence $\{(u,v)\mid u \prec_\varphi v \} \subseteq \Gamma$. Analogously, if each $c_i$ is a minimal element in $\precaugH$, then we get $c_i<_R v$ for each clique point $c_i$ and $v\in C_i$.%

        Conversely, if constraints $\{(c_i,v) \mid v\in C_i\in\mathcal{C}(G^+)\}$ are satisfied in $D$, then $c_i<_R v$ for each $v\in C_i$, hence $c_i$ is a source in the order $<_R$, as $c_i$ does not participate in any other edges of $\precaugH$.
        Analogously, if the constraints $\{(u,v)\mid u \prec_\varphi v \}$ are satisfied, the order $<_R$ extends $\prec_\varphi$. Moreover the transitive orientation of $\overline{H}$

        is the intersection of the linear orders $<_L$ and $<_R$. In addition, when $u$ and $v$ are comparable in 
        $\precoverlineaugH$, they are non-adjacent in $\augH$, so $u<_R v$, hence also $\precoverlineaugH$ extends $\prec_\varphi$ as required by the condition (ii).

        Regarding the time complexity, \cref{lem:match-constr} guarantees that the existence of $D$ can be tested in time $O(|V^+| + |\ECcont| + |\Gamma|)$.
        Recall that $\augH$ is a subgraph of $G^+$, which is obtained from $G$ by adding clique point vertices. As each maximal clique $C_i$ is the first in $\sigma$ for some vertex as well as the last for another one, we have $|V^+| \le 2|V|$. Similarly, $|\ECcont|\le |V|+|E|+|\Arcs|$, because the number of edges added by inserting $c_i$ to $C_i$ is by one greater than the number of edges incident in $C_i$ with any vertex that has there its first occurrence.
        Moreover, since every pair comparable by $\prec_\varphi$ corresponds bijectively to an arc of \Arcs, it follows that $|\Gamma| \in O(|E|+|\Arcs|)$ as well.    
    \end{proof}
\end{toappendix}

\section{Prescribed Orientations}
\label{sec:mpq-short}

    In this section, we assume that $\varphi$ is given as part of the input.      The full version of this section can be found in~\Cref{sec:given-phi}.
    We outline an algorithm that recognizes $\varphi$-orientable interval graphs using clique orderings.
    The proof works in two steps.
    First, we show that we can compute a clique
    ordering~$\sigma$ of~$G$ such that $G$ has a $\varphi$-oriented
    interval representation~\I if and only if the pair~$(\varphi, \sigma)$ is consistent.
    Then, we construct a $\varphi$-orientable interval representation of
    $G$ with clique ordering $\sigma$ if and only if $G$ is a $\varphi$-oriented interval graph.
    
    A \emph{PQ-tree}~$T$ over a finite universe~$U$ is a rooted ordered tree whose leaves correspond to the elements of~$U$ and whose internal nodes are of two types called \emph{P-} and \emph{Q-nodes}. Such a PQ-tree represents all permutations of its leaves that can be obtained by arbitrarily reordering the children of P-nodes and reversing the order of children of any set of Q-nodes.  The PQ-tree of an interval graph $G$ is the PQ-tree that has as leaves the maximal cliques of $G$ and represents the clique orders of $G$~\cite{BoothL1976}.
    We call the edges of the PQ-tree \emph{links} and for a Q-node $x$ with children~$y_1,\dots,y_k$ (in this order), we call a set of consecutive links $S_{i,j} = \{xy_i, \dots, xy_j\}$ a \emph{segment} of~$x$. 

     \begin{figure}[b]
        \begin{subcaptiongroup}
        \begin{minipage}{.4\textwidth}          
        \textsfbf{(a)}\includegraphics[page=1,valign=t]{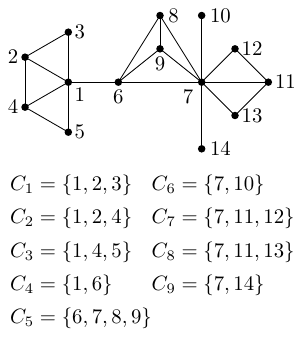}
        \phantomcaption\label{fig:MPQ-tree-example-a}\hfil
        \end{minipage}%
        \begin{minipage}{.55\textwidth}
        \textsfbf{(b)}\hspace{2.5mm}\includegraphics[page=3,valign=t]{fig/MPQ-tree-example} 
        \phantomcaption\label{fig:MPQ-tree-example-b}\hfil       
        \medskip
        
        \textsfbf{(c)} \includegraphics[page=4,valign=t]{fig/MPQ-tree-example}
        \phantomcaption\label{fig:MPQ-tree-example-c}\hfil
        \end{minipage}
        \end{subcaptiongroup}
        \caption{
        (a) An interval graph $G$ and its maximal cliques; (b)~an MPQ-tree $T$ of $G$ (P-nodes are round, Q-nodes are boxes); (c)~the interval representation corresp.\ to the ordering of the leaves in $T$.
        }
        \label{fig:MPQ-tree-example}
    \end{figure}

    To describe the clique orderings of $G$ we use modified PQ-trees, also called MPQ-trees~\cite{Korte1989}; see~\Cref{fig:MPQ-tree-example} for an example.
    An \emph{MPQ-tree} of an interval graph $G$ consists of the PQ-tree of $G$ with additional annotations that associate the vertices of $G$ (and not just its maximal cliques) with certain parts of the PQ-tree.   Namely,
    we assign every vertex $v$ in $G$ to the node $x$ in $T$ that is the lowest common ancestor of all cliques containing~$v$.  For a node~$x$ of the MPQ-tree, we let~$W_x$ denote the set of vertices assigned to~$x$. In~\Cref{fig:MPQ-tree-example-b}, for example, we have $W_{x_1} = \{1, 6, 7\}$ and $W_{x_4} = \{11\}$.

    Let $x$ be a Q-node~with children~$y_1,\dots,y_k$ and let~$v$ be a vertex that is assigned to~$x$.  It follows from the construction of the PQ-tree for interval graphs that if the subtree rooted at the child~$y_i$ contains a leaf that corresponds to a maximal clique of $G$ that contains~$v$, then all leaves in this subtree correspond to maximal cliques containing~$v$~\cite{Korte1989}.
    Consider the set of links $\{x,y_i\}$
    where all leaves of the subtree rooted at $y_i$ are maximal cliques containing~$v$.
    Since $G$ is an interval graph, the leaves containing~$v$ are consecutive in~$T$.  Hence, the links in this set are consecutive and form a segment of~$x$.  We call this segment the \emph{segment of~$v$} and denote it by~\emph{$S_v$}.
    In~\Cref{fig:MPQ-tree-example-b}, for example, we have $S_1 = \{x_1x_2, x_1C_4\}$ and $S_2 = \{x_2C_1, x_2C_2\}$.
    For~$1 \le i \le k$, we let \emph{$L_i$} $\subseteq W_x$ denote the set of vertices that are assigned to $x$ and whose segment contains~$\{x,y_i\}$. In~\Cref{fig:MPQ-tree-example-b}, for example, we have $L_1 = \{1\}$, $L_2 = \{1, 6\}$, $L_3 = \{6, 7\}$ and $L_4 = \{7\}$ for~$x_1$.
    The maximal clique represented by a leaf $x$ of~$T$ is the set $W_x \cup W_{P} \cup W_{Q}$, where $W_{P}$ denotes the set containing all vertices of~$G$ that are assigned to P-nodes on the path $p$ from $x$ to the root, and $W_{Q}$ denotes the set containing the vertices in $L_i$ for every edge $qy_i$ in $p$ where $y_i$ is a child of a Q-node $q$.
    
    We say that a vertex $v$ of $G$ is \emph{above} a node $x$ if $v$ is assigned to a node on the path from $x$ to the root in $T$, excluding~$x$.
    Similarly, we say that $v$ is \emph{below} $x$ if $v$ is assigned to a node on a path from $x$ to some leaf in the subtree $T_x$ of $T$ rooted at~$x$, excluding $x$.
    The set of all vertices of $G$ that are above node $x$ is referred to as $A_x^T$ and the set of all vertices below $x$ as $B_x^T$.
    In~\Cref{fig:MPQ-tree-example-b}, for example, we have $A_{x_2}^T = \{1, 6, 7\}$ and $B_{x_2}^T = \{3, 5\}$.

    Since $T$ is an ordered tree, the leaves of $T$ are ordered.
    We say that an interval representation \I of $G$ \emph{agrees} with $T$ if the order of the maximal cliques of $G$ in \I coincides with the order of the maximal cliques described by the leaves of $T$.
    We let \emph{$\pi_x^T$} denote the order of the children of $x$ in $T$.
    A \emph{rotation} of an MPQ-tree $T$ is an MPQ-tree $T'$ obtained from $T$ by arbitrarily permuting the order of the children of P-nodes and reversing the order of the children of any subset of Q-nodes in $T$.
    We say that an MPQ-tree $T$ is \emph{$\varphi$-orientable} if it agrees with some $\varphi$-orientable interval representation of $G$.
%
    \noindent Let $a$ be an arc in $G$ with endpoints $u$ and $v$.
    If $u$ and $v$ are assigned to the same node of~$T$, then we call $a$ \emph{horizontal}.
    Otherwise, $a$ is \emph{vertical}, and $u$ and $v$ are assigned to distinct nodes $x_u$ and $x_v$ of~$T$, respectively.
    In this case, if $x_u$ is an ancestor of $x_v$ in $T$, then $u$ is called the \emph{upper} and $v$ the \emph{lower endpoint} of $a$ .
    
    We show that certain vertical and horizontal arcs impose restrictions on the clique order, and thus 
    on the order of the children of some nodes in $T$.
    Let $x$ be a node of $T$ with children $y_1, \dots, y_k$ and let $B_i = B_{y_i}^T \cup W_{y_i}$ with $i \in \{1, \dots, k\}$.
    We say that $y_i$ is a \emph{constraining child} of $x$ if $x$ is a P-node and there exists a vertical arc $a$ in $G$ with the lower endpoint in $B_i$ and the upper endpoint in $W_x \cup A_x^T$.
    We say that $x$ is \emph{constrained} if one of the following holds:
    \begin{itemize}
        \item $x$ is a P-node and $x$ has a constraining child $y_i$ or
        \item $x$ is a Q-node and there is an arc~$a$ with endpoints~$u$ and~$v$ such that
        \begin{description}
            \item[(V1)] $a$ is vertical, $u \in B_x^T$, and $v \in A_x^T$ or 
            \item[(V2)] $a$ is vertical, $u \in W_x$, $S_u \neq S_{1,k}$, and $v \in A_x^T$ or 
            \item[(V3)] $a$ is vertical, $u \in B_x^T$, and $v \in W_x$ or
            \item[(H)] \phantom{$a$}$a$ is horizontal, $u,v \in W_x$, and $S_u \neq S_v$.
        \end{description}
    \end{itemize}
    \noindent \Cref{fig:constr-children} illustrates constrained nodes.
    \begin{figure}
        \centering
        \textsfbf{(a)}\ \includegraphics[page=1,valign=t]{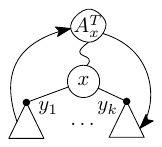} \hfil
        \textsfbf{(b)}\ \includegraphics[page=2,valign=t]{fig/constrained-children} \hfil
        \textsfbf{(c)}\ \includegraphics[page=3,valign=t]{fig/constrained-children}
        \caption{(a) Constrained P-node with constraining children $y_1$ and $y_k$; (b)~examples for the three types of vertical arcs constraining a Q-node~$x$; (c)~a horizontal arc constraining a Q-node~$x$.}
        \label{fig:constr-children}
    \end{figure}
    We say that the position of a child $y_i$ of $x$ is \emph{fixed}, if~$y_i$ has the same position in $\pi_x^T$ for every $\varphi$-oriented MPQ-tree $T$ of $G$.
    We say that $\pi_x^T$ is fixed if it is the same for every $\varphi$-orientable rotation $T$ of the MPQ-tree of $G$.
    Next, we show that the pair~$(\varphi, \sigma)$ is consistent, that is, $G$ admits a $\varphi$-orientable interval representation with clique order $\sigma$, if and only if the rotation~$T$ of the MPQ-tree of $G$ corresponding to~$\sigma$ satisfies the following constraints.
    \begin{enumerate}[(a)]
        \item If $x$ is a constrained P-node, then the position of each constraining child $y_i$ in $\pi_x^T$ is fixed and must be either the first or the last child of $x$.
        \item If $x$ is a constrained Q-node, then $\pi_x^T$ is fixed, i.e., for every $\varphi$-orientable rotation~$T'$ of~$T$, we have $\pi_x^{T'} = \pi_x^T$.
    \end{enumerate}
    We say that a clique order~$\sigma$ of~$G$ is~\emph{$\varphi$-consistent} if it is the clique order defined by a rotation~$T$ of the MPQ-tree of $G$ that satisfies properties (a) and~(b).
    We give a proof sketch for the following lemma, which is the first part of the main result of this section.

    \begin{restatable}{lemma}{lemcosistentorder}
        \restateref{lem:consistent-order}    
        \label{lem:consistent-order}    
        Let~$G$ be a mixed graph, let~$\varphi$ be an orientation of $G$ and let~$\sigma$ be a~$\varphi$-consistent clique order of $G$.  Then~$G$ admits a~$\varphi$-orientable interval representation if and only if it admits one where the clique order is~$\sigma$.
    \end{restatable}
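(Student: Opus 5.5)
The direction ``$\Leftarrow$'' is trivial, so the plan concerns ``$\Rightarrow$''. Assume $G$ admits a $\varphi$-orientable interval representation $\I'$. Let $T'$ be the rotation of the MPQ-tree of $G$ that agrees with $\I'$, and let $T$ be the rotation that defines $\sigma$. We transform $\I'$ step by step into a $\varphi$-orientable representation agreeing with $T$. Each step changes the order of children at one node $x$, and we process the nodes top-down. At node $x$, the representation agrees with $T$ at all nodes already processed. We then make $\pi_x$ coincide with $\pi_x^T$ while keeping $\varphi$-orientability and leaving the processed ancestors untouched. Since a clique order is determined by the children orders at all nodes, after processing every node the clique order is $\sigma$.

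Both allowed modifications act on the subrepresentation of $W_x\cup B_x^T$. Consider the union of the clique regions of the leaves of $T_x$. This is a contiguous stretch $J$ of the line, and the vertices above $x$ cover all of $J$. We must preserve the containment and overlap types of every pair. The modifications are as follows.
\begin{enumerate}[(1)]
  \item Permuting the children of an unconstrained P-node, or of a P-node while keeping its constraining children fixed.
  \item Reversing an unconstrained Q-node.
\end{enumerate}

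For a P-node $x$, the vertices of $W_x$ span all of $J$, and each child block $B_i$ lies strictly inside $J$. Permuting blocks only translates their intervals. Within a block and between distinct blocks, all relations are preserved. Vertices of $W_x$ still contain every block interval. The only danger concerns a vertex $v$ above $x$ that overlaps (rather than contains) some interval $I_u$ with $u\in B_i$. In that case $I_v$ ends inside $J$, and $u$ must lie in a first or last block. If $\{u,v\}$ is an edge rather than an arc, we can first shrink the intervals so that $I_v$ contains $I_u$. This is allowed because overlap and containment both yield an edge, as long as the orientations differ or the edge is represented by containment anyway. So the only pairs that must keep overlapping are arcs, and these are exactly the arcs that make $y_i$ a constraining child. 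Property (a) says that such children keep their first or last position, so the remaining children can be permuted freely.

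For an unconstrained Q-node, mirror the subrepresentation of $W_x\cup B_x^T$ inside $J$. By the argument of \cref{lem:rev-rep}, this swaps which endpoint is contained in each overlap, and therefore reverses the direction of arcs among the mirrored vertices. There are two issues to handle.
\begin{enumerate}[(i)]
  \item Reversal of internal arcs. Since we keep $\varphi$ and do not reverse it, the arcs internal to $W_x\cup B_x^T$ must survive the mirroring. We handle this by mirroring child subtrees back recursively, that is, by reversing only the order at $x$. What must then be checked are the arcs between different levels: a horizontal arc in $W_x$ with distinct segments (H), and vertical arcs between $W_x$ and $B_x^T$ (V3).
  \item Relations with vertices above $x$. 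These are the arcs of types (V1) and (V2). A vertex of $W_x$ whose segment is the full $S_{1,k}$ behaves like a P-node vertex, so containment can be arranged for it.
\end{enumerate}
An unconstrained $x$ has none of the arcs (V1), (V2), (V3) or (H). In that case every pair whose relation type changes under reversal is either an edge, which can be re-realized by containment or by overlap while keeping its edge status, or a pair that lies within a single child subtree, where the order is unchanged.

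The main obstacle is the ``edge'' case. When an undirected edge is realized by overlap between intervals of equal orientation, we cannot reverse it, and an overlap of opposite orientations might have to become a containment. I would first try to show that an edge $\{u,v\}$ whose realization changes can always be turned into a containment by stretching an endpoint within its segment boundary, without altering any other relation. To do this, I would use the fact that endpoints lying strictly inside the clique-gap between consecutive children of $x$ can be reordered freely among vertices that span the gap. If that fails, I would instead obtain the needed strengthening from the full characterization in the appendix, namely that fixedness of $\pi_x$ at constrained nodes is forced.
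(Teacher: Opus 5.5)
Your P-node step is essentially the paper's argument. First you push every non-constraining block $B_i$ inside all intervals of $W_x\cup A_x^T$ (\cref{lem:unconstrained-contained}). Then you rearrange the nested blocks; the paper does this by adjacent swaps, each using two applications of \cref{lem:rev-rep}. Note that the normalization has to be done block-wise: move the outer endpoints of all intervals of $B_1$ (resp.\ $B_k$) into the common intersection of $W_x\cup A_x^T$, keeping their order, rather than shrinking one pair at a time. This is legitimate precisely because the child is non-constraining.

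The genuine gap is the unconstrained Q-node. That case carries the real content of the lemma, and you leave it open (``I would first try \dots''). Moreover, the plan you sketch for it fails as stated.

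First, $W_x\cup B_x^T$ is the wrong set to mirror. (V2) only restricts vertices of $W_x$ whose segment differs from $S_{1,k}$. So at an unconstrained Q-node, a vertex $u\in W_x$ with $S_u=S_{1,k}$ may still have an arc to some $v\in A_x^T$. Then $I_u$ and $I_v$ must overlap, and every overlap-free window containing $I_u$ also contains $I_v$. Hence \cref{lem:rev-rep} cannot mirror $u$ without mirroring $v$, and your remark that containment ``can be arranged'' for such vertices is false.

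Second, you cannot mirror the child blocks back. Take a vertex $w\in W_x$ whose segment starts or ends at $y_i$. It has an endpoint in the same clique gap as endpoints of intervals of $B_i=B_{y_i}^T\cup W_{y_i}$, and it may overlap them (legitimately, with opposite orientation). Then no overlap-free interval has content exactly $B_i$, so \cref{lem:rev-rep} does not apply. Nothing in your argument turns these $B_x^T$--$W_x$ edges into containments.

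Third, the obstacle you single out is not the real one. By \cref{def:oriented-interval-representation}, an undirected edge is never an overlap of equally oriented intervals. A joint mirror at fixed $\varphi$ leaves opposite-orientation overlaps and containments intact. What actually breaks are overlaps across the boundary of the mirrored set and arcs inside it.

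Your proposed remedy, reordering endpoints freely inside a clique gap, is not valid. Exchanging two endpoints changes the type of that pair and of every pair whose endpoints are crossed. This can destroy arcs, and it can turn equally oriented containment edges into arcs.

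Your fallback does not help either. \cref{lem:constrained-child} is the necessity direction and says nothing about whether an unconstrained Q-node can be flipped.

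What you are missing is the paper's mechanism:
\begin{itemize}
\item Set $U=\{u\in W_x \mid S_u\neq S_{1,k}\}$.
\item Use the absence of (V1)--(V3) and (H) arcs to see that every adjacency between $B_x^T\cup U$ and $A_x^T\cup(W_x\setminus U)$ is an edge.
\item Realize all of these edges as containments in a normalized representation (\cref{cor:constr-cont}).
\item Flip $B_x^T\cup U$ as a whole via \cref{lem:rev-orient}, i.e., as an orientation reversal of an overlap-free block rather than a mirror at fixed $\varphi$.
\end{itemize}
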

    \begin{proof}[Sketch of proof.]
    Recall that a clique order $\sigma$ of $G$ corresponds to the leaf order of an MPQ-tree~$T$ of~$G$.
    Assume $G$ admits a $\varphi$-orientable interval representation \I and let $\sigma$ be a clique order of $G$.
    We show that $G$ admits a $\varphi$-orientable interval representation with clique order $\sigma$ if and only if $\sigma$ is~$\varphi$-consistent.
     We give an example for the necessity; the full proof is shown in~\Cref{lem:constrained-child} in the appendix.
     Let $x$ be a P-node with constraining child $y_i$ and assume there exists a vertical arc $(u,v)$ in $G$ such that $u \in B_i$ and $v \in W_x \cup A_x^T$.
        Observe that $y_i$ must be the first child of $x$ if $\varphi(u) = \varphi(v) = 1$:
        If, for the sake of contradiction, some~$y_j$,~$i \neq j$ is the first child of $x$,
        then there exists a vertex $w \in B_j$ that is not adjacent to~$u$.
        Consider the intervals $I_u = [l_{u}, r_{u}]$, $I_{v} = [l_{v}, r_{v}]$, and $I_{w} = [l_{w}, r_{w}]$ in \I.
        We have~$l_{u} < l_{v}$, as $(u,v)$ is an arc in $G$ and $\varphi(u) = \varphi(v) = 1$, and $r_{w} < l_{u}$, as $T$ puts $y_j$ before $y_i$; see \cref{fig:wrong-child-rotation}.
        Therefore,~$r_{w} < l_{v}$ and thus $I_{v}$ and $I_{w}$ are disjoint.
        This is a contradiction to the properties of the MPQ-tree, by which every vertex in $W_x \cup A_x^T$ must be adjacent to every vertex in $B_x^T$.
        We show similar statements for Q-nodes in the appendix.
     
        \begin{figure}[htb] 
            \centering
            \begin{subcaptiongroup}
            \textsfbf{(a)} \ \includegraphics[page=1,valign=t]{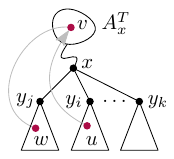} \hfil
            \textsfbf{(b)} \quad \includegraphics[page=2,valign=t]{fig/wrong_child_rotation-1.pdf}
            \end{subcaptiongroup}
            \caption{(a) MPQ-tree where $y_i$ is not the last child of $x$;
                     (b)~the implied interval representation.}
            \label{fig:wrong-child-rotation}
        \end{figure}

    
    For the converse, let $T$ be an MPQ-tree of $G$ that satisfies the constraints.
    Let $T'$ be the rotation of the MPQ-tree of $G$ that agrees with \I.
    We choose \I such that $T'$ is as similar to~$T$ as possible, meaning that the number of nodes for which the order of their children is the same in both trees is maximized.
    If $T = T'$ we are done.
    Otherwise, let $x$ be the lowest node in~$T$ for which $\pi_x^T \neq \pi_x^{T'}$.
    Let $T''$ be the MPQ-tree we obtain from $T'$ by changing $\pi_x^{T'}$ to $\pi_x^T$.
    Observe that for each node $x$ the set of nodes assigned to~$x$ as well as the sets~$A_x^T$ and~$B_x^T$ are the same for all three trees~$T,T',T''$.
    We claim that $T''$ agrees with a $\varphi$-oriented interval representation $\I'$ of $G$, which contradicts the choice of $T'$, and hence $T = T'$ follows.

    The full proof considers two cases, namely, $x$ is a P-node and $x$ is a Q-node.
    In both cases, we use the fact that there exists a \emph{normalized} interval representation $\I'$ in which the following properties hold for all nodes $x$ in $T$:
    \begin{itemize}
        \item intervals that contain each other in \I also contain each other in $\I'$ and
        \item if $x$ is a P-node, then for every non-constraining child $y_i$ of $x$, every edge between $B_i$ and~$W_x \cup A_x^T$ is represented by containment in $\I'$ or
        \item if $x$ is an unconstrained Q-node, then every edge between (i) $B_x^T$ and~$A_x^T$, (ii) $B_x^T$ and $u \in W_x$ with $S_u = S_{1,k}$ and (iii) $u \in W_x \cup A_x^T$ and $v \in W_x$ with $S_v \neq S_{1,k}$ is represented by containment in $\I'$.
    \end{itemize}
    The existence of a normalized representation is shown in~\Cref{cor:constr-cont} in the appendix.
    For a P-node $x$ with children $y_1,\dots, y_k$, we then show the following.
    Property (a) gives us that~$\pi_x^T$ and $\pi_x^{T''}$ agree on the positions of the constraining children.
    In other words, $\pi_x^T$ and $\pi_x^{T''}$ differ only with respect to the order of the non-constraining children of $x$.
    Observe that $\pi_x^{T''}$ can be obtained from $\pi_x^T$ by repeatedly switching the order of two consecutive non-constraining children of $x$.
    In order to show that there exists a $\varphi$-oriented interval representation that agrees with $T''$, we show the following claim in the appendix.
    \begin{claim*}
        Let $T^\star$ be the rotation of $T'$ we get by switching the order of two consecutive non-constraining children $y_i, y_{i+1}$ of $x$ in $T'$.
        There exists a $\varphi$-orientable interval representation~$\I'$ that agrees with $T^\star$.
    \end{claim*}

    The idea is that since $x$ is not constrained, each interval of $\I'[B_i]$ and $\I'[B_{i+1}]$ is properly contained in every interval of $\I'[W_x \cup A_x^{T'}]$.
    We can therefore apply \cref{lem:rev-rep} twice---first with $W = \cont_\mathcal{I'}(I)$ where $I$ is the smallest interval that spans $\I'[B_i] \cup \I'[B_{i+1}]$ and then with $W_i = \cont_\mathcal{I''}(I_i)$ and $W_{i+1} = \cont_\mathcal{I''}(I_{i+1})$ where $I_i$ and $I_{i+1}$ are the smallest intervals that span $\I''[B_i]$ and $\I''[B_{i+1}]$, respectively, where 
    $\I'' = \mirr_I(\I')$---to obtain a $\varphi$-orientable interval
    representation that agrees with $T^\star$.
    That is, the order of $\I'[B_i]$ and $\I'[B_{i+1}]$ coincides with the order of $y_i$ and $y_{i+1}$ in $T^\star$. 
    %
%
%
    By repeatedly switching consecutive non-constraining children of $x$, we can therefore construct a $\varphi$-orientable interval representation that agrees with $T''$.
    This contradicts our choice of $T'$.

    For a Q-node $x$, observe that since $\pi_x^T$ and $\pi_x^{T'}$ differ, it follows from Property (b) that $x$ is not constrained, since otherwise $\pi_x^T$ would be fixed and therefore the same in $T$ and $T'$.
    Let $U \subseteq W_x$ be the set of vertices $u$ of $G$ that are assigned to $x$ and for which $S_u \neq S_{1,k}$.
    To show that $T$ agrees with a $\varphi$-orientable interval representation, first show that (i) $G[U]$ is $\varphi$-orientable if and only if it is $\mathrm{rev}_U(\varphi)$-orientable and that (ii) there are only (undirected) edges between $B_x^T \cup U$ and $A_x^T \cup (W_x \setminus U)$.
    We then apply \cref{lem:rev-orient} to obtain a $\mathrm{rev}_{\varphi}$-orientable interval representation $\I''$.
    Altogether, we show that there exists a $\varphi$-orientable interval representation of $G$ that agrees with $T''$, a contradiction to our choice of $T'$.
    \end{proof}

    We remark that the MPQ-tree $T$ and the constrained children of each node can be computed in polynomial time.  Therefore, in polynomial time we can decide whether a mixed graph admits a $\varphi$-consistent clique order and compute one in the positive case.  In fact, by~\Cref{lem:mpq-tree-alg}, which we prove in the appendix, this can be achieved in linear time.

    \newcommand{\lc}{\mathrm{lc}}
    \newcommand{\rc}{\mathrm{rc}}
    
    This result shows that, given an orientation~$\varphi$ of~$G$, we can construct an MPQ-tree that represents every clique order~$\sigma$ of~$G$ such that the pair~$(\varphi, \sigma)$ is consistent if and only if $G$ is an oriented interval graph.
    It remains to construct a representation from~$(\varphi, \sigma)$.

     Recall the definition of augmented graph $G^+$ from \cref{subsec:constructing-phi} and let $C = \{c_1,\dots, c_k\}$ be the set of clique point vertices.
    Let $V^L = \{v \in V(G) \mid \varphi(v) = -1\}$,
    let $G^L = G^+[V^L \cup C]$, and let
    $\sigma^L = C_1^L,\dots, C_k^L$, where, for $i \in \{1,\dots,k\}$,
    $C_i^L = C_i^+ \cap V(G^L)$; see
    \cref{fig:L-R-split-overview-c}.  Define $V^R$, $G^R$,
    $C_1^R,\dots,C_k^R$, and $\sigma^R$ analogously; see
    \cref{fig:L-R-split-overview-e}.
    Let $\varphi^L$ and $\varphi^R$ denote the orientations
    of~$G^L$ and $G^R$ that assign to each vertex the
    orientation~$-1$ and~$+1$, respectively.
    We show the following statement.

    \begin{restatable}{lemma}{lemlrsplit}
    \restateref{lem:L-R-split}
    \label{lem:L-R-split}
      A mixed graph~$G$ admits a $\varphi$-orientable interval
      representation with clique order~$\sigma$ if and only if
      \begin{itemize}
      \item the graph~$G^L$ admits a $\varphi^L$-orientable interval
        representation with clique order~$\sigma^L$ and        
      \item the graph~$G^R$ admits a $\varphi^R$-orientable interval
        representation with clique order~$\sigma^R$.
      \end{itemize}
      Given representations of~$G^L$ and~$G^R$, a
      $\varphi$-orientable interval representation of~$G$ can be
      computed from representations of~$G^L$ and~$G^R$ in linear time.
    \end{restatable}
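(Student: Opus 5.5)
The forward direction is immediate. Given a $\varphi$-orientable representation $\I$ of $G$ with clique order $\sigma$, I first pass to the augmented representation $\I^+$ from \cref{lem:g-gc}. In it, each clique point interval $I_{c_i}$ is a tiny interval around a point $p_i$ that lies exactly in the intervals of $C_i$. Restricting $\I^+$ to $V^L \cup C$ gives an interval representation of $U(G^L)$. Two left-oriented intervals stay in the same relative position, so arcs and containment edges between them are preserved; edges to clique points are containments. Hence the restriction is $\varphi^L$-orientable. The points $p_1<\dots<p_k$ still witness the cliques $C_i^L$ in order, so the clique order is $\sigma^L$. One subtlety is that some $C_i^L$ may not be a maximal clique of $U(G^L)$ without its clique point. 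This is exactly why $c_i$ is kept: $C_i^L$ is the set of intervals containing $I_{c_i}$, and the clique points force the order. I would therefore read ``clique order $\sigma^L$'' as the sequence $C_1^L,\dots,C_k^L$ of point-cliques, witnessed by the $c_i$. The same argument applies to $G^R$.

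For the converse I would work with endpoint sequences. Take representations $\I^L$ and $\I^R$. By \cref{lem:g-gc}-style shrinking, I may assume that in each of them every $I_{c_i}$ is inclusion-minimal and that the $c_i$ appear in the order $\sigma$. Each $I_{c_i}$ then splits the line into blocks. Cut both endpoint sequences into segments: before $l_{c_1}$, between $r_{c_i}$ and $l_{c_{i+1}}$, and after $r_{c_k}$. Now merge the two representations block by block. Within the $i$-th gap, place all left-oriented endpoints of that gap from $\I^L$ in their order, and place all right-oriented endpoints from $\I^R$ in their order. At the clique points, identify $c_i^L$ and $c_i^R$ into a single point $p_i$. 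Inside a gap, order the endpoints as follows: first the right endpoints from both sides, then the left endpoints from both sides, interleaving the two sides arbitrarily. Because every interval that ends in gap $i$ contains $p_i$, and every interval starting there contains $p_{i+1}$, a vertex of $V^L$ and a vertex of $V^R$ intersect exactly when they share some $C_i$. This is adjacency in $G$, since $G$ is an interval graph with these maximal cliques. Within $V^L$ and within $V^R$ the relative order of endpoints is unchanged, so arcs, edges and non-edges inside each side are correct.

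The main obstacle is the cross pairs $u\in V^L$, $v\in V^R$ that are adjacent. By definition of $\varphi$-orientability, an edge is represented correctly by either overlap or containment, since the orientations differ. An arc is impossible here. So $G$ must have no arc between $V^L$ and $V^R$, and I would check this as part of the hypothesis; otherwise I would argue it is implied by the existence of the two representations. Under that condition, any intersection is a valid edge. Non-adjacency is also preserved by the gap ordering (right endpoints before left endpoints). This needs care when $u$ ends in gap $i$ and $v$ starts in gap $i$, and the chosen ordering handles it. Finally, I would delete the clique point intervals. The whole merge is a linear scan of two sorted endpoint lists, which gives the linear-time claim.
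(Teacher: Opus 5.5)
Your argument is essentially the paper's. For ``$\Rightarrow$'' you restrict the augmented representation to each side. For ``$\Leftarrow$'' your gap-by-gap merge (right endpoints of intervals ending at $C_i$, then left endpoints of intervals starting at $C_{i+1}$) gives exactly the endpoint sequence the paper builds clique by clique. The paper puts $V^L$ before $V^R$ inside each group; your arbitrary interleaving works equally well. You should state one fact explicitly: inside each gap of $\I^L$, all right endpoints already precede all left endpoints, because an interval whose last clique is $C_i$ and one whose first clique is $C_{i+1}$ are disjoint. This is what makes ``relative order within $V^L$ unchanged'' true.

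On the cross arcs, you are right that the merge is only correct if no arc of $G$ joins $V^L$ and $V^R$. The paper's proof uses this silently (``either both $u$ and $v$ lie in $V^L\setminus C$ \dots\ or they both lie in $V^R\setminus C$''). However, drop your fallback that this is ``implied by the existence of the two representations''. Such arcs lie in neither $G^L$ nor $G^R$, so the hypotheses say nothing about them. For example, let $G$ be a single arc $(u,v)$ with $\varphi(u)=-1$ and $\varphi(v)=+1$. Then $G^L$ and $G^R$ are the single containment edges $\{u,c_1\}$ and $\{v,c_1\}$, both trivially representable with the required clique order. Yet $G$ has no $\varphi$-orientable representation, since an arc needs equal orientations. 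So the lemma as stated needs this as an explicit precondition. The condition is necessary for $\varphi$-orientability anyway and can be checked in linear time, so adding it costs nothing in \cref{thm:given-phi-get-rep}.
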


            

        
        

      
      
    \begin{proof}
      ``$\Rightarrow$'': We take the $\varphi$-orientable interval
      representation~\I of~$G$ with clique order~$\sigma$ and split it
      into two representations, a representation~$\I^L$ for the
      left-oriented intervals and a representation~$\I^R$ for the
      right-oriented intervals. To both of these, we add, for each
      clique points in~$C$, a very short interval that intersects all
      intervals that form the corresponding clique. The first
      resulting representation is $\varphi^L$-oriented and has clique
      order~$\sigma^L$, the second one is $\varphi^R$-oriented and has
      clique order~$\sigma^R$, as desired.

      ``$\Leftarrow$'': Let~$\I^L$ be the $\varphi^L$-orientable
      interval representation of~$G^L-C$, and let~$\I^R$ be the
      $\varphi^R$-orientable interval representation~$\I^R$ of~$G^R-C$.
      We join~$\I^L$ and~$\I^R$ as follows.  Since~$\sigma^L$ and~$\sigma^R$ 
      both have a clique for each clique point vertex, we
      can simply set, for each $i \in \{1,\dots,k\}$,
      $C_i = (C_i^L \cup C_i^R) \setminus \{c_i\}$ and
      $\sigma = C_1,\dots,C_k$.  It remains to determine, for each
      $i \in \{1,\dots,k\}$, the order of the interval endpoints within~$C_i$.
      To this end, we first
      place the left endpoints of the intervals in~$V(G)$ that start
      in~$C_i$ and extend to the right; then we place the right
      endpoints of the intervals in~$V(G)$ that start in~$C_i$ and
      extend to the left.  We order the left endpoints as follows: we
      start with the intervals in $V^L$ in the order in which they
      appear in~$\I^L$; then we add the intervals in $V^R$ in the
      order in which they appear in~$\I^R$.  We order the right
      endpoints accordingly. This fixes a set~\I of intervals
      that contains an interval for each vertex of~$G$;
      it can be computed in linear time by merging~$\I^L$ and~$\I^R$.

      Note that two intervals in~\I intersect if and only if they are both contained in a maximal clique~$C_i$. This ensures
      that all undirected edges of~$G$ are realized in~\I.  Every arc
      $(u,v)$ of $G$ is realized in~\I because either both~$u$ and~$v$ lie
      in $V^L \setminus C$ and $(u,v)$ is realized in~$\I^L$, or they
      both lie in $V^R \setminus C$ and $(u,v)$ is realized in~$\I^R$.
      Since the order of the endpoints of intervals in
      $V^L \setminus C$ (in $V^R \setminus C$) is the same in~$\I^L$
      (in~$\I^R$) and in $\I$, the arc $(u,v)$ is realized in~\I.
    \end{proof}

    We are now ready to prove the main result of this section.

\thmgivenphigetrep*

    \begin{proof}
      We use \cref{lem:mpq-tree-alg} to compute, in linear time, a clique
      order~$\sigma$ such that $G$ admits a $\varphi$-orientable
      interval representation if and only if $G$ admits such a
      representation with clique order~$\sigma$. Then we split
      $(G,\sigma)$ into $(G^L,\sigma^L)$ and $(G^R,\sigma^R)$ in linear time.
      Since all vertices of~$G^L$ have the same orientation, the
      undirected edges of~$G^L$ are precisely the containment edges,
      and we can use \cref{cor:linear-time-given-sigma} to check
      in linear time whether $G^L$ admits a
      $\varphi^L$-orientable interval representation with clique
      order~$\sigma^L$.  Then we do the
      corresponding check for~$G^R$.  If both checks are successful,
      we use \cref{lem:L-R-split} to combine the
      resulting interval representations of~$G^L$ and~$G^R$ in linear time to a
      $\varphi$-orientable interval representation of~$G$.
    \end{proof}

\begin{toappendix}
    \section{Prescribed Orientations (Full Version)}
    \label{sec:given-phi}

    In this section, we assume that $\varphi$ is given as part of the input.
    We present an algorithm that recognizes $\varphi$-orientable interval graphs using clique orderings.
    The proof works in two steps.
    The first step is to show that we can compute a clique ordering $\sigma$ of $G$ such that $G$ has a $\varphi$-oriented
    interval representation \I if and only if it has a $\varphi$-oriented interval representation with
    clique ordering $\sigma$.
    In the second step we construct a $\varphi$-orientable interval representation of
    $G$ with clique ordering $\sigma$ if and only if $G$ is a $\varphi$-oriented interval graph.
    To describe the clique orderings of $G$ we use modified \emph{PQ-trees}~\cite{Korte1989}.
    An \emph{MPQ-tree} $T$ of an interval graph $G$ is a rooted, ordered tree whose nodes are called \emph{P-} and
    \emph{Q-nodes}.
    The nodes in $T$ are connected via \emph{links}.
    Let $x$ be a node in $T$ and let $\{y_1, \dots, y_k\}$ be the children of $x$.
    For a Q-node $x$, a set of consecutive links $S_{i,j} = \{xy_i, \dots, xy_j\}$ is called a \emph{segment}
    of $x$.
    Similar to PQ-trees, the leaves of the MPQ-tree $T$ of $G$ are the maximal cliques of $G$.
    We assign every vertex $v$ in $G$ to the node $x$ in $T$ that is the lowest common ancestor of all cliques containing~$v$. 
    Let~\emph{$W_x$} denote the set of vertices assigned to~$x$.
    If $x$ is a Q-node, then consider the set of links $\{x,y_i\}$
    where the leaves of the subtree rooted at $y_i$ are maximal
    cliques containing~$v$.
    Since $G$ is an interval graph, the leaves containing $v$ are consecutive in $T$, hence the links  in this set are consecutive and form a segment of $x$.
    Let~\emph{$S_v$} denote the segment of $v$.
        For a child $y_i$ of $x$, consider the leaves $c_1, \dots, c_l$ of the subtree rooted at $y_i$.
    Let \emph{$L_i$} $\subset W_x$ denote the set of vertices that are assigned to $x$ and that are contained in the maximal cliques that correspond to the leaves $c_1, \dots, c_l$.
    We say that a vertex $v$ of $G$ is \emph{above} a node $x$ if $v$ is assigned to a node on the path from $x$ to the root in $T$, excluding $x$.
    Similarly, we say that $v$ is \emph{below} $x$ if $v$ is assigned to a node on a path from $x$ to some leaf in the subtree $T_x$ of $T$ rooted at~$x$, excluding $x$.
    The set of all vertices of $G$ that are above node $x$ is referred to as $A_x^T$ and the set of all
    vertices below $x$ as $B_x^T$.
    The maximal cliques of $G$ are precisely the sets $W_x \cup A_x^T$ where $x$ is a leaf of $T$.
    Since $T$ is an ordered tree, the leaves of $T$ are ordered.
    We say that an interval representation \I of $G$ \emph{agrees} with $T$ if the order of the maximal
    cliques of $G$ in \I coincides with the order of the maximal cliques described by the leaves of $T$.
    We let \emph{$\pi_x^T$} denote the order of the children of $x$ in $T$.
    A \emph{rotation} of an MPQ-tree $T$ is an MPQ-tree $T'$ obtained from $T$ by arbitrarily permuting the order of the children of P-nodes and reversing the order of the children of any subset of Q-nodes in $T$.
    We say that an MPQ-tree $T$ is \emph{$\varphi$-orientable} if it agrees with some $\varphi$-orientable interval representation of $G$.
    We assume the following properties of MPQ-trees.
    \begin{proposition}[see \cite{Korte1989}, Lemma 2.2]
      \label{obs:p-q-nodes}
      Let $x$ be a non-leaf node of an MPQ-tree, and let
      $y_1, \dots,y_k$ be the children of~$x$.
      If $x$ is a P-node, then, for every $i \in \{1, \dots, k\}$,
      there is at least one vertex assigned to~$y_i$ or to a node in
      the subtree rooted at $y_i$, i.e.,
      $W_{y_i} \cup B^T_{y_i} \ne \emptyset$.
      If $x$ is a Q-node, then $k \geq 3$ and
      \begin{enumerate}[(a)]
            \item the subtrees rooted at $y_1$ and $y_k$ are non-empty,
            \item $L_1 \subsetneq L_2$ and $L_k \subsetneq L_{k-1}$, and,
            \item for every $i \in \{2, \dots, k-1\}$, it holds that
              $(L_i \cap L_{i+1}) \setminus L_1 \neq \emptyset$ and
              $(L_{i-1} \cap L_i) \setminus L_k \neq \emptyset$.
      \end{enumerate}
    \end{proposition}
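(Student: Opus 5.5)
The plan is to derive all claims from two facts. First, the leaves of the MPQ-tree are pairwise distinct maximal cliques, and the clique at a leaf $z$ is $W_z \cup A_z^T$. Second, the PQ-tree of $G$ is the \emph{reduced} tree representing exactly the clique orders, so no rotation beyond those allowed by P- and Q-nodes is admissible. We also use that every vertex $v$ is assigned to the lowest common ancestor of its cliques; in particular, for a vertex of a Q-node $x$, the segment $S_v$ contains at least two links.

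\emph{P-nodes.} Suppose $W_{y_i} \cup B^T_{y_i} = \emptyset$ for some child $y_i$ of a P-node $x$. Then every vertex in a leaf clique below $y_i$ is assigned to a node strictly above $y_i$. Hence that subtree consists of a single leaf, whose clique is $W_x \cup A_x^T$. At a P-node every vertex assigned to $x$ lies in all cliques below $x$. So any leaf below a sibling $y_j$ (which exists since $k \ge 2$) has a clique containing $W_x \cup A_x^T$. Since leaves are distinct cliques, this inclusion is proper, contradicting maximality.

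\emph{Q-nodes, $k\ge 3$ and (b).} A Q-node with two children admits exactly the same orders as a P-node with two children, so in the reduced PQ-tree it is a P-node; hence $k \ge 3$. For (b), take $v \in L_1$. Then $S_v$ contains the link $xy_1$, and since $v$ is assigned to $x$ its segment has at least two links; being consecutive, it contains $xy_2$, so $L_1 \subseteq L_2$. For strictness, suppose $L_1 = L_2$. Then the vertices assigned to $x$ cannot distinguish $y_1$ from $y_2$, and I would show that the order obtained by swapping $y_1$ and $y_2$ is again a clique order. It keeps every segment of a vertex in $W_x$ consecutive, and vertices below or above $x$ are unaffected. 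This order is not generated by the Q-node, contradicting that the PQ-tree represents exactly the clique orders. The case $L_k \subsetneq L_{k-1}$ is symmetric.

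\emph{Q-nodes, (a) and (c).} For (a), suppose the subtree at $y_1$ is empty, i.e.\ $y_1$ is a leaf with $W_{y_1} = \emptyset$. Its clique is $L_1 \cup A_x^T$, which by (b) is properly contained in $L_2 \cup A_x^T$, and the latter is contained in any clique below $y_2$. This contradicts maximality. For (c), suppose $(L_i \cap L_{i+1}) \setminus L_1 = \emptyset$. Then every vertex of $W_x$ whose segment crosses the boundary between $y_i$ and $y_{i+1}$ also contains $xy_1$, and hence by consecutiveness covers all of $y_1,\dots,y_{i+1}$. Reversing the block $y_1,\dots,y_i$ then keeps every segment consecutive: a segment either lies inside the block, lies outside it, or covers the whole block together with $y_{i+1}$. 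This yields a clique order that the Q-node does not generate, a contradiction. The second inequality in (c) follows symmetrically.

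The main obstacle is making the reducedness argument rigorous, namely that every admissible order is generated by the tree; I would appeal directly to the Booth--Lueker characterization of PQ-trees as in \cite{Korte1989}.
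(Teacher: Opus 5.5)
Your proof is correct, and it takes a different route from the paper, which gives no argument of its own and simply imports the statement from Korte and Möhring~\cite{Korte1989} (Lemma~2.2).

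You derive everything from facts the paper already states: distinct leaves carry distinct maximal cliques, the PQ-tree generates exactly the clique orders, and vertices are assigned to the lowest common ancestor of their cliques. The P-node claim and (a) then follow by a maximality contradiction. Parts (b) and (c) follow by exhibiting a clique order the Q-node cannot generate: swap $y_1,y_2$ for (b), and reverse the prefix $y_1,\dots,y_i$ for (c).

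The case analysis for (c) is sound. A segment crossing the boundary between $y_i$ and $y_{i+1}$ lies in $L_i\cap L_{i+1}\subseteq L_1$, so it starts at $y_1$ and covers the whole reversed block. For $2\le i\le k-1$, the new child order $y_i,\dots,y_1,y_{i+1},\dots,y_k$ is neither $\pi_x^T$ nor its reverse.

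Your route is self-contained and tailored to the paper's LCA-based definition of the MPQ-tree. The citation is shorter, but it tacitly assumes that this definition coincides with Korte and Möhring's.

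Three small points to tidy up:
\begin{enumerate}
\item The clique at a leaf $z$ is not $W_z\cup A_z^T$ when $z$ has a Q-node ancestor. Only those vertices of that ancestor whose segment contains the link on the path to $z$ belong to it. Your containment arguments survive, because the two leaves you compare both lie below $x$ (even below the same child $y_i$ of every ancestor) and so receive the same vertices from above $x$.
\item In (a), the ``i.e.'' (an empty subtree is a single leaf with $W_{y_1}=\emptyset$) needs the same single-leaf argument you gave for P-nodes.
\item Strictness from (b) is not needed in (a). The inclusion $L_1\subseteq L_2$ together with distinctness of leaf cliques already yields the contradiction.
\end{enumerate}
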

    
    \noindent Let $a$ be an arc in $G$ with endpoints $u$ and $v$.
    If $u$ and $v$ are assigned to the same node of~$T$, then we call $a$ \emph{horizontal}.
    Otherwise, $a$ is \emph{vertical}, and $u$ and $v$ are assigned to distinct nodes $x_u$ and $x_v$ of~$T$, respectively.
    In this case, if $x_u$ is an ancestor of $x_v$ in $T$, then $u$ is called the \emph{upper endpoint} and $v$ is called the \emph{lower endpoint} of $a$ .
    Let $x$ be a node of $T$ with children $y_1, \dots, y_k$ and let $B_i = B_{y_i}^T \cup W_{y_i}$ with $i \in \{1, \dots, k\}$.
    We say that $y_i$ is a \emph{constraining child} of $x$ if $x$ is a P-node and there exists a vertical arc $a$ in $G$ with the lower endpoint in $B_i$ and the upper endpoint in $W_x \cup A_x^T$.
    We say that a node $x$ is \emph{constrained} if one of the following holds:
    \begin{itemize}
        \item $x$ is a P-node and $x$ has a constraining child $y_i$ or
        \item $x$ is a Q-node and there is an arc~$a$ with endpoints~$u$ and~$v$ such that
        \begin{description}
            \item[(V1)] $a$ is vertical, $u \in B_x^T$, and $v \in A_x^T$ or 
            \item[(V2)] $a$ is vertical, $u \in W_x$, $S_u \neq S_{1,k}$, and $v \in A_x^T$ or 
            \item[(V3)] $a$ is vertical, $u \in B_x^T$, and $v \in W_x$ or
            \item[(H)] $a$ is horizontal, $u,v \in W_x$, and $S_u \neq S_v$.
        \end{description}
    \end{itemize}
    \noindent \Cref{fig:constr-children} illustrates constrained nodes.
    
    We say that the position of a child $y_i$ of $x$ is \emph{fixed} if $y_i$ has the same position in $\pi_T^x$ for every $\varphi$-oriented rotation $T$ of the MPQ-tree of $G$.
    We say that $\pi_x^T$ is fixed if it is the same for every $\varphi$-oriented MPQ-tree $T$ of $G$.
    We show that a node $x$ in $T$ being constrained implies necessary properties for $\pi_x^T$.

    \begin{restatable}{lemma}{lemconstrchild}
        \restateref{lem:constrained-child}
        \label{lem:constrained-child}
        Let $T$ be a $\varphi$-orientable MPQ-tree of a mixed interval graph $G$ and let $x$ be a node in $T$ with children $\{y_1, \dots, y_k\}$.
        \begin{enumerate}[(a)]
            \item If $x$ is a constrained P-node, then the position of each constraining child $y_i$ in $\pi_x^T$ is fixed and must be either the first or the last child of $x$.
            \item If $x$ is a constrained Q-node, then $\pi_x^T$ is fixed, i.e., for every $\varphi$-orientable rotation~$T'$ of~$T$, we have $\pi_x^{T'} = \pi_x^T$.
        \end{enumerate}
    \end{restatable}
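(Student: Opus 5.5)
The plan is a direct geometric argument. Fix a $\varphi$-orientable interval representation $\I$ that agrees with $T$, take the arc that witnesses the constraint, and show that any other order of the relevant children would force two intervals to be disjoint that the MPQ-tree requires to intersect. Since every $\varphi$-orientable rotation $T'$ agrees with some such representation, the argument applies to each of them, which is what ``fixed'' requires. Throughout, I would use two MPQ-tree facts: every vertex of $W_x \cup A_x^T$ is adjacent to every vertex of $B_x^T$, and a vertex of $B_i$ occurs only in cliques of the subtree of $y_i$.

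\textbf{Part (a): P-nodes.} Let $(u,v)$ be a vertical arc with $u \in B_i$ and $v \in W_x \cup A_x^T$, and assume $\varphi(u)=\varphi(v)=1$; the case $-1$ is symmetric. Suppose some $y_j$ with $j \neq i$ precedes $y_i$. By \cref{obs:p-q-nodes} the set $B_j$ contains a vertex $w$, and $w$ is not adjacent to $u$. Then $r_w < l_u$, while the arc gives $l_u < l_v$. Hence $I_v$ and $I_w$ are disjoint, which contradicts $v \in W_x\cup A_x^T$ and $w \in B_x^T$. So $y_i$ must be the first child of $x$. An arc $(v,u)$ forces $y_i$ to be the last child in the same way, and the left-oriented cases mirror these. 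The decisive point is that the required position depends only on the arc and on $\varphi$, not on the rotation. Therefore every $\varphi$-orientable rotation puts $y_i$ in the same extreme position, so that position is fixed.

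\textbf{Part (b): Q-nodes.} A rotation can only keep $\pi_x$ or reverse it. It therefore suffices to show that the witnessing arc is realized in one of the two orders of $x$ and not in its reverse. I would treat each case separately.
\begin{itemize}
\item \textbf{(V1).} Here $u \in B_x^T$ lies below some child $y_i$. By \cref{obs:p-q-nodes}(a) the subtrees of $y_1$ and $y_k$ are non-empty, so there are vertices below the first and below the last child of $x$. The P-node argument then shows that $y_i$ must be the first child of $x$ or the last child, according to the arc direction and $\varphi$. If $y_i$ is an interior child, the same argument gives a contradiction in both orders, so such a witness cannot occur at all. Otherwise the orientation of $x$ is determined.
\item \textbf{(V2).} The segment $S_u \neq S_{1,k}$ misses the first child or the last child of $x$. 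This yields a vertex $w$, below $x$ or in $W_x$, that lies in cliques on one side only and is not adjacent to $u$. The arc between $u$ and $v \in A_x^T$ forces $l_u<l_v$ (or the mirrored inequality), and then $v$ misses $w$ unless the missing side lies in the correct direction.
\item \textbf{(V3).} This is analogous, with $v\in W_x$ spanning the segment in which $u$ lives.
\item \textbf{(H).} If $S_u \neq S_v$, the segments differ at an end, say $S_u$ starts strictly before $S_v$. Then $l_u < l_v$ is forced in one orientation of $x$ and reversed in the other. The arc direction together with $\varphi$ accepts only one of them.
\end{itemize}

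\textbf{Main obstacle.} I expect the Q-node cases (V2) and (V3) to be the main difficulty. Both need a witness vertex $w$ that is non-adjacent to the appropriate endpoint and lies on the correct side, and producing it requires care with the nesting properties (b) and (c) of \cref{obs:p-q-nodes}. I would first try to take $w$ from $L_1\setminus L_2$-type differences, or from $B$-vertices below the end children, and then check that the forced endpoint order contradicts adjacency.
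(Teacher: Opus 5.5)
Your plan is correct and is essentially the paper's proof: fix a $\varphi$-orientable representation agreeing with $T$, and in each case (P-node; Q-node via (V1), (V2), (V3), (H)) combine the witnessing arc with a vertex adjacent to one endpoint but not the other, so that the wrong order forces apart two intervals the MPQ-tree requires to intersect. For the (V3) witness you flagged, no nesting properties are needed (and $L_1\setminus L_2$ is empty anyway). Since $x$ is the lowest common ancestor of $v$'s cliques, $S_v$ contains $xy_i$ and some $xy_j$ with $j\neq i$; any maximal clique under $y_j$ contains $v$ but not $u$, so it has a vertex adjacent to $v$ and non-adjacent to $u$, which is exactly how the paper argues.
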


    \begin{proof}
        Let \I be a $\varphi$-orientable interval representation of $G$ that agrees with $T$ and
        let $x$ be a constrained node in $T$. 
        Let $y_1 \prec \cdots \prec y_k$ be the order of the children of $x$.
        We consider the two types of nodes in $T$ and the ways in which they can be constrained.

        Case 1: $x$ is a constrained P-node. 
        The properties of MPQ-trees give us that (i) every vertex in $W_x \cup A_x^T$ is adjacent to every vertex in $B_x^T$, (ii) none of the $B_i$ are empty, and (iii) any two vertices $b_i \in B_i$, $b_j \in B_j$ with $i \neq j$ are non-adjacent in~$G$.
        Let $y_i$ be a constraining child of $x$, that is, there exists a vertical arc $a$ with endpoints $u$ and $v$ such that $u \in B_i$ and $v \in W_x \cup A_x^T$.
        Assume $(u,v) \in \Arcs(G)$.
        We show that $y_i$ must be the first child of $x$ if $\varphi(u) = \varphi(v) = 1$.
        Assume for the sake of contradiction that some $y_j$, $i \neq j$ is the first child of $x$.
        Then by (iii), there exists a vertex $w \in B_j$ that is not adjacent to $u$.
        Consider the intervals $I_u = [l_{u}, r_{u}]$, $I_{v} = [l_{v}, r_{v}]$, and $I_{w} = [l_{w}, r_{w}]$ in \I.
        We have $l_{u} < l_{v}$, as $(u,v)$ is an arc in $G$ and $\varphi(u) = \varphi(v) = 1$, and $r_{w} < l_{u}$, as $T$ puts $y_j$ before $y_i$; see \cref{fig:wrong-child-rotation}.
        Therefore, $r_{w} < l_{v}$ and thus $I_{v}$ and $I_{w}$ are disjoint, a contradiction to (i).
        By an analogous argument, $y_i$ must be the last child of $x$ if $\varphi(u) = \varphi(v) = -1$.
        Moreover, symmetrically, if $(v,u) \in \Arcs(G)$, then 
        $y_i$ must be the last child of $x$ if $\varphi(u) = \varphi(v) = 1$ and the first if $\varphi(u) = \varphi(v) = -1$.
        Therefore, $y_i$ is fixed and must be the first or last child of $x$ if it is a constraining child.
        This proves~(a).

        Case 2: $x$ is a constrained Q-node.
        Q-nodes can be constrained by three kinds of vertical arcs (V1) -- (V3) and by horizontal arcs.
        We first consider the case where $x$ is constrained by a vertical arc.
        The types of vertical arcs can be grouped together by the way they constrain $x$.
        Namely, (V1) and (V2) arcs determines a first or last child of $x$ (similar to the P-node case), while (V3) arcs order two children of $x$ in $\pi_x^T$.
        In either case, $\pi_x^T$ is fixed.
        
        We first show that (V1) and (V2) arcs, similarly to the P-node case, determine a first or last child of $x$.
        Let $a$ be a (V1) arc in $G$ with the lower endpoint $u \in B_x^T$ and the upper endpoint $v \in A_x^T$.
        In particular, let $y_i$ be a child of $x$ such that $xy_i$ is the first link in $S_u$.
        Assume $(u,v) \in \Arcs(G)$.
        We show that $y_i$ must be the first child of $x$ if $\varphi(u) = \varphi(v) = 1$.
        Assume for the sake of contradiction that some $y_j$, $i \neq j$ is the first child of $x$ and that $u \notin B_j$.
        From~\Cref{obs:p-q-nodes} it follows that there exists a vertex $w \in B_j$ that is adjacent to $v$ and not adjacent to $u$.
        The proof now works analogously to the P-node case.
        Namely, consider the intervals $I_u = [l_{u}, r_{u}]$, $I_{v
        } = [l_{v}, r_{v}]$, and $I_{w} = [l_{w}, r_{w}]$ in \I.
        We have $l_{u} < l_{v}$, as $(u,v)$ is an arc in $G$ and $\varphi(u) = \varphi(v) = 1$, and $r_{w} < l_{u}$, as $T$ puts $y_j$ before $y_i$; see \cref{fig:wrong-child-rotation}.
        Therefore, $r_{w} < l_{v}$ and thus $I_{v}$ and $I_{w}$ are disjoint.
        This a contradiction to $v$ and $w$ being adjacent.
        By an analogous argument, $y_i$ must be the last child of $x$ if $\varphi(u) = \varphi(v) = -1$.
        Moreover, symmetrically, if $(v,u) \in \Arcs(G)$, then $y_i$ must be the last child of $x$ if $\varphi(u) = \varphi(v) = 1$ and the first if $\varphi(u) = \varphi(v) = -1$.
        
        We now consider (V2) arcs, that is, let $a$ be an arc in $G$ with lower endpoint $u \in W_x$, $S_u \neq S_{1,k}$, and upper endpoint $v \in A_x^T$.
        Let $y_i$ be a child of $x$ such that $xy_i$ is the first link in $S_u$.
        Assume $(u,v) \in \Arcs(G)$.
        We show that $y_i$ must be the first child of $x$ if $\varphi(u) = \varphi(v) = 1$.
        Assume for the sake of contradiction that some $y_j$, $i \neq j$ is the first child of $x$ and that $u \notin L_j$, that is, the subtree rooted at $y_j$ contains no maximal clique as a leaf that contains $u$.
        From~\Cref{obs:p-q-nodes} and since $xy_j \notin S_u$, it follows that there exists a vertex $w \in B_j$ that is adjacent to $v$ and not adjacent to $u$.
        The proof now again works analogously to the P-node case.
        Namely, consider the intervals $I_u = [l_{u}, r_{u}]$, $I_{v} = [l_{v}, r_{v}]$, and $I_{w} = [l_{w}, r_{w}]$ in \I.
        We have $l_{u} < l_{v}$, as $(u,v)$ is an arc in $G$ and $\varphi(u) = \varphi(v) = 1$, and $r_{w} < l_{u}$, as $T$ puts $y_j$ before $y_i$; see \cref{fig:wrong-child-rotation}.
        Therefore, $r_{w} < l_{v}$ and thus $I_{v}$ and $I_{w}$ are disjoint, a contradiction to $v$ and $w$ being adjacent.
        By an analogous argument, $y_i$ must be the last child of $x$ if $\varphi(u) = \varphi(v) = -1$.
        Moreover, symmetrically, if $(v,u) \in \Arcs(G)$, then $y_i$ must be the last child of $x$ if $\varphi(u) = \varphi(v) = 1$ and the first if $\varphi(u) = \varphi(v) = -1$.
        
        Finally, we consider (V3) arcs.
        That is, $a$ is an arc with lower endpoint $u \in B_x^T$ and upper endpoint $v \in W_x$.
        Since $v$ is assigned to a node that is located further up in $T$ than the node $u$ is assigned to, there exists a maximal clique that contains $v$ but not $u$.
        Therefore, there exists a child $y_j$ of $x$ with $i \neq j$ such that $v$ is adjacent to some vertex $w \in B_j$ that is not adjacent to $u$.
        We show that $a$ uniquely determines $\pi_x^T$ by ordering $y_i$ and $y_j$ in $\pi_x^T$.
        Assume $(u,v) \in \Arcs(G)$.
        We consider the two possible orientations of $u$ and $v$.
        If $\varphi(u) = \varphi(v) = 1$, then $I_u$ and $I_v$ overlap in \I with $l_u < l_v < r_u < r_v$, that is, $I_u$ contains the left endpoint of $I_v$.
        Assume that $y_j \prec y_i$.
        Then, together with the fact that $u$ and $w$ are not adjacent, we have $r_w < l_u$.
        But then $r_w < l_v$, a contradiction, since $w$ is adjacent to $v$.
        Therefore, it follows that $y_i \prec y_j$; see \cref{fig:constr-q-node-below}.
        \begin{figure}
            \centering
            \includegraphics{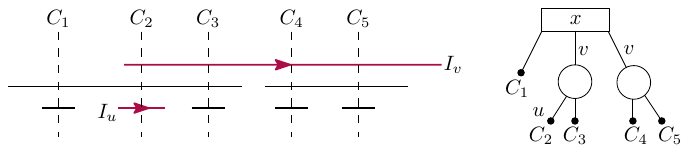}
            \caption{The arc $(u,v)$ uniquely determines the order $\pi_x^T$.}
            \label{fig:constr-q-node-below}
        \end{figure}
        Similarly, if $\varphi(u) = \varphi(v) = -1$, then we obtain $y_j \prec y_i$.
        Moreover, if $(v,u) \in \Arcs(G)$ and $\varphi(u) = \varphi(v) = 1$, then, symmetrically, we find $y_j \prec y_i$.
        Similarly, if $\varphi(u) = \varphi(v) = -1$, then we find $y_i \prec y_j$.

        It remains to consider the case where $x$ is constrained by a horizontal arc.
        Let $a$ be an arc in $G$ with endpoints $u,v \in W_x$ such that $S_u \neq S_v$. That is, $u$ is contained in $L_i$ and $v$ is contained in $L_j$ with $i \neq j$ and, without loss of generality, $u \notin L_j$.
        Assume $(u,v)\in \Arcs(G)$.
        We consider the two possible orientations of $u$ and $v$.
        If $\varphi(u) = \varphi(v) = 1$, then, since $u \notin L_j$, and $l_u < l_v < r_u < r_v$, it follows that $y_i \prec y_j$.
        Similarly, if $\varphi(u) = \varphi(v) = -1$, then we find $y_j \prec y_i$.
        Assume $(v,u)\in \Arcs(G)$.
        We analogously find $y_j \prec y_i$ for $\varphi(u) = \varphi(v) = 1$ and $y_i \prec y_j$ for $\varphi(u) = \varphi(v) = -1$.
        Altogether, this proves (b).
    \end{proof}

    This shows that constrained children imply necessary properties for an MPQ-tree $T$ to agree with an oriented interval representation of $G$:
    For P-nodes, they may fix the first and last child while for Q-nodes, they determine the unique order of the children.
    To show that these constraints are also sufficient, we first prove the following intermediate lemma.

    \begin{restatable}{lemma}{lemunconstrainedcontained}
        \restateref{lem:unconstrained-contained}
        \label{lem:unconstrained-contained}
        Let $T$ be a $\varphi$-orientable MPQ-tree of $G$ and let \I be a $\varphi$-oriented interval representation of $G$ that agrees with $T$.
        Let $x$ be a node in $T$ with children~$y_1,\dots, y_k$.
        There exists an interval representation $\I'$ of $G$ that agrees with~$T$ such that 
        \begin{itemize}
            \item intervals that contain each other in \I also contain each other in $\I'$ and
            \item if $x$ is a P-node, then for every non-constraining child $y_i$ of $x$, every edge between $B_i$ and~$W_x \cup A_x^T$ is represented by containment in $\I'$ or
            \item if $x$ is an unconstrained Q-node, then every edge between
            \begin{itemize}
                \item $B_x^T$ and~$A_x^T$,
                \item $B_x^T$ and $u \in W_x$ with $S_u = S_{1,k}$ and
                \item $v$ and $u \in W_x \cup A_x^T$ where $v \in W_x$ and $S_v \neq S_{1,k}$
            \end{itemize} 
            is represented by containment in $\I'$.
        \end{itemize}
    \end{restatable}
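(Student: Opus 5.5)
We will prove the statement by a local stretching argument. Interval endpoints are moved outward or inward without changing the order of the maximal cliques, and the extra containments are created one block at a time. Throughout, we use that \I is in general position and agrees with $T$. The maximal clique $C$ of each leaf then has a point $p_C$, and the points $p_C$ appear in the order of the leaves of $T$. For a node $z$ of $T$, let $J_z$ be the smallest interval containing all points $p_C$ for leaves $C$ of $T_z$. Every vertex in $B_z^T \cup W_z$ (for $B_z^T$ this is the MPQ-tree property that subtrees induce consecutive cliques) has all its cliques among the leaves of $T_z$. Its interval therefore meets $J_z$ and, up to trimming, can be assumed to live in a small neighbourhood of $J_z$. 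Conversely, every vertex of $A_z^T$ covers all of $J_z$.

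\emph{P-node case.} Let $y_i$ be a non-constraining child of $x$. By definition, no vertical arc joins $B_i$ to $W_x \cup A_x^T$, so every adjacency between these sets is an undirected edge. We modify \I inside the gap between $p$-points of $T_{y_{i-1}}$ and $T_{y_i}$, and inside the gap between $T_{y_i}$ and $T_{y_{i+1}}$. We shrink the intervals of $B_i$ towards $J_{y_i}$: each left endpoint of a $B_i$-interval that lies before $p$ of the first clique of $T_{y_i}$ is moved to just before that clique, preserving the relative order of these endpoints, and symmetrically on the right. Vertices of $W_x \cup A_x^T$ contain all cliques of $T_{y_i}$. After pushing the $B_i$-endpoints inside, every endpoint of a $W_x\cup A_x^T$ interval near this gap lies outside all $B_i$-intervals, so each such pair becomes a containment. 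We must check that nothing else changes. The $B_i$-intervals only lose points that contain no clique point of a clique containing them, so intersections with other vertices are preserved; here we use that $B_i$ is nonadjacent to the other $B_j$. The relative order of endpoints within $B_i$, and within the complement, is unchanged. So every overlap or containment that is not between $B_i$ and $W_x\cup A_x^T$ keeps its type, in particular the old containments. An edge between $B_i$ and $W_x\cup A_x^T$ that was represented by an overlap with opposite orientations turns into a containment, which is still an edge. This is consistent with $\varphi$ by \cref{def:oriented-interval-representation}, since containment yields an undirected edge regardless of orientation. We repeat this for all non-constraining children.

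\emph{Q-node case.} If $x$ is unconstrained, there are no arcs of types (V1)--(V3) or (H), so all listed pairs are undirected edges and only their representation type can change. The $B_x^T$-versus-$A_x^T$ pairs and the $B_x^T$-versus-$\{u : S_u=S_{1,k}\}$ pairs are handled exactly as in the P-node case. We push the endpoints of the intervals of $B_x^T$ (and of all $W_z$ below) inside $J_x$, and we move the endpoints of the vertices $u$ with $S_u = S_{1,k}$ and of $A_x^T$ outside $J_x$. For the third bullet, let $v\in W_x$ with $S_v\neq S_{1,k}$. We shrink $I_v$ towards the span of the cliques of its segment $S_v$, and we stretch the intervals of the vertices $u \in W_x\cup A_x^T$ that are adjacent to $v$, and that are not themselves vertices with a smaller segment ending on the same side, so that they cover that span with margin. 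The main obstacle is exactly here. Two vertices $v,v'\in W_x$ with different segments that overlap must keep overlapping; they are joined by an undirected edge, and we are not required to make it a containment. Stretching must therefore be done by moving only the endpoints of $v$ that lie in the gap just beyond the extreme clique of $S_v$, ordered consistently with the nesting of segments. We first order all endpoints in each inter-clique gap as follows. Left endpoints of vertices whose segment starts at the next child come in the order of decreasing segment end, and this order also respects the previous containments. Right endpoints are ordered symmetrically. We then verify, pair by pair, that overlaps between $W_x$-vertices with incomparable segments are preserved. Finally, we check that old containments survive, since all modifications only move endpoints monotonically towards (for inner intervals) or away from (for outer intervals) the relevant clique spans.
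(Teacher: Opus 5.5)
Your overall strategy is the paper's. You move endpoints of the inner intervals only within their inter-clique gaps, behind the endpoints of the outer intervals, and you use the absence of constraining arcs to see that every pair whose type changes is an undirected edge going from overlap to containment. Your P-node argument is correct and matches the paper's: moving the left endpoints of $\I[B_1]$ to the end of the gap before the first clique of $T_{y_1}$ is the paper's move into $[l_A,p]$. The Q-node part, however, is built around an obstacle that does not exist, and the procedure you describe there is underspecified exactly where arcs can occur. You also weaken the third bullet without comment. Read literally, it demands containment also for $u\in W_x$ with $S_u\neq S_{1,k}$, which is impossible for incomparable segments. Say explicitly that you prove it only for $u\in U\cup A_x^T$ with $U=\{u\in W_x\mid S_u=S_{1,k}\}$. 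That is what the paper's proof gives and what \cref{lem:consistent-order} uses.

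The overlap you worry about is forced. If $v,v'\in W_x$ have segments $S_{i,j},S_{i',j'}$ with $i<i'$ and $j<j'$, then $l_v,l_{v'}$ lie in different gaps and so do $r_v,r_{v'}$. Hence the overlap follows from the clique order and survives every within-gap move. Moreover, the interval of every vertex in $B_i$ with $1<i<k$, and of every $v\in W_x$ with $S_v=S_{i,j}$, $i>1$, $j<k$, already lies between the clique points of the last clique of $T_{y_1}$ and the first clique of $T_{y_k}$. So it lies inside every interval of $U\cup A_x^T$. It therefore suffices to work in the gap before the first clique of $T_x$: move the left endpoints of $(B_1\cup L_1)\setminus U$, as a block in their original order, behind all other left endpoints there, and do the same symmetrically on the right. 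The switched pairs are undirected because there are no (V1), (V3), (V2) and (H) arcs between, respectively, $B_1$ and $A_x^T$, $B_1$ and $U$, $L_1\setminus U$ and $A_x^T$, and $L_1\setminus U$ and $U$. This is the paper's argument.

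Your global re-sorting by segment end in every gap, and moving $U\cup A_x^T$ ``outside'', are unnecessary and, as stated, unsafe:
\begin{itemize}
\item Arcs between two vertices of $W_x$ with equal segments, and arcs between $U$ and $A_x^T$, are not excluded by (V1)--(V3) or (H). So ties must be broken by the original order, and the relative order of $U$- and $A_x^T$-endpoints must be kept.
\item Permuting $W_x$-endpoints in a middle gap without keeping the $B_i$-endpoints behind them can turn a containment $I_b\subset I_w$ into an overlap.
\item Stretching $I_u$ over the span of $S_v$ when $S_u\not\supseteq S_v$ changes adjacencies.
\end{itemize}
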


    \begin{proof}
        We first consider the case that $x$ is a P-node.
        Let $I_A = \bigcap_{a \in W_x \cup A_x^T} I_a$ be the intersection of all intervals in $\I[W_x \cup A_x^T]$.
        Let $l_A$ and $r_A$ denote the left and right endpoint of $I_A$, respectively.
        Observe that each interval of the interval representations $\I[B_2 \cup \dots \cup B_{k-1}]$ 
        is properly contained in any interval in $\I[W_x \cup A_x^T]$.
        If $I_A$ also spans $\I[B_1 \cup B_k]$, then we are done.
        Otherwise, assume $y_1$ is a non-constraining child of $x$ and that $\I[B_1]$ overlaps~$I_A$.
        Let $p$ be the left-most endpoint of the intervals $\{I_v \mid v \in \bigcup_{i=1}^k B_i\}$ that is contained in $I_A$.
        Let~$l_1, \dots, l_{|B_1|}$ be the left endpoints of the intervals in $\I[B_1]$ in the order in which they appear in \I from left to right.
        Let $\I'$ be the interval representation we obtain by placing $l_1, \dots, l_{|B_1|}$ in the same order as in \I inside the interval $[l_A, p]$; see \cref{fig:move-endpoints}.
        \begin{figure}
            \centering
            \begin{subcaptiongroup}
            \textsfbf{(a)} \includegraphics[page=4,valign=t]{fig/move_endpoints.pdf} \hfil
            \textsfbf{(b)} \includegraphics[page=5,valign=t]{fig/move_endpoints.pdf}
            \end{subcaptiongroup}
            \caption{(a) The left endpoints of $I_x$, $I_y$ and $I_z$ can be placed between the left endpoint of
                $I_A$ and $p$, indicated by the gray area, to eliminate overlaps; (b) the interval representation
                $\I'$ obtained by placing the right endpoints of $I_x$, $I_y$ and $I_z$ inside $I_A$.}
            \label{fig:move-endpoints}
        \end{figure}
        We show that $\I'$ is a $\varphi$-oriented interval representation of $G$ that agrees with $T$ such that if $I_i \subset I_j$ in \I, then also $I_i \subset I_j$ in $\I'$.
        Observe that by the maximality of $p$, there are no other endpoints of intervals from \I inside $[l_A, p]$.
        Therefore, the intersections between intervals in $\I[B_1]$ and in $\I[B_i]$ for
        $i \in \{1,\dots,k\}$ as well as containments of intervals in \I are preserved in $\I'$.
        It remains to show that intervals in $\I[B_1]$ that overlap $I_A$ in \I are correctly
        represented in $\I'$.
        Let $I_b \in \I[B_1]$ be an interval that overlaps some interval $I_a \subseteq I_A$.
        Since $y_1$ is not a constraining child, there exists no arc between $I_A$ and $\I[B_1]$.
        Hence, $\{a,b\}$ is an edge and is represented correctly by $I_b \subset I_a$ in $\I'$.
        This holds for any $b \in B_1$ and $a \in A_x^T$, therefore, the adjacencies between $B_1$ and $A_x^T$ are
        correctly represented by containment in $\I'$.

        If $y_k$ is an unconstrained child and $\I[B_k]$ overlaps~$I_A$, the proof works analogously.
        Namely, we obtain the interval representation $\I'$ by placing the right endpoints of the intervals
        in $\I[B_k]$ inside $[p,r_A]$ in the same order as in \I, where $p$ is the right-most
        endpoint of the intervals $\{I_v \mid v \in \bigcup_{i=1}^k B_i\}$ contained in~$I_A$.
        Using the same arguments as above, we again get that $\I'$ is a $\varphi$-oriented interval
        representation of $G$ that agrees with $T$.

        Now consider the case that $x$ is a Q-node.
        Similar to the P-node case, we consider an interval that spans the set of intervals we want to be contained.
        Since $W_x$ may contain vertices $v$ with $S_v \neq S_{1,k}$, we cannot just form the intersection of all intervals in $W_x \cup A_x^T$, as we did in the P-node case.
        Instead, let $U = \{u \mid u \in W_x \text{ and } S_u = S_{1,k}\}$ be the vertices of $G$ that are assigned to $x$ and adjacent to every vertex in $B_x^T$ and $W_x$. 
        Further, let $I_A = \bigcap_{a \in U \cup A_x^T} I_a$ be the intersection of all intervals in $\I[U \cup A_x^T]$, an interval representation induced by vertices that are all adjacent to every vertex in $B_x^T \cup (W_x \setminus U)$.
        Similar to the P-node case, we identify a set of vertices whose intervals in \I are properly contained in $I_A$.
        For a Q-node, this set consists of $B_2 \cup \dots \cup B_{k-2}$ as well as a subset of $L_2 \cup \dots \cup L_{k-1}$: a vertex $v$ can be contained in both $L_1$ and $L_2$ and its interval $I_v$ might overlap $I_A$.  We therefore define~$B_i' = (B_i \cup L_i) \setminus U$ and consider the set $B = \bigcup_{i=2}^{k-1} B_i' \setminus (B_1 \cup B_k \cup L_1 \cup L_k)$. 
        Since the vertices in $U \cup A_x^T$ are all adjacent to every vertex in $B_x^T \cup (W_x \setminus U)$ and we removed all vertices whose intervals might overlap $I_A$, $I_A$ spans $\I[B]$.
        If $I_A$ also spans $\I[B_1' \cup B_k']$, then we are done.
        Otherwise, assume $x$ is not constrained and that $\I[B_1']$ overlaps $I_A$.
        We obtain $\I'$ analogous to the P-node case.
        That is, we place the left endpoints $l_1, \dots, l_{|B_1'|}$ of intervals in $\I[B_1']$  in the same order as in \I inside the interval $[l_A, p]$ where $p$ is the left-most endpoint of the intervals $\{I_v \mid v \in \bigcup_{i=1}^k B_i'\}$ that is contained in $I_A$.
        We show, similar to the P-node case, that $\I'$ is a $\varphi$-oriented interval representation of $G$ that agrees with $T$ such that if $I_i \subset I_j$ in \I, then also $I_i \subset I_j$ in $\I'$.
        By the same argument as above, namely, the maximality of $p$, the intersections between intervals in $\I[B_1']$ and in $\I[B_i']$ for
        $i \in \{1,\dots,k\}$ as well as containments of intervals in \I are preserved in $\I'$.
        We show that intervals in $\I[B_1']$ that overlap $I_A$ in \I are correctly
        represented in $\I'$.
        Let $I_b \in \I[B_1']$ be an interval that overlaps some interval $I_a \subseteq I_A$.
        Since $x$ is not constrained, there exists no (V1) arc between $B_x^T$ and $A_x^T$, no (V2) arc between $L_1 \setminus U$ and $A_x^T$, no (V3) arc between $B_x^T$ and $W_x$ and no (H) arc between $L_1 \setminus U$ and $U$, as that would be a constraining horizontal edge.
        Therefore, there are no arcs between $I_A$ and $\I[B_1']$.
        Hence, $\{a,b\}$ is an edge and is represented correctly by $I_b \subset I_a$ in $\I'$.
        This holds for any $b \in B_1'$ and $a \in U \cup A_x^T$, therefore, the adjacencies between $B_1'$ and $A_x^T$ are correctly represented by containment in $\I'$.

        If $\I[B_k']$ overlaps~$I_A$, the proof works analogously.
        Namely, we obtain the interval representation $\I'$ by placing the right endpoints of the intervals
        in $\I[B_k']$ inside $[p,r_A]$ in the same order as in \I, where $p$ is the right-most
        endpoint of the intervals $\{I_v \mid v \in \bigcup_{i=1}^k B_i'\}$ contained in~$I_A$.
        Using the same arguments as above, we again get that $\I'$ is a $\varphi$-oriented interval
        representation of $G$ that agrees with $T$.
    \end{proof}

    By repeatedly applying \cref{lem:unconstrained-contained} to nodes of an MPQ-tree with constrained children, we obtain a representation where all nodes satisfy the properties expressed in \cref{lem:unconstrained-contained}.  We call such a representation \emph{normalized}.

    \begin{corollary}
        \label{cor:constr-cont}
        Let $T$ be a $\varphi$-orientable MPQ-tree of $G$ and let \I be a $\varphi$-orientable interval
        representation of $G$ that agrees with $T$.
        Then there exists~$\varphi$-orientable interval representation~$\I'$ of $G$ that agrees with~$T$ and that is normalized.
    \end{corollary}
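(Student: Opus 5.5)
The plan is to obtain the normalized representation by repeatedly applying \cref{lem:unconstrained-contained}, once to every node $x$ of $T$, and to show that the property achieved at one node is never destroyed by later applications. Start with $\I_0 = \I$. Fix an enumeration $x_1,\dots,x_m$ of the nodes of $T$, say top-down. For $j = 1,\dots,m$, apply \cref{lem:unconstrained-contained} to $\I_{j-1}$ and node $x_j$ to obtain $\I_j$. Each $\I_j$ is a $\varphi$-oriented interval representation of $G$ that agrees with $T$. This is needed so that the lemma can be reapplied, and the construction in its proof, which only moves endpoints inside an interval containing no other endpoints, guarantees it. Moreover, each $\I_j$ preserves all containments of $\I_{j-1}$.

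The key observation is monotonicity. The normalization condition at a node $x$ only asks that certain specified edges (between $B_i$ and $W_x\cup A_x^T$ for non-constraining children of a P-node, and the three listed families for an unconstrained Q-node) be represented by containment. Whether $x$ is constrained, and which children are constraining, depends only on $G$, $\varphi$ and the tree $T$, not on the representation. The relevant sets $B_i$, $W_x$, $A_x^T$, $S_u$ are the same for all representations agreeing with $T$. Since every step preserves existing containments, once all required edges at $x_j$ are containments in $\I_j$, they remain containments in every $\I_{j'}$ with $j'\ge j$. Hence $\I' = \I_m$ satisfies the property at every node simultaneously, i.e., it is normalized.

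The only delicate point is to verify that \cref{lem:unconstrained-contained} indeed yields a $\varphi$-orientable representation agreeing with $T$, and not merely an interval representation. Its statement says only ``interval representation'', but its proof argues $\varphi$-orientedness and agreement with $T$ explicitly, so I take that as part of the lemma. It is also needed that the construction never turns a containment into an overlap. This holds because endpoints are only shifted within a gap free of other endpoints, or into $I_A$ past no foreign endpoint, so containment relations among previously nested intervals are untouched. Termination is trivial, since there are finitely many nodes and each is handled once.
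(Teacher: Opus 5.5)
Your proposal is correct and takes the same approach as the paper, whose entire argument is the remark that repeatedly applying \cref{lem:unconstrained-contained} to the nodes of $T$ gives a representation with the required property at every node. What you add is the justification the paper leaves implicit. The property at each node only demands that certain edges be containments, and it is determined by $G$, $\varphi$, and $T$ alone. The lemma preserves containments, and by its proof also $\varphi$-orientability and agreement with $T$. So the properties accumulate monotonically, whatever order the nodes are processed in.
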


    Note that while this allows to require some of the edges of $G$ to be inclusion edges, non-arcs to vertices below a constrained child are not covered. Missing a viable full set of inclusion edges, we can thus not just apply \cref{thm:recognize-econt}. Instead we use \cref{cor:constr-cont} for the following lemma. 

    A clique order~$\sigma$ of~$G$ is~\emph{$\varphi$-consistent} if it is the clique order defined by an MPQ-tree $T$ of $G$ that has the properties stated in~\Cref{lem:constrained-child}.

    \whenappendix{\lemcosistentorder*\label{lem:consistent-order*}}

    \begin{proof}
        Recall that a clique order $\sigma$ of $G$ corresponds to the leaf order of an MPQ-tree~$T$ of~$G$.
        Assume $G$ admits a $\varphi$-orientable interval representation \I and let $\sigma$ be a clique order of $G$.
        We show that $G$ admits a $\varphi$-orientable interval representation with clique order $\sigma$ if and only if $\sigma$ is~$\varphi$-consistent, i.e., the corresponding MPQ-tree $T$ of $G$ satisfies the constraints of \cref{lem:constrained-child}.
        The necessity follows from \cref{lem:constrained-child}.
        For the converse, let $T$ be an MPQ-tree of $G$ that satisfies the constraints.
        Let $T'$ be the rotation of the MPQ-tree of $G$ that agrees with \I.
        We choose \I such that $T'$ is as similar to $T$ as possible, meaning that the number of nodes for which the order of their children is the same in both trees is maximized.
        If $T = T'$ we are done.
        Otherwise, let $x$ be the lowest node in~$T$ for which $\pi_x^T \neq \pi_x^{T'}$.
        Let $T''$ be the MPQ-tree we obtain from $T'$ by changing $\pi_x^{T'}$ to $\pi_x^T$.
        Observe that for each node $x$ the set of nodes assigned to~$x$ as well as the sets~$A_x^T$ and~$B_x^T$ are the same for all three trees~$T,T',T''$.
        We claim that $T''$ agrees with a $\varphi$-oriented interval representation $\I'$ of $G$, which is a contradiction to the choice of $T'$, and hence $T = T'$ follows.

        Let $\I'$ be the normalized interval representation of $G$ we get from~\Cref{cor:constr-cont}.
        First, consider the case where $x$ is a P-node with children $y_1,\dots, y_k$.
        \Cref{lem:constrained-child} gives us that $\pi_x^T$ and $\pi_x^{T''}$ agree on the positions of the constraining children.
        In other words, $\pi_x^T$ and $\pi_x^{T''}$ differ only with respect to the order of the non-constraining children of $x$.
        Observe that $\pi_x^{T''}$ can be obtained from $\pi_x^T$ by repeatedly switching the order of two consecutive non-constraining children of $x$.
        In order to show that there exists a $\varphi$-oriented interval representation that agrees with $T''$, it therefore suffices to show the following.
        \begin{claim*}
            Let $T^\star$ be the rotation of $T'$ we get by switching the order of two consecutive non-constraining children $y_i, y_{i+1}$ of $x$ in $T'$.
            There exists a $\varphi$-orientable interval representation $\I'$ that agrees with $T^\star$.
        \end{claim*}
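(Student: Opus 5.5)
We prove the claim by starting from the normalized representation $\I'$ of $G$ given by \cref{cor:constr-cont}, which agrees with $T'$. We then swap the blocks of $B_i$ and $B_{i+1}$ by two successive applications of \cref{lem:rev-rep}.

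First we establish the containment structure. Since $y_i$ and $y_{i+1}$ are non-constraining children of the P-node $x$, normalization says that every edge between $B_i \cup B_{i+1}$ and $W_x \cup A_x^{T'}$ is a containment edge in $\I'$. In each such edge the interval of the vertex in $B_i \cup B_{i+1}$ is the inner one, because vertices of $W_x \cup A_x^{T'}$ also appear in cliques outside $T_{y_i}$. By the MPQ-tree properties, vertices of $B_i \cup B_{i+1}$ are adjacent to no vertex outside $B_i \cup B_{i+1} \cup W_x \cup A_x^{T'}$. Moreover, $B_i$ and $B_{i+1}$ are mutually non-adjacent.

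Next we choose the spanning intervals. Let $I$ be the smallest interval spanning $\I'[B_i \cup B_{i+1}]$. Since $y_i$ and $y_{i+1}$ are consecutive in $\pi_x^{T'}$, the cliques of the subtrees $T_{y_i}$ and $T_{y_{i+1}}$ form one consecutive block of the clique order. Hence the only intervals with an endpoint inside $I$ are those of vertices in $B_i \cup B_{i+1}$ or of vertices disjoint from the rest (none).

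The key step is to check that $I$ is overlap-free. An interval $J$ overlapping $I$ would contain exactly one endpoint of $I$, and that endpoint is an endpoint of some $I_b$ with $b \in B_i \cup B_{i+1}$. Then $J$ intersects $I_b$, so $J$ belongs to $W_x \cup A_x^{T'}$. But then $J$ contains $I_b$, and in fact contains every interval of the block; this last point needs care, and I would argue it via the common intersection $I_A$ used in \cref{lem:unconstrained-contained}. This contradicts the assumption that $J$ overlaps $I$. The same argument gives $\cont_{\I'}(I) = B_i \cup B_{i+1}$.

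Applying \cref{lem:rev-rep} to $I$ yields $\I'' = \mirr_I(\I')$, which is $\rev_{B_i\cup B_{i+1}}(\varphi)$-orientable. In $\I''$ the two blocks appear in swapped order, but each block is individually mirrored and its orientations are flipped.

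To repair this, let $I_i$ and $I_{i+1}$ be the smallest intervals spanning $\I''[B_i]$ and $\I''[B_{i+1}]$. These are disjoint, since $B_i$ and $B_{i+1}$ are non-adjacent and each spanning interval is a union of connected pieces; a slightly careful argument is needed when $G[B_i]$ is disconnected. Each is overlap-free for the same reason as before, and $\cont_{\I''}(I_i) = B_i$, $\cont_{\I''}(I_{i+1}) = B_{i+1}$. Mirroring both by \cref{lem:rev-rep} restores the internal order and the orientations, giving a $\varphi$-orientable representation. Its clique order differs from that of $\I'$ exactly by exchanging the subtrees of $y_i$ and $y_{i+1}$, so it agrees with $T^\star$.

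The main obstacle is the overlap-freeness of $I$. Normalization controls only the edges to $W_x \cup A_x^{T'}$, so I must rule out vertices from other children or above $x$ whose endpoints fall strictly between the blocks. I would try first to show that the endpoints of $B_i \cup B_{i+1}$ are contiguous in the endpoint sequence, using the consecutive-clique structure together with normalization. If needed, I would perturb $\I'$ by sliding such foreign endpoints out of the block without changing the combinatorial type of any pair.
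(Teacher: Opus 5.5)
Your proposal is correct and follows the paper's own argument. Both start from the normalized representation of \cref{cor:constr-cont}, mirror the smallest interval spanning $\I'[B_i]\cup\I'[B_{i+1}]$ via \cref{lem:rev-rep}, and then mirror the two blocks separately to restore $\varphi$ and their internal order. The points you flag as needing care resolve directly, and no perturbation is needed: each $a\in W_x\cup A_x^{T'}$ is adjacent to all of $B_i\cup B_{i+1}$, so by normalization $I_a$ contains every such $I_b$ and hence their convex hull; and the clique order puts all intervals of $B_i$ to the left of all intervals of $B_{i+1}$, so $I_i$ and $I_{i+1}$ are disjoint whether or not $G[B_i]$ is connected.
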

        To prove the claim, observe that since~$\I'$ is normalized, for every non-constraining child $y_i$ of $x$, each interval of $\I'[B_i]$ is properly contained in every interval of $\I'[W_x \cup A_x^{T'}]$.
        Consider the induced representations $\I'[B_i]$ and $\I'[B_{i+1}]$ whose intervals are all
        properly contained in every interval of $\I'[W_x \cup A_x^{T'}]$.
        Let $I$ be the smallest interval that spans $\I'[B_i] \cup \I'[B_{i+1}]$.
        Observe that $I$ is properly contained in every interval of $\I'[W_x \cup A_x^{T'}]$ and every interval of $\I'[B_i] \cup \I'[B_{i+1}]$ is properly contained in $I$.
        We can therefore apply \cref{lem:rev-rep} twice---first to $\cont_{\I'}(I)$
        and then to $\cont_{\I''} (I_i)$ and $\cont_{\I''} (I_{i+1})$, respectively, where
        $\I'' = \mirr_I(\I')$---to obtain a $\varphi$-orientable interval
        representation~$\mathrm{mirr}_{I}(\I')$ from $\I'$ that agrees with $T^\star$, i.e., the
        order of $\I'[B_i]$ and $\I'[B_{i+1}]$ coincides with the order of $y_i$ and $y_{i+1}$ in $T^\star$; see \cref{fig:switch-p-children}.

        \begin{figure}
            \centering
            \textsfbf{(a)} \includegraphics[page=1,valign=t]{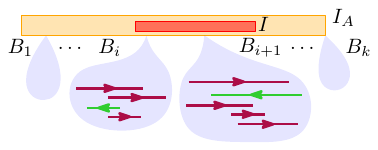}
             \hfil
            \textsfbf{(b)} \includegraphics[page=2,valign=t]{fig/switch-p-children}
            \caption{(a) Before and (b) after the swap of two consecutive unconstrained children using \cref{lem:rev-rep,lem:unconstrained-contained}.}
            \label{fig:switch-p-children}
        \end{figure}

        By repeatedly switching consecutive non-constraining children of $x$, we can therefore construct a $\varphi$-orientable interval representation that agrees with $T''$. This contradicts our choice of $T'$.

        Second, consider the case where $x$ is a Q-node.
        Since $\pi_x^T$ and $\pi_x^{T'}$ differ, it follows from \cref{lem:constrained-child} that $x$ is not constrained.
        Let $U \subseteq W_x$ be the set of vertices $u$ of $G$ that are assigned to $x$ and for which $S_u \neq S_{1,k}$.
        Observe that an interval representation of $G[U]$ is $\varphi$-orientable if and only if it is $\mathrm{rev}_U(\varphi)$-orientable:
        Since $x$ is not constrained, there are no horizontal (H) arcs between two vertices of $U$. 
        Hence, all adjacencies are realized either by overlapping intervals with opposite orientation or by containment.
        For both cases, it is easy to see that by reversing the orientation of the intervals, the edges remain correctly represented; see \cref{fig:q-node-rev-orient}.
        
        \begin{figure}
            \centering
            \textsfbf{(a)} \ \includegraphics[page=1,valign=t]{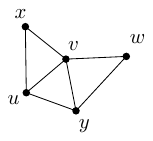} \hfil
            \textsfbf{(b)} \ \includegraphics[page=2,valign=t]{fig/q-node-rev-orient} \hfil
            \textsfbf{(c)} \ \includegraphics[page=3,valign=t]{fig/q-node-rev-orient}
            \caption{(a) A mixed graph $G$ containing only edges whose MPQ-tree is a single Q-node. (b)~and (c) show two orientations of the same interval representation of $U(G)$ where one is a reversal of the other.}
            \label{fig:q-node-rev-orient}
        \end{figure}

        In order to be able to apply~\cref{lem:rev-orient} to $B_x^T \cup U$, we argue that there are only (undirected) edges between $B_x^T \cup U$ and $A_x^T \cup (W_x \setminus U)$.
        Since $x$ is not constrained, there are no (V1) arcs between $B_x^T$ and $A_x^T$
        no (V2) arcs between $U$ and $A_x^T$, no (V3) arcs between $B_x^T$ and $W_x$ and no (H) arcs between $U$ and $W_x$.
        Therefore, since $\I'$ is normalized, the edges between $B_x^T \cup U$ and $A_x^T \cup W_x \setminus U$ are represented by containment.
        Hence, we apply \cref{lem:rev-orient} to obtain a $\mathrm{rev}_{\varphi}$-orientable interval representation $\I''$.
        As observed before, the interval representation induced by $W_x$ remains $\varphi$-orientable, therefore $\I''$ agrees with $T''$, a contradiction to our choice of $T'$.
    \end{proof}

    We remark that the MPQ-tree $T$ and the constrained children of each node can be computed in polynomial time.  Therefore, in polynomial time we can decide whether a mixed graph admits a $\varphi$-consistent clique order and compute one in the positive case.  In fact, the following lemma shows that this can be achieved in linear time.

    \begin{restatable}{lemma}{lemmpqalg}
    \label{lem:mpq-tree-alg}
    \restateref{lem:mpq-tree-alg}
        Let $G$ be a mixed graph and let~$\varphi$ be an orientation
        of $G$.  There is a linear-time algorithm that computes a clique order of $G$ that is~$\varphi$-consistent if $G$ admits a $\varphi$-consistent clique
        order.
    \end{restatable}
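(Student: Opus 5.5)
We would first build the MPQ-tree $T$ of $U(G)$ in linear time (Korte and Möhring~\cite{Korte1989}), together with the sets $W_x$, the segments $S_v$ of vertices assigned to Q-nodes, and the sets $L_i$. All of this is standard MPQ-tree bookkeeping of total size $O(|V|+|E|)$. Each node $x$ also gets its depth and a preorder numbering, so that ancestor tests and lowest-common-ancestor queries take constant time after linear preprocessing.

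Next we classify all arcs in a single pass. For an arc with endpoints $u,v$ assigned to $x_u,x_v$, we first decide whether it is horizontal or vertical by comparing $x_u$ and $x_v$. In the vertical case the upper endpoint is the one whose node is an ancestor of the other's node; in an interval graph adjacent vertices always lie on a common root-leaf path, so $x_u,x_v$ are comparable.
\begin{itemize}
  \item A vertical arc with lower endpoint $u$ can constrain many ancestors at once. Specifically, it constrains every P-node $x$ strictly between $x_v$ (inclusive) and $x_u$ (exclusive) through the child on the path to $x_u$, and it constrains the Q-nodes on that path via (V1), (V2), or (V3).
  \item To avoid quadratic work, for each node we record only the deepest upper endpoint $v$ whose arc reaches below it. We do this by sending every arc to its lower node $x_u$ with the depth of $x_v$ attached, and then propagating upward a ``minimum upper-depth'' value through a bottom-up traversal.
  \item A P-node $x$ at depth $d$ has child $y_i$ constraining iff some arc enters $B_i$ from depth $\le d$. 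This is exactly the test ``minimum upper-depth in subtree of $y_i$ is $\le d$'' (with care for arcs whose lower endpoint is in $W_{y_i}$).
  \item The Q-node conditions (V1), (V3) have the same form. Condition (V2) and condition (H) are checked locally, arc by arc, at $x_u$, using the stored segments.
\end{itemize}
Alongside each constraint we record the direction it forces, computed from $\varphi$ and the arc direction exactly as in the proof of \cref{lem:constrained-child}. A constraining child of a P-node is forced first or last. A constrained Q-node is forced to one of its two orientations.

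Then we construct the rotation node by node. At a P-node we place the constraining children forced first and last at the ends. If two different children are forced to the same end, or one child is forced to both ends, there is no $\varphi$-consistent order, because \cref{lem:constrained-child} shows these positions are fixed. The remaining children go in any order. At a Q-node we take the orientation forced by the first constraining arc and check every other constraining arc against it in constant time each; any conflict means we reject. Finally, a left-to-right leaf traversal outputs $\sigma$.

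Correctness follows because the resulting rotation satisfies (a) and (b) by construction, so $\sigma$ is $\varphi$-consistent. Conversely, any $\varphi$-orientable rotation must agree with all the recorded forced positions, so our rejection is justified. The main obstacle is the linear-time detection of constraining children and of Q-node constraints from vertical arcs spanning many levels. I would handle it with the bottom-up propagation of minimum upper-endpoint depth described above. Forced directions are determined using the extreme witness arc (the deepest-reaching one), plus a single consistency check of every arc at its lowest relevant node. Each arc is charged $O(1)$ there, and the min-propagation is charged $O(1)$ per tree link.
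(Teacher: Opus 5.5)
Your core approach is correct, and its key step differs from the paper's. The paper processes $T$ bottom-up with intrusive doubly-linked lists $\mathcal{A}(x)$. Such a list holds all arcs with one endpoint in $B_x^T\cup W_x$ and the other in $A_x^T$. It is formed by concatenating the children's lists, deleting arcs whose upper endpoint lies in $W_x$, and inserting arcs whose lower endpoint lies in $W_x$. A child $y_i$ of a P-node is constraining iff $\mathcal{A}(y_i)\neq\emptyset$, and the first element of that list is the witness. For a Q-node, any arc in some $\mathcal{A}(y_i)$ is automatically a (V1) or (V3) arc, while (V2) and (H) are found by scanning the arcs at $W_x$. You replace the lists by one value per node: the minimum depth of an upper endpoint over arcs with lower endpoint in $W_y\cup B_y^T$, together with an arc attaining it. The test $m(y_i)\le \mathrm{depth}(x)$ is exactly non-emptiness of $\mathcal{A}(y_i)$. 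The attaining arc is a valid witness at every ancestor where the test succeeds, because endpoints of an arc lie on a common root--leaf path. This is simpler (no list surgery, $O(1)$ per tree link and per arc) and gives the same bound. One wording slip: you need the \emph{shallowest} upper endpoint, not the ``deepest'' one your second bullet mentions.

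The part that does not hold is your claim to detect every conflict within linear time. A single vertical arc constrains every node on the tree path between its endpoints. So ``checking every other constraining arc'' at each Q-node, or noticing that one P-node child is forced to both ends, costs the number of (arc, node) pairs, which can be superlinear. Checking each arc only at its lowest relevant node misses conflicts higher up. The fix, which is how the paper argues, is that no such detection is needed. Suppose a $\varphi$-consistent order exists. Then, by the argument of \cref{lem:constrained-child}, all constraining arcs at a node force the same position, namely the one that order has there. Hence any single witness recovers it, and your output is $\varphi$-consistent. If the constraints conflict, no $\varphi$-consistent order exists and any output is acceptable; the paper explicitly lets a child forced both first and last go unnoticed. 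So drop the ``any conflict means we reject'' claims, keeping only cheap local ones such as two children forced to the same end. Then phrase correctness conditionally on the existence of a $\varphi$-consistent order, rather than ``by construction''.
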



     \begin{proof}
        First, we compute the MPQ-tree, which can be done in
        time linear in~$|V(G)|+|E(G)|$~\cite{Korte1989}.   
        To satisfy the properties of \cref{lem:constrained-child}, we need to identify the constrained nodes~$x$ in $T$.  To do this, we process~$T$ bottom-up.  During the processing, for each node~$x$, we make use of a doubly-linked list~$\arcs(x)$ that shall contain all arcs that have one endpoint in $B_x^T \cup W_x$ and the other in $A_x^T$. 
        We assume that, when processing a node~$x$ with children~$y_1,\dots,y_k$, we receive the doubly-linked lists $\arcs(y_1), \dots, \arcs(y_k)$ as input.  Using this information, we then decide whether $x$ is constrained or not and the list~$\arcs(x)$.  We remark that the lists~$\arcs(y_1),\dots,\arcs(y_k)$ are destroyed during this computation.  This is not an issue as only $\arcs(x)$ is needed subsequently.  Since an arc is relevant for two lists~$\arcs(x)$ and~$\arcs(x')$ if and only if one is an ancestor of the other and we process the tree in a bottom-up manner, each arc is contained in only a single list at any given time. We can thus implement these lists intrusively, i.e., the arc directly stores the pointers used to implement the doubly-linked list it is contained in.  This allows to remove an arc from the list it is contained in $O(1)$ time.
        
        We assume that each vertex of $G$ knows the node in $T$ to which it is assigned and vice versa.  Moreover, if $x$ is a Q-node, each vertex $v$ of~$W_x$ also stores the left end right endpoints of its segment~$S_v = S_{i,j}$.  As a preprocessing step, we label each node of $T$ according to a DFS ordering of $T$ (including timestamps of the first and last visit for each vertex) starting at the root.  Observe that this allows us to test in constant time whether an arc is horizontal or vertical and in the latter case to determine its upper and lower endpoint.

        Initially, we compute the list $\arcs(x)$ for each leaf $x$ of $T$ by starting with an empty list and adding for each vertex $v$ in $W_x$, all vertical arcs incident to $v$.  Thus, processing a leaf takes~$O(\sum_{v \in W_x} \deg(v))$ time.
        
        Let $x$ be an inner node of $T$ with children $y_1,\dots, y_{k}$. Assume that we have $\arcs(y_1), \dots, \arcs(y_k)$.
        We show how to determine whether $x$ is constrained, and consequently the order $\pi_x^T$ of the children of $x$ in $T$, as well as how to compute the list $\arcs(x)$.        
        
        We first consider the case that $x$ is a P-node.  In this case, each constraining child must be either the first or the last child of~$x$, depending on the direction of the arc defining the constraint and the orientations of its endpoints.  We thus compute two lists $F$ and~$L$ of children that must be the first and last children, respectively. 
        The only constraints can come from vertical arcs with the lower endpoint in some~$B_i$ and the upper endpoint in~$W_x \cup A_x^T$.  Such an arc must then be contained~$\arcs(y_i)$.  Conversely, any arc in~$\arcs(y_i)$ by definition imposes such a constraint.
        Thus for every $i \in \{1,\dots,k\}$ we check if $\arcs(y_i)$ is empty.
        If not, we take the first arc in~$\arcs(y_i)$ and, depending on the orientation of its endpoints and the direction of the arc, we add $i$ to the respective list $F$ or $L$, to signify that it needs to be the first or the last child of~$x$, respectively.  Note that we do not access any elements from~$\arcs(y_i)$ beyond the first one.  After processing all children in this way, we proceed as follows.
        
        If~$|F| \ge 2$ or~$|L| \ge 2$, there is no order $\pi_x^T$ such that there exists a $\varphi$-orientable interval representation that agrees with $T$ and we can output an arbitrary clique ordering of $G$.
        If $|F| \leq 1$ and $|L| \leq 1$, we reorder the children of~$x$ accordingly. Recall that we only ever access the first element of~$\arcs(y_i)$, if any. Therefore, it is still possible that the order~$\pi_x^T$ we determined does not satisfy the constraints of $x$, since there may be a child~$y_i$, whose list~$\arcs(y_i)$ contains both an arc that constrains~$y_i$ to be the first child and an arc that constrains it to be the last child.   Since we only consider the first arc in~$\arcs(y_i)$, the algorithm will not note this.  However, in this case, there is no order~$\pi_x^T$ such that there exists a $\varphi$-orientable interval representation that agrees with $T$ and the order we computed for~$x$ is as good as any other. 
       
        It remains to compute the list~$\arcs(x)$.  
        For that
        we first concatenate the lists~$\arcs(y_1),\dots,\arcs(y_k)$ and afterwards for each vertex~$v \in W_x$, we go over all its incident arcs and we remove all those for which~$v$ is the upper endpoint (from $\arcs(x)$) and we add all those for which~$v$ is the lower endpoint to~$\arcs(x)$.  The time to time to process~$x$ is therefore bounded by~$O(k + \sum_{v \in W_x} \deg(v))$.
      
        Finally, consider the case that $x$ is a Q-node.  In this case, we know that its order $\pi_x^T$ is uniquely determined as soon as we find an arc that renders $x$ constrained.  We can thus stop searching for such an arc immediately once we have found the first one.  Before we describe this in detail, observe that, in the end, we form~$\arcs(x)$ in the same way as for P-nodes.  In particular, all arcs that are contained in some~$\arcs(y_i)$ but whose upper endpoint is assigned to~$x$ will be removed in this step.

        To check whether $x$ is constrained we proceed as follows.  Recall that there are three ways in which vertical arcs can constrain a Q-node.   Each vertical arcs constraining~$x$ whose lower endpoint belong to~$B_x^T$ is necessarily contained in one of the lists~$\arcs(y_i)$.  We thus scan the lists~$\arcs(y_1),\dots,\arcs(y_k)$ one after the other.  Let~$y_i$ be the current child.  We process its list~$\arcs(y_i)$ one by one.  Let~$a$ be the current arc and let~$v$ denote its upper endpoint.
        If $v$ is not assigned to $x$, then it belongs to~$A_x^T$ and $a$ is a vertical (V1) arc, i.e.,~$y_i$ must be the first or last child of~$x$, which either fixes~$\pi_x^T$ or even shows that no suitable order of the children exists (if~$i \ne 1,k$).
        If~$v$ is assigned to~$x$, then $a$ is a (V3) arc, fixing the way in which the intervals representing the lower and upper endpoint have to overlap in any $\varphi$-orientable interval representation, so again, the orientation of~$x$ is fixed.
        As soon as we find an arc that constrains $x$ we orient the Q-node~$x$ accordingly, stop the processing of the lists and move on to the computation of~$\arcs(x)$.  Note that any arc we process in this way is subsequently removed when forming~$\arcs(x)$, so the amortized processing cost is~$O(1)$ per list and thus~$O(k)$ in total.
        To determine the vertical (V2) arcs that constrain~$x$, we traverse~$W_x$ and for each arc~$a$ incident to a vertex in~$W_x$, we check whether it satisfies the conditions of the (V2) arc constraint.  This can be done in constant time, since vertices assigned to Q-nodes store their segment.

        It remains to check for constrained children stemming from horizontal arcs.  To this end, we go through all arcs incident to vertices in~$W_x$ and for each such arc that is horizontal, we check whether the segments of its endpoints are not the same.  If they are not, we order the Q-node~$x$ correspondingly.  Observe that, in this way, if there is any arc constraining $x$, then we identify one.  This takes~$O(k + \sum_{v \in W_x} \deg(v))$ time.  Afterwards, we form the list~$\arcs(x)$ as described above in the same running time.  Therefore, the total  time for processing $x$ is in~$O(k + \sum_{v \in W_x} \deg(v))$.

        Eventually, after processing the whole tree in this way, we have reordered the children of each node such that the resulting clique order~$\sigma$ is $\varphi$-consistent if such an ordering exists.  It remains to bound the running time.

        As noted before, computing $T$ takes time $O(|V(G)|+|E(G)|)$ and the size of~$T$ is linear in~$|V(G)|$.  Moreover, the initial DFS visit can be performed in~$O(|V(T)|)$ time.  Recall that the processing time of each node~$x$ is in~$O(c_x + \sum_{v \in W_x} \deg(v))$, where~$c_x$ is the number of children of~$x$.  Note that~$\sum_{x \in T} c_x \le n$ and, since each vertex $v \in V(G)$ is assigned to exactly one node~$x$ of $T$, we have $\sum_{x \in T} \sum_{v \in W_x} \deg(v) = \sum_{v \in V(G)} \deg(v) = 2 \cdot (|\Arcs(G)| + |E(G)|)$.  Therefore, the total running time is in $O(|V(G)|+|E(G)|)$.
%
%

    \end{proof}

This result shows that we can construct an MPQ-tree that gives us a clique ordering for a $\varphi$-orientable interval representation of $G$ if and only if $G$ is a oriented interval graph.
    It remains to
    construct a $\varphi$-oriented interval representation from a given clique ordering.

 Recall the definition of augmented graph $G^+$ from \cref{subsec:constructing-phi} and let $C = \{c_1,\dots, c_k\}$ be the set of clique point vertices.
    Let $V^L = \{v \in V(G) \mid \varphi(v) = -1\}$,
    let $G^L = G^+[V^L \cup C]$, and let
    $\sigma^L = C_1^L,\dots, C_k^L$, where, for $i \in \{1,\dots,k\}$,
    $C_i^L = C_i^+ \cap V(G^L)$; see
    \cref{fig:L-R-split-overview-c}.  Define $V^R$, $G^R$,
    $C_1^R,\dots,C_k^R$, and $\sigma^R$ analogously; see
    \cref{fig:L-R-split-overview-e}.
    Let $\varphi^L$ and $\varphi^R$ denote the orientations
    of~$G^L$ and $G^R$ that assign to each vertex the
    orientation~$-1$ and~$+1$, respectively.
    We show the following statement.

    \whenappendix{\lemlrsplit*\label{lem:L-R-split*}}


    \begin{figure}[tb]
        \centering
        \begin{subcaptiongroup}
            
        \textsfbf{(a)} \quad \includegraphics[page=2,valign=t]{fig/L-R-split.pdf}
        \phantomcaption\label{fig:L-R-split-overview-a}\hfil
        \textsfbf{(b)} \quad \includegraphics[page=5,valign=t]{fig/L-R-split.pdf}
        \phantomcaption\label{fig:L-R-split-overview-b}

\smallskip
{\color{lightgray}\hrule}
\smallskip
        
        \textsfbf{(c)} \quad \includegraphics[page=3,valign=t]{fig/L-R-split.pdf}
        \phantomcaption\label{fig:L-R-split-overview-c}\hfil
        \textsfbf{(d)} \quad \includegraphics[page=6,valign=t]{fig/L-R-split.pdf}
        \phantomcaption\label{fig:L-R-split-overview-d}
        
\smallskip
{\color{lightgray}\hrule}
\smallskip

        \textsfbf{(e)} \quad \includegraphics[page=4,valign=t]{fig/L-R-split.pdf}
        \phantomcaption\label{fig:L-R-split-overview-e}\hfil
        \textsfbf{(f)} \quad \includegraphics[page=7,valign=t]{fig/L-R-split.pdf}
        \phantomcaption\label{fig:L-R-split-overview-f}
        \end{subcaptiongroup}
      
        \caption{Example illustrating the proof of \cref{lem:L-R-split}.}
        \label{fig:L-R-split-overview}
    \end{figure}
      
    \begin{proof}
      ``$\Rightarrow$'': We take the $\varphi$-orientable interval
      representation~\I of~$G$ with clique order~$\sigma$ and split it
      into two representations, a representation~$\I^L$ for the
      left-oriented intervals and a representation~$\I^R$ for the
      right-oriented intervals. To both of these, we add, for each
      clique points in~$C$, a very short interval that intersects all
      intervals that form the corresponding clique. The first
      resulting representation is $\varphi^L$-oriented and has clique
      order~$\sigma^L$, the second is $\varphi^R$-oriented and has
      clique order~$\sigma^R$, as desired.

      ``$\Leftarrow$'': Let~$\I^L$ be the $\varphi^L$-orientable
      interval representation of~$G^L-C$, and let~$\I^R$ be the
      $\varphi^R$-orientable interval representation~$\I^R$ of~$G^R-C$.
      We join~$\I^L$ and~$\I^R$ as follows.  Since $\sigma^L$ and
      $\sigma^R$ both have a clique for each clique point vertex, we
      can simply set, for each $i \in \{1,\dots,k\}$,
      $C_i = (C_i^L \cup C_i^R) \setminus \{c_i\}$ and
      $\sigma = C_1,\dots,C_k$.  It remains to determine, for each
      $i \in \{1,\dots,k\}$, the order of the interval endpoints within~$C_i$;
      see \cref{fig:L-R-split-cliquepoint}.
      \begin{figure}[tb]
          \centering
          \includegraphics[page=8]{fig/L-R-split.pdf}
          \caption{How to place the endpoints of left- and
            right-oriented intervals within a clique.}
          \label{fig:L-R-split-cliquepoint}
      \end{figure}
      To this end, we first
      place the left endpoints of the intervals in~$V(G)$ that start
      in~$C_i$ and extend to the right; then we place the right
      endpoints of the intervals in~$V(G)$ that start in~$C_i$ and
      extend to the left.  We order the left endpoints as follows: we
      start with the intervals in $V^L$ in the order in which they
      appear in~$\I^L$; then we add the intervals in $V^R$ in the
      order in which they appear in~$\I^L$.  We order the right
      endpoints accordingly. This fixes a set~\I of intervals
      that contains an interval for each vertex of~$G$;
      it can be computed in linear time by merging~$\I^L$ and~$\I^R$.

      Note that two intervals in~\I intersect if and only if there is
      at least one $i \in \{1,\dots,k\}$ such that both intervals contribute to
      clique~$C_i$. This ensures
      that all undirected edges of~$G$ are realized in~\I.  Every arc
      $(u,v)$ of $G$ is realized in~\I because either both~$u$ and~$v$ lie
      in $V^L \setminus C$ and $(u,v)$ is realized in~$\I^L$, or they
      both lie in $V^R \setminus C$ and $(u,v)$ is realized in~$\I^R$.
      Since the order of the endpoints of intervals in
      $V^L \setminus C$ (in $V^R \setminus C$) is the same in~$\I^L$
      (in~$\I^R$) and in $\I$, the arc $(u,v)$ is realized in~\I.
    \end{proof}

    We are now ready to prove the main result of this section.

    \thmgivenphigetrep*

    \begin{proof}
      We use \cref{lem:mpq-tree-alg} to compute, in linear time, a clique
      order~$\sigma$ such that $G$ admits a $\varphi$-orientable
      interval representation if and only if $G$ admits such a
      representation with clique order~$\sigma$. Then we split
      $(G,\sigma)$ into $(G^L,\sigma^L)$ and $(G^R,\sigma^R)$ in linear time.
      Since all vertices of~$G^L$ have the same orientation, the
      undirected edges of~$G^L$ are precisely the containment edges,
      and we can use \cref{cor:linear-time-given-sigma} to check
      in linear time whether $G^L$ admits a
      $\varphi^L$-orientable interval representation with clique
      order~$\sigma^L$.  Then we do the
      corresponding check with~$G^R$.  If both checks are successful,
      we use \cref{lem:L-R-split} to combine the
      resulting interval representations of~$G^L$ and~$G^R$ in linear time to a
      $\varphi$-oriented interval representation of~$G$.
    \end{proof}  
    
\end{toappendix}

\section{Concluding Remarks}
\label{sec:concluding}

In this work, we have initiated a systematic study of the interconnections between clique orders, orientations, and containment edges of oriented interval representations.
Given a mixed graph~$G$, we have characterized the orientations~$\varphi$ of~$G$ that, given a set~$\Econt \subseteq E(G)$, form a consistent pair~$(\varphi, \Econt)$ as well as the clique orders~$\sigma$ of $G$ that, given an orientation~$\varphi$ of~$G$, form a consistent pair~$(\varphi, \sigma)$.
We also presented linear-time algorithms for two constrained versions of the recognition problem, where, in addition to~$G$, either the set~$\Econt$ or the orientation~$\varphi$ is prescribed.
The problem of recognizing oriented interval graphs given a clique order~$\sigma$ of~$G$ seems to be a natural next step for future work.
The structural insights that we have gained in this paper could be a good starting point for tackling this problem, as well as the general recognition problem, both of which we leave open here.
We note, though, that even if a polynomial-time algorithm exists for the latter, it is not clear that it would also solve the constrained recognition problems that we have studied in this paper.

\bibliographystyle{plainurl}
\bibliography{references}

\end{document}